\def\squareforqed{\hbox{\rlap{$\sqcap$}$\sqcup$}}
\def\qed{\ifmmode\squareforqed\else{\unskip\nobreak\hfil
\penalty50\hskip1em\null\nobreak\hfil\squareforqed
\parfillskip=0pt\finalhyphendemerits=0\endgraf}\fi}
\def\endenv{\ifmmode\;\else{\unskip\nobreak\hfil
\penalty50\hskip1em\null\nobreak\hfil\;
\parfillskip=0pt\finalhyphendemerits=0\endgraf}\fi}
\newtheorem{theorem}{Theorem}
\newtheorem{definition}[theorem]{Definition}
\newtheorem{lemma}[theorem]{Lemma}
\newtheorem{proposition}[theorem]{Proposition}
\newtheorem{remark}[theorem]{Remark}
\newenvironment{proof}[1][Proof]{\noindent\textbf{Proof.} }{\hfill\qed}
\newenvironment{proofof}[1][Proof]{\noindent\textbf{Proof~#1.} }{\hfill\qed}
\newcommand{\nc}{\newcommand}
\nc{\rnc}{\renewcommand}
\nc{\beq}{\begin{equation}}
\nc{\eeq}{\end{equation}}
\nc{\bsp}{\begin{split}}
\nc{\esp}{\end{split}}
\nc{\beqa}{\begin{eqnarray}}
\nc{\eeqa}{\end{eqnarray}}
\nc{\lbar}[1]{\overline{#1}}
\nc{\ket}[1]{|#1\rangle}
\nc{\bra}[1]{\langle#1|}
\nc{\braket}[2]{\langle #1 | #2 \rangle}
\nc{\ketbra}[2]{|#1\rangle\!\langle#2|}
\nc{\proj}[1]{| #1\rangle\!\langle #1 |}
\nc{\avg}[1]{\langle#1\rangle}
\nc{\rank}{\operatorname{rank}}
\nc{\tr}{\operatorname{Tr}}
\nc{\ox}{\otimes}
\nc{\dg}{\dagger}
\nc{\dn}{\downarrow}
\nc{\supp}{{\operatorname{supp}}}
\nc{\Span}{{\operatorname{span}}}
\nc{\var}{\operatorname{var}}
\nc{\rar}{\rightarrow}
\nc{\lrar}{\longrightarrow}
\nc{\id}{{\operatorname{id}}}
\nc{\di}{\mathrm{d}}
\nc{\nb}{\nonumber}
\nc{\mc}{\mathcal}
\begin{document}

\title{Operational Interpretation of the Sandwiched R\'enyi Divergence \\ of Order 1/2 to 1 as Strong Converse Exponents}

\author{Ke Li}
    \email{carl.ke.lee@gmail.com}
    \affiliation{Institute for Advanced Study in Mathematics, Harbin Institute of Technology, Harbin
                 150001, China}
\author{Yongsheng Yao}
	\email{yongsh.yao@gmail.com}
    \affiliation{Institute for Advanced Study in Mathematics, Harbin Institute of Technology, Harbin
                 150001, China}	
    \affiliation{School of Mathematics, Harbin Institute of Technology, Harbin 150001, China}
	
\date{\today}

\begin{abstract}
We provide the sandwiched R\'enyi divergence of order $\alpha\in(\frac{1}{2},1)$,
as well as its induced quantum information quantities, with an operational
interpretation in the characterization of the exact strong converse exponents
of quantum tasks. Specifically, we consider (a) smoothing of the max-relative
entropy, (b) quantum privacy amplification, and (c) quantum information decoupling.
We solve the problem of determining the exact strong converse exponents for
these three tasks, with the performance being measured by the fidelity or
purified distance. The results are given in terms of the sandwiched R\'enyi
divergence of order $\alpha\in(\frac{1}{2},1)$, and its induced quantum R\'enyi
conditional entropy and quantum R\'enyi mutual information. This is the first
time to find the precise operational meaning for the sandwiched R\'enyi
divergence with R\'enyi parameter in the interval $\alpha\in(\frac{1}{2},1)$.
\end{abstract}

\maketitle
\tableofcontents

\section{Introduction}
R\'enyi's information divergence, defined for two probability densities, is a fundamental information quantity which has played importance roles in a diversity of fields, ranging from information theory, to probability theory, and to thermodynamics and statistical physics. Its quantum generalization, due to the noncommutativity nature of density matrices, can take infinitely many possible forms. To find which one is the correct quantum generalization is significant and nontrivial.

The sandwiched R\'enyi divergence is one of the proper quantum generalization of R\'enyi's information divergence. For two density matrices $\rho$ and $\sigma$, it is defined as~\cite{MDSFT2013on,WWY2014strong}
\beq
D_{\alpha}^*(\rho\|\sigma):=\frac{1}{\alpha-1} \log \tr\big(\sigma^{\frac{1-\alpha}{2\alpha}}\rho\sigma^{\frac{1-\alpha}{2\alpha}}\big)^\alpha,
\eeq
where $\alpha\in(0,1)\cup(1,\infty)$ is a real parameter. Since its discovery, several operational interpretations for this quantity have been found. On the one hand, with $\alpha\in(1,\infty)$, it characterizes the strong converse exponent for quantum hypothesis testing~\cite{MosonyiOgawa2015quantum, HayashiTomamichel2016correlation}, for classical communication over classical-quantum channels~\cite{MosonyiOgawa2017strong}, for classical data compression with quantum side information~\cite{CHDH2020non}, and for entanglement-assisted or quantum-feedback-assisted communication over quantum channels~\cite{GuptaWilde2015multiplicativity, CMW2016strong, LiYao2022strong}. On the other hand, with $\alpha\in(1,2)$ or $\alpha\in(1,\infty)$, the sandwiched R\'enyi divergence also characterizes the direct error exponent for the smoothing of the max-relative entropy~\cite{LYH2023tight, LiYao2021reliability}, for quantum privacy amplification~\cite{Hayashi2015precise,LYH2023tight}, for quantum information decoupling and state merging~\cite{LiYao2021reliability}, and for quantum channel simulation~\cite{LiYao2021reliable}.

The other proper quantum generalization of R\'enyi's information divergence that has found operational interpretations, is Petz's R{\'e}nyi divergence~\cite{Petz1986quasi}
\beq
D_{\alpha}(\rho\|\sigma):=\frac{1}{\alpha-1} \log \tr\big(\rho^\alpha\sigma^{1-\alpha}\big),
\eeq
for $\alpha\in(0,1)\cup(1,\infty)$. With $\alpha\in(0,1)$, Petz's R{\'e}nyi divergence characterizes the direct error exponent for quantum hypothesis testing in both the symmetric setting~\cite{AKCMBMA2007discriminating,NussbaumSzkola2009chernoff,Li2016discriminating} and the asymmetric setting~\cite{Nagaoka2006converse, Hayashi2007error,ANSV2008asymptotic}, and for classical data compression with quantum side information~\cite{CHDH2020non,Renes2022achievable}. It is also believed to provide the solution to the long-standing open problem of determining the reliability function of classical-quantum channels~\cite{BurnasherHolevo1998on,Winter1999coding,Holevo2000reliability,Hayashi2007error, Dalai2013lower,CHT2019quantum,Renes2022achievable, Cheng2022a}, with $\alpha\in(0,1)$ too.

The roles that the two quantum R{\'e}nyi divergences have played so far cause people to guess that the correct quantum generalization of R{\'e}nyi's information divergence may be
\beq\label{eq:RD-q-f}
\begin{cases}
D_{\alpha}^*(\rho\|\sigma) & \quad\text{when } \alpha\in(1,\infty), \\
D_{\alpha}(\rho\|\sigma)   & \quad\text{when } \alpha\in(0,1).
\end{cases}
\eeq
Indeed, this has been conjectured in the literature; see, e.g.~\cite{MosonyiOgawa2015quantum}.

In this paper, we find an operational interpretation to the sandwiched R\'enyi divergence of order $\alpha\in(\frac{1}{2},1)$. This includes an operational interpretation to $D_{\alpha}^*(\rho\|\sigma)$ itself, as well as to its induced information quantities, the sandwiched R\'enyi conditional entropy and the regularized sandwiched R\'enyi mutual information, all in the interval $\alpha\in(\frac{1}{2},1)$. These results are obtained by considering the strong converse exponents for several quantum information tasks. Employing the purified distance (or, equivalently, the fidelity) as the measure of the performance, we determine these strong converse exponents, which are given in terms of the above-mentioned sandwiched R\'enyi information quantities of order $\alpha\in(\frac{1}{2},1)$. Specifically, our results are as follows.

\begin{itemize}
  \item Smoothing of the max-relative entropy (Section~\ref{subsec:probres_dmax}). The smooth max-relative entropy~\cite{Datta2009min} serves as a basic tool in quantum information sciences. Defined for two states $\rho$ and $\sigma$, it can be regarded as a function of the smoothing quantity $\epsilon\in[0,1]$. Denote its inverse function as $\epsilon(\rho \| \sigma, \lambda)$. The strong converse exponent is the rate of exponential convergence of $\epsilon(\rho^{\ox n} \| \sigma^{\ox n}, nr)$ towards $1$, in the asymptotic limit $n\rar\infty$. We prove that the strong converse exponent is (Theorem~\ref{theorem:dmax})
      \beq\label{eq:intmre}
        \sup_{\frac{1}{2} < \alpha < 1}
        \frac{1-\alpha}{\alpha} \big\{ D^*_{\alpha}(\rho \| \sigma) -r \big\}.
      \eeq
  \item Quantum privacy amplification (Section~\ref{subsec:probres_pa}). Quantum privacy amplification is the art of extracting private randomness from a classical random variable $X$ that is correlated with a quantum system $E$~\cite{BBCM1995generalized,Renner2005security}. It has been a basic primitive in quantum cryptography. Let $\rho_{XE}$ be the joint state of $X$ and $E$ and consider the limiting case that asymptotically many i.i.d.\ copies of the sate $\rho_{XE}$ is available. When the randomness extraction rate $r$ is larger than the conditional entropy $H(X|E)_\rho$, the error goes to $1$ exponentially fast. We show that the best rate of this exponential convergence, namely, the strong converse exponent, is given by
      \beq\label{eq:intpa}
        \sup_{\frac{1}{2} < \alpha < 1} \frac{1-\alpha}{\alpha}\big\{r-H^*_{\alpha}(X|E)_\rho\big\},
      \eeq
      where $H^*_{\alpha}(X|E)_\rho$ is the sandwiched R\'enyi conditional entropy defined in Eq.~\eqref{eq:sRce} (Theorem~\ref{theorem:privacy}).
  \item Quantum information decoupling (Section~\ref{subsec:probres_dec}). Quantum information decoupling is the task of removing the information of a reference system ($R$) from the system under control ($A$), by performing operations on system $A$~\cite{HOW2005partial,HOW2007quantum,ADHW2009mother}. It has been a fundamental quantum information processing task with broad applications~\cite{BDHSW2014quantum,BCR2011the,DARDV2011thermodynamic,BrandaoHorodecki2013area}. Let the joint quantum state of $R$ and $A$ be $\rho_{RA}$. We consider a decoupling strategy via discarding a subsystem of $A$. In the asymptotic limit where many i.i.d.\ copies of $\rho_{RA}$ is available, perfect decoupling can be achieved only when the rate $r$ of qubits discarding is larger than one half of the quantum mutual information $I(R:A)_\rho$. For $r<\frac{1}{2}I(R:A)_\rho$, the error goes to $1$ exponentially fast, and we obtain the best rate of this exponential convergence, namely, the strong converse exponent. It is
      \beq\label{eq:intdec}
        \sup_{\frac{1}{2} < \alpha < 1}
        \frac{1-\alpha}{\alpha}\big\{I_{\alpha}^{*, \rm{reg}}(R:A)_\rho-2r\big\},
      \eeq
      where $I_{\alpha}^{*, \rm{reg}}(R:A)_\rho$ is the regularized version of the sandwiched R\'enyi mutual information, defined in Eq.~\eqref{eq:defrmi} (Theorem~\ref{theorem:maindec}).
\end{itemize}

Because the above results have extended the operational significance of the sandwiched R\'enyi divergence to the range $\alpha\in (\frac{1}{2},1)$, we conclude that the quantum generalization of R\'enyi's information divergence is more complicated than that was conjectured in Eq.~\eqref{eq:RD-q-f}. We point out that, the two intervals $(\frac{1}{2},1)$ and $(1,\infty)$ are related by the H\"older-type duality: if $\frac{1}{2\alpha}+\frac{1}{2\alpha'}=1$ and $\alpha'\in(1,\infty)$, then $\alpha\in(\frac{1}{2},1)$. The sandwiched R\'enyi divergence has nice properties when the order $\alpha$ lies in the range $(\frac{1}{2},1)\cup(1,\infty)$ (cf. Proposition~\ref{prop:mainpro}). However, while the operational interpretation to the sandwiched R\'enyi divergence of order $\alpha\in(1,\infty)$ has been found shortly after the discovery of this fundamental quantity, it had long been open prior to the present paper when $\alpha\in (\frac{1}{2},1)$. Our work has filled this gap.

To derive our results, we employ two different methods. For the problem of smoothing the max-relative entropy, we first prove the special case in which $\rho$ and $\sigma$ commute, using the method of types~\cite{CsiszarKorner2011information}, and then we reduce the general case to the commutative case by exploiting some new properties of the fidelity function. For the problems of quantum privacy amplification and quantum information decoupling, we prove the achievability parts and the optimality parts separately, using different ideas. The achievability parts are technically more involved. We employ an approach developed by Mosonyi and Ogawa~\cite{MosonyiOgawa2017strong}, to derive a weaker bound in terms of the log-Euclidean R\'enyi conditional entropy (mutual information) and then improve it to obtain the final bound. On the other hand, the optimality parts are accomplished by adapting the techniques developed by Leditzky, Wilde and Datta~\cite{LWD2016strong}. Especially, the inequality of Lemma~\ref{lem:LWDg}, which is implicitly proved in~\cite{LWD2016strong} via a delicate application of H\"older's inequality and later formulated in~\cite{WangWilde2019resource}, is crucial for our proof.

\emph{Prior works.}
Leditzky, Wilde and Datta~\cite{LWD2016strong} have proved strong converse theorems with lower bounds to the exponents in terms of the sandwiched R\'enyi entropies, for several quantum information tasks. As for quantum privacy amplification, they have obtained two lower bounds (Theorem 5.3 of~\cite{LWD2016strong}). One is expressed in terms of a difference of R\'enyi entropies. The other one, after reformulation, reads as
\[
  \sup_{\frac{1}{2} < \alpha < 1}
  \frac{1-\alpha}{2\alpha}\big\{r-H^*_{\alpha}(X|E)_\rho\big\}.
      \]
This bound is weaker by a factor of $\frac{1}{2}$ to Eq.~\eqref{eq:intpa}. As regards quantum information decoupling, three lower bounds are given in Theorem 3.8 of~\cite{LWD2016strong} for quantum state splitting, which translate naturally to the problem of quantum information decoupling. The last bound (Eq.~(3.41) in~\cite{LWD2016strong}) is similar to Eq.~\eqref{eq:intdec}, with the quantity $I_{\alpha}^{*, \rm{reg}}(R:A)_\rho$ in Eq.~\eqref{eq:intdec} replaced by $\bar{I}_{\alpha}^*(A:B)_\rho$ (cf.\ Eq.~\eqref{eq:defsrmit}). However, the error/performance function defined in~\cite{LWD2016strong} and that in the present paper are not the same. So they are not comparable. We point out that we are unable to prove the achievability of~\cite[Eq.~(3.41)]{LWD2016strong}, nor can we determine the strong converse exponent under the error/performance function defined in~\cite{LWD2016strong} (see Section~\ref{sec:discussion} for more discussions). Later on, Wang and Wilde~\cite{WangWilde2019resource}, using the method of~\cite{LWD2016strong}, have derived for the strong converse exponent of smoothing the max-relative entropy a lower bound (Eq.~(L14) in~\cite{WangWilde2019resource}), which is exactly the same as Eq.~\eqref{eq:intmre}. So, the optimality part of Theorem~\ref{theorem:dmax} is not new in the present paper. Our proof of Theorem~\ref{theorem:dmax} is not based on the result of~\cite{WangWilde2019resource}, though.

\section{Preliminaries}
  \label{sec:preliminaries}
\subsection{Basic notation}
Let $\mc{H}$ be a Hilbert space associated with a finite-dimensional quantum system. We denote the set of linear operators on $\mc{H}$ as $\mc{L}(\mc{H})$, and the positive semidefinite operators as $\mc{P}(\mc{H})$. We use the notations $\mc{S}(\mc{H})$ and $\mc{S}_{\leq}(\mc{H})$ to represent the set of quantum states and that of subnormalized quantum states, respectively. For $\rho \in \mc{S}(\mc{H})$, $\mc{S}_\rho(\mc{H})$ denotes the set of quantum states whose supports are contained in that of $\rho$, i.e.,
\beq
\mc{S}_\rho(\mc{H}):=\{\tau~|~\tau \in \mc{S}(\mc{H}), ~\supp(\tau) \subseteq \supp(\rho) \},
\eeq
where $\supp(\tau)$ denotes the supporting space of the operator $\tau$. We use $\1_{\mc{H}}$ and $\pi_{\mc{H}}$ to represent the identity operator and the maximally mixed state on $\mc{H}$, respectively. When $\mc{H}$ is associated with a system $A$, the above notations $\mc{L}(\mc{H})$, $\mc{P}(\mc{H})$, $\mc{S}(\mc{H})$, $\mc{S}_{\leq}(\mc{H})$, $\mc{S}_\rho(\mc{H})$, $\1_{\mc{H}}$ and $\pi_{\mc{H}}$ are also written as $\mc{L}(A)$, $\mc{P}(A)$, $\mc{S}(A)$, $\mc{S}_{\leq}(A)$, $\mc{S}_\rho(A)$, $\1_{A}$ and $\pi_A$, respectively. The dimension of system $A$ is denoted by $|A|$. A classical-quantum~(CQ) state is a bipartite state of the form $\rho_{XA}=\sum_{x \in \mc{X}} q(x)\proj{x}_X \ox \rho_A^x$, where $\rho_{A}^x\in\mc{S}(A)$, $\{\ket{x}\}$ is a set of orthonormal basis of $\mc{H}_X$ and $q$ is a probability distribution on $\mc{X}$. We use $\mc{Q}(\mc{X})$ to denote the set of all the probability distributions on $\mc{X}$.

The fidelity between two states $\rho,\sigma\in\mc{S}_{\leq}(\mc{H})$ is given by
\beq
F(\rho,\sigma):=\|\sqrt{\rho}\sqrt{\sigma}\|_1+ \sqrt{(1-\tr\rho)(1-\tr \sigma)},
\eeq
and the purified distance~\cite{GLN2005distance, TCR2009fully} for two states $\rho,\sigma\in\mc{S}_{\leq}(\mc{H})$ is defined as
\beq\label{eq:defP}
P(\rho, \sigma):=\sqrt{1-F^2(\rho, \sigma)}.
\eeq

A quantum channel or quantum operation $\mc{N}_{A \rar B}$ is a completely positive and trace-preserving~(CPTP) linear map from $\mc{L}(A)$ to $\mc{L}(B)$. The Stinespring dilation theorem~\cite{Stinespring1955positive} states that there exist an ancillary system $C$ in a pure state $\proj{0}_C$, a system $E$ and a unitary transformation $U_{AC \rar BE}$ such that
$\mc{N}_{A \rar B} (X)= \tr_{E} \big[U \big(X \ox \proj{0}_C\big) U^\dg \big]$, for $X\in\mc{L}(A)$. A quantum measurement is characterized by a set of positive semidefinite operators $\{M_x\}_x$ which satisfies $\sum_xM_x=\1$. When we make this measurement on a state $\rho$, we get the result $x$
with probability $\tr\rho M_x$.

Let $H$ be a self-adjoint operator on $\mc{H}$ with spectral projections $P_1, \ldots, P_r$. Then the pinching map associated with $H$ is defined as
\beq
\mc{E}_{H}:X\mapsto\sum_{i=1}^r P_i X P_i.
\eeq
The pinching inequality~\cite{Hayashi2002optimal} tells that for any $\sigma \in \mc{P}(\mc{H})$,
we have
\begin{equation}
\sigma \leq v(H) \mc{E}_{H}(\sigma),
\end{equation}
where $v(H)$ is the number of different eigenvalues of $H$.

Let $S_n$ be the symmetric group of the permutations of $n$ elements. The set of symmetric states on $A^n$ is defined as
\beq
\mc{S}_{\rm{sym}}(A^n):=\left\{\sigma_{A^n} \in \mc{S}(A^n)~|~
W^{\iota}_{A^n} \sigma_{A^n}W^{\iota \dg }_{A^n}=\sigma_{A^n}, \ \forall\ \iota \in S_n\right\}.
\eeq
where $W^{\iota}_{A^n}: \ket{\psi_1} \ox \ldots \ox \ket{\psi_n}
\mapsto  \ket{\psi_{\iota^{-1}(1)}} \ox \ldots \ox \ket{\psi_{\iota^{-1}(n)}}$ is the natural representation of $\iota\in S_n$. There exists a single symmetric state that dominates all the other symmetric states, in the sense of the following Lemma~\ref{lem:sym}. Two different constructions of this symmetric state can be found in~\cite{Hayashi2009universal} and~\cite{CKR2009postselection}, respectively. See~\cite[Lemma 1]{HayashiTomamichel2016correlation} and~\cite[Appendix A]{MosonyiOgawa2017strong} for detailed arguments.
\begin{lemma}
\label{lem:sym}
For a quantum system $A$ of finite dimension and any $n\in\mathbb{N}$, there exists a single symmetric state $\sigma_{A^n}^u \in \mc{S}_{\rm{sym}}(A^n)$ such that every symmetric state $\sigma_{A^n} \in \mc{S}_{\rm{sym}}(A^n)$ is dominated as
\begin{equation}
\sigma_{A^n} \leq v_{n,|A|}\sigma_{A^n}^u,
\end{equation}
where $v_{n,|A|} \leq (n+1)^{\frac{(|A|+2)(|A|-1)}{2}}$ is a polynomial of $n$. The number of distinct eigenvalues of $\sigma_{A^n}^u$ is upper bounded by $v_{n,|A|}$ as well.
\end{lemma}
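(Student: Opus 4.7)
The plan is to construct $\sigma_{A^n}^u$ explicitly via the Schur--Weyl decomposition of $\mc{H}_A^{\ox n}$ and then verify both claims of the lemma by block-diagonalising symmetric states. Recall that $\mc{H}_A^{\ox n}$ carries commuting representations of the general linear group $GL(\mc{H}_A)$ and of $S_n$, and decomposes as
\[
\mc{H}_A^{\ox n} \;\cong\; \bigoplus_{\lambda} \mc{U}_\lambda \ox \mc{V}_\lambda,
\]
where $\lambda$ ranges over Young diagrams with $n$ boxes and at most $|A|$ rows, $\mc{U}_\lambda$ is the associated irreducible $GL$-module, and $\mc{V}_\lambda$ is the associated irreducible $S_n$-module. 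Write $d_\lambda := \dim\mc{U}_\lambda$, $m_\lambda := \dim\mc{V}_\lambda$, and let $N$ denote the total number of admissible diagrams $\lambda$.

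First I would apply Schur's lemma to the $S_n$-action. Since every $\sigma_{A^n}\in\mc{S}_{\rm{sym}}(A^n)$ commutes with all the $W^{\iota}_{A^n}$, it must respect the isotypic decomposition and act as a multiple of the identity on each $\mc{V}_\lambda$ factor, yielding the block form
\[
\sigma_{A^n} \;=\; \bigoplus_{\lambda} p_\lambda \, \sigma_\lambda \ox \pi_{\mc{V}_\lambda},
\]
with $\{p_\lambda\}_\lambda$ a probability distribution and each $\sigma_\lambda\in\mc{S}(\mc{U}_\lambda)$. Next I would define the universal state as the uniform mixture over blocks,
\[
\sigma_{A^n}^u \;:=\; \frac{1}{N}\sum_{\lambda} \pi_{\mc{U}_\lambda} \ox \pi_{\mc{V}_\lambda},
\]
which is manifestly symmetric. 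Using $p_\lambda \sigma_\lambda \leq d_\lambda \pi_{\mc{U}_\lambda}$ block by block,
\[
\sigma_{A^n} \;\leq\; \sum_\lambda d_\lambda \, \pi_{\mc{U}_\lambda}\ox\pi_{\mc{V}_\lambda} \;\leq\; N\,\big(\max_\lambda d_\lambda\big)\,\sigma_{A^n}^u,
\]
which establishes dominance with $v_{n,|A|}:=N\,\max_\lambda d_\lambda$ independent of $\sigma_{A^n}$.

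It remains to bound $v_{n,|A|}$ by the stated polynomial and to count eigenvalues. A standard stars-and-bars estimate gives $N\leq(n+1)^{|A|-1}$, and Weyl's dimension formula $d_\lambda = \prod_{i<j}(\lambda_i-\lambda_j+j-i)/(j-i)$ together with $\lambda_i\leq n$ yields $\max_\lambda d_\lambda\leq(n+1)^{|A|(|A|-1)/2}$; multiplying reproduces the exponent $(|A|+2)(|A|-1)/2$. Finally, $\sigma_{A^n}^u$ is a convex combination of $N$ mutually orthogonal scalar multiples of projections, so it has at most $N\leq v_{n,|A|}$ distinct eigenvalues, giving the last assertion.

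The main conceptual hurdle is identifying the right ansatz: one must mix the \emph{maximally mixed} states on each isotypic block, rather than anything tied to a particular $\sigma_{A^n}$, so that a single $\sigma_{A^n}^u$ simultaneously dominates every symmetric state. Once this is in place, the remaining work is standard representation-theoretic bookkeeping, with the only mildly technical point being the tracking of polynomial factors from Weyl's formula and the stars-and-bars count.
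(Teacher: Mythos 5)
Your proposal is correct and is in essence the Schur--Weyl construction that the paper invokes by citation (the paper itself supplies no proof of Lemma~\ref{lem:sym}, instead pointing to Hayashi's universal coding construction, to the Christandl--K\"onig--Renner postselection state, and to the worked-out arguments in the Hayashi--Tomamichel and Mosonyi--Ogawa references). Your state $\sigma_{A^n}^u=\frac{1}{N}\sum_\lambda \pi_{\mc{U}_\lambda}\ox\pi_{\mc{V}_\lambda}$ is exactly the Hayashi-type universal symmetric state, your use of Schur's lemma to put every symmetric state in block form is the standard step, and your bookkeeping is right: each Weyl factor satisfies $1+\frac{\lambda_i-\lambda_j}{j-i}\leq n+1$ so $\max_\lambda d_\lambda\leq(n+1)^{|A|(|A|-1)/2}$, the number of partitions with at most $|A|$ parts and $n$ boxes is at most $(n+1)^{|A|-1}$, and $(|A|-1)+\tfrac{|A|(|A|-1)}{2}=\tfrac{(|A|+2)(|A|-1)}{2}$ recovers the stated exponent, while the eigenvalue count follows since $\sigma_{A^n}^u$ has one scalar eigenvalue per block. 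The only alternative route the paper mentions, the CKR postselection state built by integrating $\omega^{\ox n}$ over a measure on $\mc{S}(A)$, trades this explicit representation-theoretic accounting for a de Finetti--type argument; both give the same polynomial order, and yours is the more standard and more transparent of the two for the purposes here.
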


\subsection{Quantum R\'enyi divergences and entropies}
For $\rho \in \mc{S}(\mc{H})$ and $\sigma \in \mc{P}(\mc{H})$, the quantum relative
entropy of $\rho$ and $\sigma$ is defined as~\cite{Umegaki1954conditional}
\beq
D(\rho\|\sigma):= \begin{cases}
\tr(\rho(\log\rho-\log\sigma)) & \text{ if }\supp(\rho)\subseteq\supp(\sigma), \\
+\infty                        & \text{ otherwise.}
                  \end{cases}
\eeq
Let $\rho_{AB}$ be a bipartite quantum state, the conditional entropy and mutual information
of $\rho_{AB}$ are defined, respectively as
\begin{align}
H(A|B)_\rho:&= -\min_{\sigma_B \in \mc{S}(B)} D(\rho_{AB} \| \1_A \ox \sigma_B), \\
I(A:B)_\rho:&=\min_{\sigma_B \in \mc{S}(B)} D(\rho_{AB} \| \rho_A \ox \sigma_B)
=\min_{\sigma_A \in \mc{S}(A), \atop \sigma_B \in \mc{S}(B)} D(\rho_{AB} \| \sigma_A \ox \sigma_B).
\end{align}

The classical R\'enyi divergence is an important information quantity which has lots of applications~\cite{Renyi1961measures}. When it comes to the quantum setting, the situation
appears to be more complicated, because there can be infinitely many inequivalent quantum
generalizations of the R{\'e}nyi divergence. In this paper, we are interested in the
\emph{sandwiched R{\'e}nyi divergence}~\cite{MDSFT2013on, WWY2014strong} and the
\emph{log-Euclidean R{\'e}nyi divergence}~\cite{OgawaNagaoka2000strong, MosonyiOgawa2017strong}.
\begin{definition}
\label{definition:sand}
Let $\alpha\in(0,+\infty)\setminus\{1\}$, $\rho\in\mc{S}(\mc{H})$ and $\sigma\in\mc{P}(\mc{H})$.
When $\alpha >1$ and $\supp(\rho)\subseteq\supp(\sigma)$ or $\alpha\in (0,1)$ and $\supp(\rho)\not\perp\supp(\sigma)$, the sandwiched R{\'e}nyi divergence of order $\alpha $
is defined as
\beq
D_{\alpha}^*(\rho \| \sigma):=\frac{1}{\alpha-1} \log Q_{\alpha}^*(\rho \| \sigma),
\quad\text{with}\ \
Q_{\alpha}^*(\rho \| \sigma)=\tr {({\sigma}^{\frac{1-\alpha}{2\alpha}} \rho {\sigma}^{\frac{1-\alpha}{2\alpha}})}^\alpha;
\eeq
otherwise, we set $D_{\alpha}^*(\rho \| \sigma)=+\infty$. When $\alpha > 1$ and $\supp(\rho)\subseteq\supp(\sigma)$  or  $\alpha \in (0,1)$ and $\supp(\rho)\cap\supp(\sigma)\neq\{0\}$, the log-Euclidean R{\'e}nyi divergence of order
$\alpha $ is defined as
\beq
D_{\alpha}^{\flat}(\rho \| \sigma):=\frac{1}{\alpha-1}\log Q_{\alpha}^{\flat}(\rho \| \sigma),
\quad\text{with}\ \
Q_{\alpha}^{\flat}(\rho \| \sigma)=\tr 2^{\alpha \log \rho +(1-\alpha) \log \sigma};
\eeq
otherwise, we set $D_{\alpha}^{\flat}(\rho \| \sigma)=+\infty$. Here if $\rho$ or $\sigma$ is not of full rank, the expression for $Q_{\alpha}^{\flat}$ should be understood as $\lim\limits_{\epsilon\searrow 0}\tr 2^{\alpha\log(\rho+\epsilon\1)+(1-\alpha)\log(\sigma+\epsilon\1)}$.
\end{definition}

For some range of the R{\'e}nyi parameter $\alpha$, the log-Euclidean R{\'e}nyi divergence
has appeared already in~\cite{HiaiPetz1993the} and~\cite{OgawaNagaoka2000strong}.
However, its importance was not recognized until recently. In~\cite{MosonyiOgawa2017strong},
Mosonyi and Ogawa have employed it as an intermediate quantity to derive the strong
converse exponent for classical-quantum channels. Its name was given in~\cite{CHT2019quantum}.

In the way as the definition of the conditional entropy, we define the sandwiched
conditional R\'enyi entropy~\cite{MDSFT2013on} and the log-Euclidean conditional R\'enyi entropy~\cite{CHDH2020non} of order $\alpha \in (0,+\infty)\setminus\{1\}$ for a
bipartite state $\rho_{AB} \in \mc{S}(AB)$ as
\begin{align}\label{eq:sRce}
H_{\alpha}^{*}(A|B)_\rho &:=-\min_{\sigma_B\in\mc{S}(B)}D_{\alpha}^*(\rho_{AB}\|\1_A\ox\sigma_B), \\
H_{\alpha}^{\flat}(A|B)_\rho &:=-\min_{\sigma_B\in\mc{S}(B)}D_{\alpha}^{\flat}(\rho_{AB}\|\1_A\ox\sigma_B).
\end{align}
Similarly, the sandwiched R\'enyi mutual information~\cite{WWY2014strong, Beigi2013sandwiched, MckinlayTomamichel2020decomposition} and the log-Euclidean R\'enyi mutual information of
order $\alpha \in (0,+\infty)\setminus\{1\}$ for $\rho_{AB} \in \mc{S}(AB)$ are defined as
\begin{align}
I_{\alpha}^*(A:B)_\rho   &:=\min_{\sigma_A\in\mc{S}(A),\atop\sigma_B\in\mc{S}(B)}D_{\alpha}^*(\rho_{AB}\|\sigma_A\ox\sigma_B),
   \label{eq:srmi-de} \\
I_{\alpha}^{\flat}(A:B)_\rho
&:=\min_{\sigma_A\in\mc{S}(A),\atop\sigma_B\in\mc{S}(B)}D_{\alpha}^{\flat}(\rho_{AB}\|\sigma_A\ox\sigma_B),\\
\bar{I}_{\alpha}^*(A:B)_\rho
&:=\min_{\sigma_B\in\mc{S}(B)} D_{\alpha}^*(\rho_{AB} \| \rho_A \ox \sigma_B), \label{eq:defsrmit} \\
\bar{I}_{\alpha}^{\flat}(A:B)_\rho
&:=\min_{\sigma_B\in\mc{S}(B)} D_{\alpha}^{\flat}(\rho_{AB} \| \rho_A \ox \sigma_B).
\end{align}
We also define the regularized version of the sandwiched R\'enyi mutual information
for $\rho_{AB}$ as
\beq\label{eq:defrmi}
I_{\alpha}^{*, \rm{reg}}(A:B)_\rho
:= \lim_{n\rar\infty} \frac{1}{n}I_{\alpha}^*(A^n:B^n)_{\rho^{\ox n}}.
\eeq

\begin{remark}
Different from the quantum mutual information, the two types of the R\'enyi mutual
information defined above are not equal to each other in general. Namely, usually we
have $I_{\alpha}^*(A:B)_\rho\neq \bar{I}_{\alpha}^*(A:B)_\rho$ and
$I_{\alpha}^{\flat}(A:B)_\rho\neq \bar{I}_{\alpha}^{\flat}(A:B)_\rho$.
\end{remark}

The above definitions can be extended to the cases $\alpha=0$, $\alpha=1$ and
$\alpha=+\infty$, by taking the limits. In the following proposition, we collect
some properties of the R\'enyi information quantities.

\begin{proposition}
\label{prop:mainpro}
Let $\rho \in \mc{S}(\mc{H})$ and $\sigma \in \mc{P}(\mc{H})$. The sandwiched R{\'e}nyi
divergence and the log-Euclidean R{\'e}nyi divergence satisfy the following properties.
\begin{enumerate}[(i)]
  \item Monotonicity in R{\'e}nyi parameter~\cite{MDSFT2013on,Beigi2013sandwiched,
      MosonyiOgawa2017strong}: if $0\leq \alpha \leq \beta$, then
      $D_{\alpha}^{(t)}(\rho \| \sigma) \leq  D^{(t)}_{\beta}(\rho \| \sigma)$,
      for $(t)=*$ and $(t)=\flat$;
  \item Monotonicity in $\sigma$~\cite{MDSFT2013on,MosonyiOgawa2017strong}: if $\sigma' \geq \sigma$, then $D_{\alpha}^{(t)}(\rho \| \sigma') \leq D_{\alpha}^{(t)}(\rho \| \sigma)$,
      for $(t)=*$, $\alpha \in [\frac{1}{2},+\infty)$ and for $(t)=\flat$, $\alpha \in [0,+\infty)$;
  \item Variational representation~\cite{MosonyiOgawa2017strong}: the log-Euclidean
      R{\'e}nyi divergence has the following variational representation
      \beq
      D_{\alpha}^{\flat}(\rho \| \sigma)= \begin{cases}
         \min\limits_{\tau \in \mc{S}(\mc{H})} \big\{D(\tau \| \sigma)
         -\frac{\alpha}{\alpha-1}D(\tau \| \rho)\big\}, & \alpha \in (0,1), \\
         \max\limits_{\tau \in \mc{S}_\rho(\mc{H})} \big\{D(\tau \| \sigma)
         -\frac{\alpha}{\alpha-1}D(\tau \| \rho)\big\}, & \alpha \in (1,+\infty);
      \end{cases}
      \eeq
  \item Data processing inequality~\cite{FrankLieb2013monotonicity, Beigi2013sandwiched, MDSFT2013on, WWY2014strong, MosonyiOgawa2017strong}: letting $\mc{N}$ be a CPTP map from $\mc{L}(\mc{H})$ to $\mc{L}(\mc{H}')$, we have
      \beq
      D_{\alpha}^{(\rm{t})}(\mc{N}(\rho) \| \mc{N}(\sigma)) \leq D_{\alpha}^{(\rm{t})}(\rho \| \sigma),
      \eeq
      for $(t)=\flat$, $\alpha \in [0,1]$ and for $(t)=*$, $\alpha \in [\frac{1}{2},+\infty)$;
  \item Monotonicity under discarding classical information~\cite{LWD2016strong}: for the state $\rho_{XAB}$ that is classical on $X$ and for $\alpha\in(0,+\infty)$, we have
      \begin{equation}
      H_{\alpha}^*(XA|B)_\rho \geq H_{\alpha}^*(A|B)_\rho;
      \end{equation}
  \item Dimension bound for the sandwiched R\'enyi mutual information~\cite{LWD2016strong}: for a tripartite state $\rho_{ABC} \in \mc{S}(ABC)$ and for $\alpha \in [\frac{1}{2}, +\infty)$, we have
      \begin{equation}
        \bar{I}_{\alpha}^{*}(A:BC)_\rho \leq \bar{I}_{\alpha}^{*}(A:B)_\rho + 2\log|C|.
      \end{equation}
\end{enumerate}
\end{proposition}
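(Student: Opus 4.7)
The plan is that this proposition is essentially a compendium of established properties, so the proof will consist mostly of appeals to the cited literature together with short indications of the underlying ideas, rather than full reproofs. I would organize the presentation item by item.

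For (i), I would note that monotonicity of $D_\alpha^*$ and $D_\alpha^\flat$ in $\alpha$ is a standard consequence of the convexity of $\alpha\mapsto (\alpha-1)D_\alpha^{(t)}(\rho\|\sigma)$ (a cumulant-generating-function type property), combined with the fact that both quantities converge to $D(\rho\|\sigma)$ as $\alpha\to 1$. This yields nondecreasingness on $(0,1)\cup(1,\infty)$, and the endpoints $\alpha\in\{0,1,\infty\}$ follow by taking limits; references to~\cite{MDSFT2013on,Beigi2013sandwiched,MosonyiOgawa2017strong} handle the technicalities.

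For (ii), I would explain that the monotonicity in $\sigma$ reduces to operator (anti)monotonicity of the map $\sigma\mapsto \sigma^{(1-\alpha)/(2\alpha)}$ in the sandwiched case, and of $\sigma\mapsto \log\sigma$ in the log-Euclidean case. For the sandwiched divergence, the exponent $\frac{1-\alpha}{2\alpha}$ lies in $[-1,1]$ precisely when $\alpha\ge \frac{1}{2}$, so by L\"owner's theorem the power map is operator monotone (for $\alpha\in[\frac12,1]$, yielding $\sigma'^{(1-\alpha)/(2\alpha)}\ge\sigma^{(1-\alpha)/(2\alpha)}$) or operator antitone (for $\alpha\ge 1$); combined with the sign of $\frac{1}{\alpha-1}$ this produces the claimed monotonicity direction. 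The log-Euclidean case is immediate from the operator monotonicity of $\log$ together with the Golden--Thompson-type identity used to define $Q_\alpha^\flat$.

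For (iii), (iv), (v) and (vi), I would state them with direct references to the existing proofs: the variational formula (iii) is Mosonyi--Ogawa~\cite{MosonyiOgawa2017strong}; the data processing inequalities in (iv) follow from~\cite{FrankLieb2013monotonicity,Beigi2013sandwiched,MDSFT2013on,WWY2014strong} for the sandwiched case and from the variational formula (iii) combined with DPI for the Umegaki relative entropy for the log-Euclidean case; the monotonicity under discarding a classical register (v) is a direct consequence of the block-diagonal structure of $\rho_{XAB}$ together with the quasi-convexity/operator-convexity arguments in~\cite{LWD2016strong}; and the dimension bound (vi) is the telescoping estimate of~\cite{LWD2016strong}, where one bounds $\bar{I}_\alpha^*(A:BC)_\rho$ by replacing the optimizing $\sigma_{BC}$ with $\sigma_B\otimes \pi_C$, paying at most $\log|C|$ per factor of $\sigma$ that appears in the sandwich. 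The only mildly delicate point is to keep careful track of which range of $\alpha$ each argument requires; this is the main bookkeeping obstacle, and in particular in (ii) and (iv) the threshold $\alpha\ge\frac{1}{2}$ for the sandwiched quantities comes directly from the requirement that the relevant power of $\sigma$ be an operator (anti)monotone function.
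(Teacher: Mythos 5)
Proposition~\ref{prop:mainpro} has no proof in the paper; it is presented purely as a compendium of cited facts, and your citation-driven treatment of items (i), (iii), (iv), (v) and (vi) matches that presentation. The one place where you go beyond citation and actually sketch a mechanism is item (ii), and the sketch there contains a real gap. Operator monotonicity of $\sigma \mapsto \sigma^{(1-\alpha)/(2\alpha)}$ does not by itself yield monotonicity of $Q_{\alpha}^{*}(\rho\|\sigma) = \tr\big(\sigma^{(1-\alpha)/(2\alpha)}\rho\,\sigma^{(1-\alpha)/(2\alpha)}\big)^{\alpha}$ in $\sigma$, because for a fixed $\rho\ge 0$ the implication $A\ge B\ge 0 \Rightarrow A\rho A \ge B\rho B$ is false in general (already in dimension two, with $\rho$ a non-diagonal rank-one projector, $A=\1$, $B=\mathrm{diag}(1/2,1)$). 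So you cannot simply replace $\sigma^{(1-\alpha)/(2\alpha)}$ by the larger $\sigma'^{(1-\alpha)/(2\alpha)}$ inside the sandwich and appeal to operator monotonicity of the trace power. The standard fix is to first use cyclicity of the spectrum to rewrite $Q_{\alpha}^{*}(\rho\|\sigma) = \tr\big(\rho^{1/2}\sigma^{(1-\alpha)/\alpha}\rho^{1/2}\big)^{\alpha}$; monotonicity then follows by applying operator (anti)monotonicity of $x\mapsto x^{(1-\alpha)/\alpha}$, conjugating by the fixed operator $\rho^{1/2}$ (which preserves the operator order), and using Weyl's spectral dominance to get monotonicity of $\tr(\cdot)^{\alpha}$. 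The threshold $\alpha\ge\tfrac12$ is then precisely the condition $\big|\tfrac{1-\alpha}{\alpha}\big|\le 1$. Your statement that $\tfrac{1-\alpha}{2\alpha}\in[-1,1]$ ``precisely when $\alpha\ge\tfrac12$'' is also numerically off (that exponent lies in $[-1,1]$ already for $\alpha\ge\tfrac13$), which is a second sign that the relevant exponent to track is $\tfrac{1-\alpha}{\alpha}$, not $\tfrac{1-\alpha}{2\alpha}$.
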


\subsection{Method of types}
The method of types is a useful tool in information theory. Here we recall the relevant definitions and properties briefly. For an overall introduction, we refer the interested readers
to Reference~\cite{CsiszarKorner2011information}.

Let $\mc{X}$ be a finite alphabet set and $n \in \mathbb{N}$. For a sequence $x^n \in \mc{X}^{\times n}$, the type $t_{x^n}$ is the empirical distribution of $x^n$, i.e.,
\beq
t_{x^n}(a)=\sum_{i=1}^n \frac{\delta_{x_i,a}}{n}, \quad \forall a \in \mc{X}.
\eeq
The set of all types is denoted as $\mc{T}_n^{\mc{X}}$. The size of $\mc{T}_n^{\mc{X}}$ is bounded by
\begin{equation}
\label{eq:typenumber}
|\mc{T}_n^{\mc{X}}| \leq (n+1)^{|\mc{X}|}.
\end{equation}
If $t\in\mc{T}_n^{\mc{X}}$, then the set of all the sequences of type $t$ is called the type class of $t$, denoted as $T_n^t$, that is,
\beq
T_n^t:=\{x^n~|~t_{x^n}=t\}.
\eeq
The number of the sequences in a type class $T_n^t$ satisfies the following relation:
\begin{equation}
\label{eq:numt}
(n+1)^{-|\mc{X}|} 2^{nH(t)} \leq |T_n^t| \leq 2^{nH(t)},
\end{equation}
where $H(t)=-\sum_{x\in\mc{X}} t(x)\log t(x)$ is the Shannon entropy of $t$. Let $X_1,X_2,\ldots,X_n$ be a sequence of i.i.d.\ random variables each taking values in $\mc{X}$ according to the distribution $p$. Then the probability of $X^n$ being equal to $x^n$ of type $t$ is given by
\begin{equation}
\label{eq:prot}
p^n(x^n)=\prod_{i=1}^np(x_i)=2^{-nH(t)-nD(t\|p)},
\end{equation}
where $D(t\|p)=\sum_{x\in\mc{X}}t(x)\log\frac{t(x)}{p(x)}$ is the relative entropy. The combination of Eq.~(\ref{eq:numt}) and Eq.~(\ref{eq:prot}) gives that the probability of $X^n$ taking values in a type class $T_n^t$, given by $\sum_{x^n\in T_n^t}p^n(x^n)$, can be bounded as
\begin{equation}
\label{eq:proset}
(n+1)^{-|\mc{X}|}2^{-nD(t\|p)} \leq \sum_{x^n\in T_n^t}p^n(x^n) \leq 2^{-nD(t\|p)}.
\end{equation}

\section{Problems and Main Results}
  \label{sec:problems-results}
\subsection{Smoothing of the max-relative entropy}
  \label{subsec:probres_dmax}
Let $\rho \in \mc{S}(\mc{H})$, $\sigma \in \mc{P}(\mc{H})$, and $\epsilon\in[0,1]$. The smooth max-relative entropy based on the purified distance is defined as~\cite{Datta2009max}
\beq\label{eq:dmax-def}
D^{\epsilon}_{\rm{max}}(\rho \| \sigma)
:= \inf\big\{\lambda\in\mathbb{R} ~\big|~ \big(\exists\tilde{\rho}\in\mc{B}^{\epsilon}(\rho)\big)~ \tilde{\rho}\leq 2^{\lambda} \sigma \big\},
\eeq
where $\mc{B}^{\epsilon}(\rho):=\{\tilde{\rho}\in\mc{S}_{\leq}(\mc{H})~|~P(\tilde{\rho},\rho) \leq \epsilon\}$ is the $\epsilon$-ball of (subnormalized) quantum states around $\rho$. Regarding
$D^{\epsilon}_{\rm{max}}(\rho \| \sigma)$ as a function of $\epsilon$, we introduce its inverse function
\begin{align}
\epsilon(\rho \| \sigma, \lambda)
:=&\min\left\{\epsilon \;|\; D^{\epsilon}_{\rm{max}}(\rho\|\sigma)\leq\lambda\right\}\nonumber \\
 =&\min \big\{P(\rho, \tilde{\rho})~\big|~\tilde{\rho} \in \mc{S}_{\leq}(\mc{H}) \text{ and } \tilde{\rho} \leq 2^\lambda \sigma \big\} \label{eq:invdmax-def}
\end{align}
and call it the smoothing quantity.

The smooth max-relative entropy is not only a basic tool in quantum information theory, but it also quantifies in an exact way an operational task. In~\cite{WangWilde2019resource}, it is shown that $D^{\epsilon}_{\rm{max}}(\rho \| \sigma)$ is the cost of preparing the box $(\rho, \sigma)$ with error $\epsilon$, in the resource theory of asymmetric distinguishability. So, the study of $D^{\epsilon}_{\rm{max}}(\rho\|\sigma)$ is equivalent to the study of the dilution problem of asymmetric distinguishability.

The quantum asymptotic equipartition property~\cite{TCR2009fully, Tomamichel2015quantum} states that
\beq\label{eq:AEP}
\lim_{n\rar\infty}\frac{1}{n}D^{\epsilon}_{\rm{max}}(\rho^{\ox n}\|\sigma^{\ox n})=D(\rho\|\sigma).
\eeq
Eq.~\eqref{eq:AEP} is further strengthened by the second-order asymptotics established in~\cite{TomamichelHayashi2013hierarchy}. The large-deviation type behaviors are more conveniently stated in terms of $\epsilon(\rho\|\sigma, \lambda)$. When $r>D(\rho\|\sigma)$, the rate of exponential decay of $\epsilon(\rho^{\ox n}\|\sigma^{\ox n}, nr)$ towards $0$ is determined recently in~\cite{LYH2023tight}. On the other hand, when $r<D(\rho\|\sigma)$, the smoothing quantity  $\epsilon(\rho^{\ox n}\|\sigma^{\ox n}, nr)$ must converge to $1$ exponentially fast; see~\cite{WangWilde2019resource,SalzmannDatta2022total,
Wilde2022distinguishability} for bounds using R\'enyi relative entropies. The exact rate of this exponential convergence is called the strong converse exponent, which was unknown.

In this paper, we determine the strong converse exponent for the smoothing of the max-relative entropy. Our result is the following Theorem~\ref{theorem:dmax}. We point out that the optimality (``$\geq$'') part has been derived in~\cite{WangWilde2019resource} (see Eq.~(L14) of~\cite[Appendix L]{WangWilde2019resource}). Our proof, dealing with the achievability part and the optimality part in a unified way, does not depend on the result of~\cite{WangWilde2019resource}, however.
\begin{theorem}
\label{theorem:dmax}
For $\rho \in \mc{S}(\mc{H})$, $\sigma \in \mc{P}(\mc{H})$ and $r \in \mathbb{R}$, we have
\begin{equation}\label{eq:dmax}
\lim_{n \rar \infty} \frac{-1}{n} \log \big(1-\epsilon(\rho^{\ox n} \| \sigma^{\ox n}, nr) \big)
=\sup_{\frac{1}{2}\leq \alpha < 1}\frac{1-\alpha}{\alpha}\big\{D^*_{\alpha}(\rho \| \sigma)-r\big\}.
\end{equation}
\end{theorem}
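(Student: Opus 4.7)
The plan is to establish Eq.~\eqref{eq:dmax} by proving the two inequalities separately: the ``$\geq$'' direction via an operator H\"older bound and the ``$\leq$'' direction by the method of types in the commuting case followed by a reduction to it. Since $P=\sqrt{1-F^2}$, one has $1-\epsilon=F^2/(1+\sqrt{1-F^2})$; in the non-trivial regime $r<D(\rho\|\sigma)$ the AEP (Eq.~\eqref{eq:AEP}) forces $\epsilon\to 1$, so $F^\ast := \max\{F(\rho^{\ox n},\tilde{\rho})\,:\,\tilde{\rho}\in\mc{S}_{\leq}(\mc{H}^{\ox n}),\,\tilde{\rho}\leq 2^{nr}\sigma^{\ox n}\}\to 0$, and the strong converse exponent equals $-\tfrac{2}{n}\log F^\ast+o(1)$; the task is to compute this rate.

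For the ``$\geq$'' direction, I apply operator H\"older to $F(\rho^{\ox n},\tilde{\rho})=\|(\rho^{\ox n})^{1/2}\tilde{\rho}^{1/2}\|_1$ by splitting $\tilde{\rho}^{1/2}=\tilde{\rho}^{1/2-c}\tilde{\rho}^{c}$ with $c=1-\tfrac{1}{2\alpha}$ for $\alpha\in(\tfrac{1}{2},1)$. Taking conjugate exponents $p_1=2\alpha$ and $p_2=2\alpha/(2\alpha-1)$ (so $c p_2=1$), the normalization $\tr\tilde{\rho}\leq 1$ yields $\|\tilde{\rho}^c\|_{p_2}\leq 1$. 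For the remaining factor, the similarity of $X^\dg X$ and $X X^\dg$ gives $\|(\rho^{\ox n})^{1/2}\tilde{\rho}^{1/2-c}\|_{p_1}^{p_1}=\tr\bigl((\rho^{\ox n})^{1/2}\tilde{\rho}^{1-2c}(\rho^{\ox n})^{1/2}\bigr)^{\alpha}$. Since $1-2c=(1-\alpha)/\alpha\in(0,1)$, operator monotonicity of $x\mapsto x^{1-2c}$ applied to the constraint $\tilde{\rho}\leq 2^{nr}\sigma^{\ox n}$, conjugation by $(\rho^{\ox n})^{1/2}$, and the Weyl-monotonicity consequence $\tr A^\alpha\leq \tr B^\alpha$ when $0\leq A\leq B$ give $\|(\rho^{\ox n})^{1/2}\tilde{\rho}^{1/2-c}\|_{p_1}\leq 2^{-n(1-\alpha)/(2\alpha)\cdot (D^\ast_\alpha(\rho\|\sigma)-r)}$. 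Combining, $F^\ast\leq 2^{-n(1-\alpha)/(2\alpha)(D^\ast_\alpha-r)}$; taking $-\tfrac{2}{n}\log$ and supremum over $\alpha\in[\tfrac{1}{2},1)$ proves ``$\geq$''.

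For the ``$\leq$'' direction, I first handle the commuting case. Writing $\rho=\sum_x p(x)\proj{x}$ and $\sigma=\sum_x q(x)\proj{x}$, pinching the candidate $\tilde{\rho}$ in the common basis preserves the constraint and does not decrease the fidelity by data processing, so without loss of generality $\tilde{\rho}=\sum_{x^n}\tilde{p}(x^n)\proj{x^n}$. The problem becomes the classical convex QCQP $\max\sum_{x^n}\sqrt{p^n(x^n)\tilde{p}(x^n)}$ subject to $\tilde{p}(x^n)\leq 2^{nr}q^n(x^n)$ and $\sum_{x^n}\tilde{p}(x^n)\leq 1$, whose KKT solution is $\tilde{p}^\ast(x^n)=\min\{p^n(x^n)/(4\mu^2),\,2^{nr}q^n(x^n)\}$ with $\mu\geq 0$ set by normalization. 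Grouping sequences by type, invoking Eqs.~\eqref{eq:numt}--\eqref{eq:prot}, and optimizing over the scaled multiplier $\nu=\log(4\mu^2)/n$, a min-max argument combined with the variational formula for the log-Euclidean R\'enyi divergence (Proposition~\ref{prop:mainpro}(iii), which coincides with $D^\ast_\alpha$ in the commuting case) identifies the limit of $-\tfrac{2}{n}\log F^\ast$ with $\sup_{\alpha\in[\tfrac{1}{2},1)}\frac{1-\alpha}{\alpha}\{D^\ast_\alpha(\rho\|\sigma)-r\}$, matching the H\"older bound. Promoting this achievability to the general non-commuting case is the main obstacle: the commuting optimizer cannot be lifted directly because pinching gives fidelity inequalities in the wrong direction for lower bounds on $F$. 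The transfer therefore requires the paper's promised ``new properties of the fidelity,'' most plausibly a lemma constructing an explicit non-commuting candidate---for instance by conjugating the commuting optimizer with powers of $\sigma^{\ox n}$, or by averaging over permutations and exploiting the universal symmetric state of Lemma~\ref{lem:sym}---whose fidelity with $\rho^{\ox n}$ matches the commuting exponent up to subexponential factors.
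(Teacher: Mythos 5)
Your plan has the right high-level shape (commuting case by types, then a reduction), and your \(\geq\) direction via operator H\"older is correct — it is essentially the argument of Lemma~\ref{lem:LWDg} (Leditzky--Wilde--Datta / Wang--Wilde), and the paper itself acknowledges that the optimality inequality was already obtained this way in~\cite{WangWilde2019resource}. The paper, however, does \emph{not} use H\"older at all: it derives \emph{both} directions from the commutative Theorem~\ref{thm:commu} via a pinching sandwich (Proposition~\ref{prop:dmaxpinching}) together with the uniform convergence \(\frac{1}{m}D_{\alpha}^{*}(\mc{E}_{\sigma^{\ox m}}(\rho^{\ox m})\|\sigma^{\ox m})\to D_{\alpha}^{*}(\rho\|\sigma)\) of~\cite[Lemma~3]{HayashiTomamichel2016correlation}. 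So for the optimality part you have a valid alternative route, shorter than the paper's but less unified.

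The real gap is in the \(\leq\) (achievability) direction. Your commuting-case sketch via KKT is fine in spirit (the paper instead restricts to type-invariant candidates and bounds per type, then invokes the variational formula of Proposition~\ref{prop:dmaxvar}; these are equivalent at the level of exponents). You correctly identify that lifting the commuting optimizer to the general case is the obstruction, but neither of your guesses — conjugating by powers of \(\sigma^{\ox n}\), or symmetrizing via Lemma~\ref{lem:sym} — is what the paper does, and it is not clear either would close the gap. The paper's key ingredient is Lemma~\ref{lem:likeUlman}, an Uhlmann-theorem construction for block-diagonal targets: if \(\mc{H}=\bigoplus_i\mc{H}_i\), \(\rho\) and \(\sigma\) are supported blockwise, and \(\tilde{\rho}\) is any state satisfying \(\sum_i\Pi_i\tilde{\rho}\Pi_i=\rho\), then there exists \(\tilde{\sigma}\) with \(\sum_i\Pi_i\tilde{\sigma}\Pi_i=\sigma\) and \(F(\tilde{\rho},\tilde{\sigma})=F(\rho,\sigma)\). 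Applied with \(\tilde{\rho}=\rho^{\ox n}\) pinching down to \(\mc{E}_{\sigma^{\ox n}}(\rho^{\ox n})\) and \(\sigma\) the commuting optimizer \(\bar\rho\), this yields a noncommuting candidate \(\tilde{\sigma}\) with \(F(\rho^{\ox n},\tilde{\sigma})=F(\mc{E}(\rho^{\ox n}),\bar\rho)\); the operator constraint is then recovered from the pinching inequality \(\tilde{\sigma}\leq v\,\mc{E}(\tilde{\sigma})=v\,\bar\rho\), at a subexponential cost \(\log v\). This is precisely the second inequality in Proposition~\ref{prop:dmaxpinching}, and it, not a conjugation or permutation-averaging trick, is the ``new property of the fidelity'' the introduction alludes to. You would still also need the uniform pinching limit \(\frac{1}{m}D_{\alpha}^{*}(\mc{E}_{\sigma^{\ox m}}(\rho^{\ox m})\|\sigma^{\ox m})\to D_{\alpha}^{*}(\rho\|\sigma)\) to pass from the blockwise-pinched divergence back to \(D_{\alpha}^{*}(\rho\|\sigma)\); your sketch does not mention this step, and without it the achievability bound would be stated only in terms of pinched divergences.
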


In Eq.~\eqref{eq:dmax}, the range $\frac{1}{2}\leq\alpha<1$ can be replaced by $\frac{1}{2}<\alpha<1$, in agreement with Eq.~\eqref{eq:intmre}. This is because the expression under optimization is either right continuous at $\frac{1}{2}$, or being $+\infty$ on the whole interval. Some more remarks are as follows. (\romannumeral1) When $\supp(\rho)\subseteq\supp(\sigma)$, the expression under optimization is bounded and continuous as a function of $\alpha$ on $[\frac{1}{2},1]$. So the optimization can be replaced by taking the maximum over $\alpha\in[\frac{1}{2},1]$.  (\romannumeral2) When $\supp(\rho)\nsubseteq\supp(\sigma)$ and $\supp(\rho)\not\perp\supp(\sigma)$, the expression under optimization is indefinite at $\alpha=1$, because $D^*_{\alpha}(\rho \| \sigma)|_{\alpha=1}=D(\rho \| \sigma)=+\infty$ but $\frac{1-\alpha}{\alpha}|_{\alpha=1}=0$. However, we can check that, for $\frac{1}{2}\leq\alpha<1$,
\begin{equation}\label{eq:dmax-alt}
\frac{1-\alpha}{\alpha}\big\{D^*_{\alpha}(\rho \| \sigma)-r\big\}
=\frac{1-\alpha}{\alpha}\big\{D^*_{\alpha}(\rho' \| \sigma)-r\big\}-\log\tr[\Pi_\sigma\rho\Pi_\sigma],
\end{equation}
where $\Pi_\sigma$ is the projection onto the supporting space of $\sigma$ and $\rho'$ is the normalized version of $\Pi_\sigma\rho\Pi_\sigma$. So, using the alternative expression, the optimization can again be replaced by taking the maximum over $\alpha\in[\frac{1}{2},1]$. (\romannumeral3) When $\supp(\rho)\perp\supp(\sigma)$, Eq.~\eqref{eq:dmax} is $+\infty$.

\subsection{Quantum privacy amplification}
  \label{subsec:probres_pa}
Consider a CQ state
\begin{equation}
  \label{eq:source-state}
\rho_{XE}=\sum_{x\in\mc{X}} p_x \proj{x}_X \ox \rho^x_E.
\end{equation}
Let the system $X$, which is also regarded as a classical random variable, represent an imperfect random number that is partially correlated with an adversary Eve's system $E$. In the procedure of privacy amplification, we apply a hash function $f:\mc{X}\rightarrow\mc{Z}$ on $X$ to extract a random number $Z$, which is expected to be uniformly distributed and independent of the adversary's system $E$. The action of the hash function $f$ can be written as a quantum operation
\begin{equation}
  \label{eq:hashq}
\mc{P}_f: \omega\mapsto\sum_{x\in\mc{X}}\bra{x}\omega\ket{x}\proj{f(x)}.
\end{equation}
So the resulting state of privacy amplification is
\begin{equation}
  \label{eq:final-state}
\mc{P}_f(\rho_{XE})=\sum_{z\in\mc{Z}} \proj{z}_Z \ox \sum_{x\in f^{-1}(z)} p_x\rho^x_E.
\end{equation}
The effect is measured by two quantities. One is the size of the extracted randomness, $\log|\mc{Z}|$ in bits. The other one is the security parameter, defined as
\beq\label{eq:def-p-pa}
\mathfrak{P}^{\rm pa}(\rho_{XE},f)
:=\max_{\omega_E\in\mc{S}(E)}F^2\big(\mc{P}_f(\rho_{XE}),\pi_Z\ox\omega_E \big),
\eeq
where $\pi_Z$ is the maximally mixed state. Since the purified distance is a function of fidelity (cf.\ Eq.~\eqref{eq:defP}), the security parameter employed here takes the same information as the one based on purified distance in previous works (e.g.,~\cite{TomamichelHayashi2013hierarchy, ABJT2020partially, LYH2023tight}). See more discussions in Remark~\ref{rk:fp}.

In the asymptotic setting where an arbitrary large number of copies of the state $\rho_{XE}$ is available, we apply the hash function $f_n:\mc{X}^{\times n} \rar \mc{Z}_n$ to extract private randomness from $\rho_{XE}^{\ox n}$, for any $n \in \mathbb{N}$. It has been proven in~\cite{DevetakWinter2005distillation} that to achieve asymptotically perfect privacy amplification such that $\mathfrak{P}^{\rm pa}(\rho_{XE}^{\ox n},f_n)\rar 1$, the rate of randomness extraction must satisfy
\beq
\limsup_{n\rar\infty} \frac{1}{n} \log |\mc{Z}_n| \leq H(X|E)_\rho.
\eeq
Finer asymptotic results, including the second-order expansion based on purified distance~\cite{TomamichelHayashi2013hierarchy} and that based on trace distance~\cite{SGC2022optimal}, as well as the large-deviation type error exponent~\cite{LYH2023tight}, have been obtained later.

On the other hand, when the rate of randomness extraction is larger than $H(X|E)_\rho$, the strong converse property holds. Specifically, for any sequence of hash functions $\{f_n:
\mc{X}^{\times n} \rar \mc{Z}_n\}_{n \in \mathbb{N}}$, we have
\beq\label{eq:scp-pa}
\liminf_{n\rar\infty}\frac{1}{n} \log |\mc{Z}_n| > H(X|E)_\rho  \quad\Rightarrow\quad \lim_{n\rar\infty}\mathfrak{P}^{\rm pa}(\rho_{XE}^{\ox n},f_n) = 0,
\eeq
and the decay of $\mathfrak{P}^{\rm pa}(\rho_{XE}^{\ox n},f_n)$ is exponentially fast. This can be seen, from the one-shot converse bound in terms of the smooth conditional min-entropy~\cite{TSSR2011leftover, TomamichelHayashi2013hierarchy} combined with the asymptotic equipartition property~\cite{TCR2009fully}. The work~\cite{LWD2016strong} proved the strong converse property by providing a bound on the rate of exponential decay in terms of the sandwiched R\'enyi conditional entropy; see also recent works~\cite{SGC2022strong,SalzmannDatta2022total} for bounds employing Petz's R\'enyi conditional entropy. The optimal achievable exponent of this decay is called the strong converse exponent and is defined as
\begin{equation}\label{eq:defscpa}
E_{\rm{sc}}^{\rm{pa}}(\rho_{XE}, r):=\inf \left\{\limsup_{n\rar\infty} \frac{-1}{n} \log \mathfrak{P}^{\rm pa}(\rho_{XE}^{\ox n},f_n)~\Big|~ \liminf_{n\rar\infty} \frac{1}{n} \log |\mc{Z}_n| \geq r \right\}.
\end{equation}

\begin{remark}\label{rk:fp}
The above definition based on the squared fidelity (cf.\ Eq.~\eqref{eq:def-p-pa}) is equivalent to a definition of the strong converse exponent based on the purified distance. Specifically, in Eq.~\eqref{eq:defscpa}, $\mathfrak{P}^{\rm pa}(\rho_{XE}^{\ox n},f_n)$ can be replaced by $1-\widehat{\mathfrak{P}}^{\rm pa}(\rho_{XE}^{\ox n},f_n)$ without changing the strong converse exponent, where
\beq
\widehat{\mathfrak{P}}^{\rm pa}(\rho_{XE},f)
:=\min\limits_{\omega_E\in\mc{S}(E)}P\big(\mc{P}_f(\rho_{XE}),\pi_Z\ox\omega_E \big)
\eeq
is a security parameter widely used in the literature. This fact can be easily verified by the relation $P=\sqrt{1-F^2}$.
\end{remark}

In this paper, we derive the exact expression for $E_{\rm{sc}}^{\rm{pa}}(\rho_{XE}, r)$. The result is given in the following theorem.
\begin{theorem}
\label{theorem:privacy}
Let $\rho_{XE}$ be a CQ state. For any rate $r\geq 0$, we have
\begin{equation}
E_{\rm{sc}}^{\rm{pa}}(\rho_{XE}, r)=\sup_{\frac{1}{2} \leq \alpha \leq 1} \frac{1-\alpha}{\alpha}\big\{r-H^*_{\alpha}(X|E)_\rho\big\}.
\end{equation}
\end{theorem}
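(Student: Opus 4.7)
The plan is to prove the two inequalities $\leq$ and $\geq$ separately, using distinct methods as indicated in the paper's overview.

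For the optimality bound ($\geq$), I would fix an arbitrary sequence of hash functions $\{f_n\colon\mc{X}^{\times n}\rar\mc{Z}_n\}$ with $\liminf_{n\rar\infty}\frac{1}{n}\log|\mc{Z}_n|\geq r$, and upper bound the security parameter by a quantity involving $H^*_\alpha$. The technical engine is the H\"older-type inequality of Lemma~\ref{lem:LWDg}, applied with $\rho'=\mc{P}_{f_n}(\rho^{\ox n}_{XE})$ and $\sigma'=\pi_{Z_n}\ox\omega_{E^n}$. Combined with the additivity of $H^*_\alpha$ under tensor products and the variational freedom in choosing $\omega_{E^n}$, one should obtain
\[
F^2\big(\mc{P}_{f_n}(\rho^{\ox n}_{XE}),\pi_{Z_n}\ox\omega_{E^n}\big) \leq |\mc{Z}_n|^{-\frac{1-\alpha}{\alpha}}\cdot 2^{n\frac{1-\alpha}{\alpha}H^*_\alpha(X|E)_\rho}
\]
for every $\alpha\in[\tfrac12,1)$, uniformly in $\omega_{E^n}$. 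Taking $-\frac{1}{n}\log$, passing to $\limsup$, and supremizing over $\alpha$ completes the argument; the endpoint $\alpha=1$ is handled by continuity.

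For the achievability bound ($\leq$), I would use a two-stage strategy following Mosonyi--Ogawa. In the first stage, analyze random hashing from a $2$-universal family $\mc{F}_n\subset\{\mc{X}^{\times n}\rar\mc{Z}_n\}$ of size close to $2^{nr}$. Pinching the conditional states $\rho^x_{E^n}$ by the universal symmetric state of Lemma~\ref{lem:sym}, combined with the variational representation of $D^\flat_\alpha$ in Proposition~\ref{prop:mainpro}(iii) and method-of-types bookkeeping from Section~\ref{sec:preliminaries}, should yield the existence of a sequence $\{f_n\}$ satisfying
\[
\limsup_{n\rar\infty}\frac{-1}{n}\log\mathfrak{P}^{\rm pa}(\rho_{XE}^{\ox n},f_n)\leq \sup_{\tfrac12\leq\alpha\leq 1}\frac{1-\alpha}{\alpha}\big\{r-H^\flat_\alpha(X|E)_\rho\big\}.
\]
This provides a weaker bound in terms of the log-Euclidean conditional entropy.

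The second stage --- which I expect to be the main obstacle --- is to upgrade $H^\flat_\alpha$ to $H^*_\alpha$. I would use a blocking argument: apply the single-copy version of the first-stage bound to $m$-fold blocks $\rho_{XE}^{\ox m}$ in place of $\rho_{XE}$, obtaining for every $m\geq 1$
\[
E^{\rm pa}_{\rm sc}(\rho_{XE},r) \leq \sup_{\alpha}\frac{1-\alpha}{\alpha}\Big\{r-\tfrac{1}{m}H^\flat_\alpha(X^m|E^m)_{\widetilde{\rho}^{(m)}}\Big\}+\delta_m,
\]
where $\widetilde{\rho}^{(m)}$ is $\rho_{XE}^{\ox m}$ suitably pinched by a universal symmetric state and $\delta_m\rar 0$. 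Sending $m\rar\infty$ and invoking the asymptotic equivalence
\[
\lim_{m\rar\infty}\tfrac{1}{m}H^\flat_\alpha(X^m|E^m)_{\widetilde{\rho}^{(m)}}=H^*_\alpha(X|E)_\rho,
\]
established through the pinching inequality and the polynomial bounds in Lemma~\ref{lem:sym}, should yield the desired $H^*_\alpha$ bound. Care must be taken near the endpoint $\alpha=\tfrac12$, where discontinuities may arise, and in justifying the interchange of the supremum over $\alpha$ with the limit in $m$.
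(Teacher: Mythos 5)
Your optimality argument is essentially correct and is the route the paper takes: Lemma~\ref{lem:LWD} (a consequence of Lemma~\ref{lem:LWDg}), applied with $\rho_{AB}=\mc{P}_{f_n}(\rho_{XE}^{\ox n})$ and $\sigma_{AB}=\pi_{Z_n}\ox\omega_{E^n}$, together with $H_\beta^*(Z_n|E^n)_{\pi\ox\omega}=\log|\mc{Z}_n|$, Proposition~\ref{prop:mainpro}~(v) to get $H_\alpha^*(Z_n|E^n)\leq n H_\alpha^*(X|E)_\rho$, and additivity. Your ``Stage~2'' is likewise on target: the paper applies the $H_\alpha^\flat$-bound to $\tilde{\rho}_{X^mE^m}=\mc{E}_{\sigma^u_{E^m}}(\rho_{XE}^{\ox m})$, uses $D_\alpha^\flat=D_\alpha^*$ for commuting arguments, and pays $O(\frac{\log v_{m,|E|}}{m})$ via Lemma~\ref{lem:sym} and data processing, rather than invoking a limit equality, but the blocking idea and the required ingredients are the same.

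The genuine gap is in your ``Stage~1'' — the achievability of the $H^\flat_\alpha$ bound, i.e.\ Theorem~\ref{thm:PA-F}. You propose to get it by directly analysing random hashing from a $2$-universal family with method-of-types bookkeeping. That is not how the paper proceeds, and I do not believe it can be pushed through as described: method-of-types estimates control the classical $X$ side, but in the strong-converse regime one needs to exhibit hash functions whose fidelity decays as slowly as possible against the best adversary state $\omega_{E^n}$, and a direct average-case analysis over a universal family does not produce the variational form $\inf_{\tau}\{D(\tau\|\rho)+|r-H(X|E)_\tau|^+\}$. The key idea the paper actually uses is a \emph{change-of-measure} argument that you do not mention. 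After proving the variational identity for $G(\rho_{XE},r)$ (Proposition~\ref{prop:varpa}), one fixes a near-optimizing auxiliary CQ state $\tau_{XE}$ with $\supp(\tau)\subseteq\supp(\rho)$. When $H(X|E)_\tau>r$, Devetak--Winter plus Hayashi's refinement gives hash functions $f_n$ with $D\big(\mc{P}_{f_n}(\tau_{XE}^{\ox n})\,\|\,\pi_{Z_n}\ox\tau_E^{\ox n}\big)\rar 0$. This achievability statement is then \emph{transferred} from $\tau$ to the target state $\rho$ by the crucial inequality (Lemma~\ref{lem:fidelity-re})
\begin{equation*}
-\log F^2(\rho,\sigma)\leq D(\tau\|\rho)+D(\tau\|\sigma),
\end{equation*}
which, applied with $\sigma=\pi_{Z_n}\ox\tau_E^{\ox n}$ and the data processing inequality $D(\mc{P}_{f_n}(\tau^{\ox n})\|\mc{P}_{f_n}(\rho^{\ox n}))\leq nD(\tau_{XE}\|\rho_{XE})$, bounds the exponent by $D(\tau_{XE}\|\rho_{XE})$. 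The case $H(X|E)_\tau\leq r$ is handled by running the scheme at rate $r'=H(X|E)_\tau-\delta$ and padding the extracted alphabet, costing an extra $r-H(X|E)_\tau$ in the exponent. Without Lemma~\ref{lem:fidelity-re} and the auxiliary-state/Devetak--Winter combination, your first stage does not go through.
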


In Theorem~\ref{theorem:privacy}, the expression under optimization is continuous on the closed interval $[\frac{1}{2},1]$. So, the supremum can be replaced by a maximum. For the same reason, the range $\frac{1}{2} \leq \alpha \leq 1$ can be replaced by $\frac{1}{2} < \alpha < 1$, in agreement with Eq.~\eqref{eq:intpa}.

\subsection{Quantum information decoupling}
  \label{subsec:probres_dec}
Let $\rho_{RA}\in\mc{S}(RA)$ be a bipartite quantum state with $A$ in the lab and $R$ held by a referee. Quantum information decoupling is the task of removing the correlation between system $A$ and system $R$, by performing quantum operations on $A$. We focus on decoupling strategy via discarding a subsystem~\cite{ADHW2009mother}; other strategies, such as those of~\cite{HOW2007quantum} and~\cite{GPW2005quantum}, can be treated similarly. A general decoupling scheme $\mc{D}$ consists of a catalytic system $A'$ in a state $\sigma_{A'}$ and a unitary transformation $U : \mc{H}_{AA'} \rar \mc{H}_{\bar{A}\tilde{A}}$. We write
\beq\label{eq:decscheme}
\mc{D}:=(\sigma_{A'},\ U: \mc{H}_{AA'} \rar \mc{H}_{\bar{A}\tilde{A}}).
\eeq
Discarding the subsystem $\tilde{A}$, the goal of quantum information decoupling is to make the resulting
state on $R$ and $\bar{A}$ close to a product form. Thus the performance of this scheme is characterized by \beq\label{eq:def-p-dec}
\mathfrak{P}^{\rm dec}(\rho_{RA},\mc{D})
:=\max_{\omega_{R}\in\mc{S}(R),\atop\omega_{\bar{A}}\in\mc{S}(\bar{A})}
F^2\big(\tr_{\tilde{A}}[ U (\rho_{RA} \ox \sigma_{A'}) U^\dg ],\omega_R \ox \omega_{\bar{A}} \big).
\eeq
The cost is measured by the amount of discarded qubits, namely, $\log |\tilde{A}|$. When the catalytic state $\sigma_{A'}$ in Eq.~\eqref{eq:decscheme} is restricted to be pure, this is the standard setting of decoupling considered originally~\cite{HOW2005partial,HOW2007quantum,ADHW2009mother}. Equivalently, in standard decoupling the catalytic system is of dimension one and can be omitted, and the unitary transformation is replaced by an isometry $U : \mc{H}_{A} \rar \mc{H}_{\bar{A}\tilde{A}}$. The definition that we adopt here, was introduced in~\cite{MBDRC2017catalytic,ADJ2017quantum} and called catalytic decoupling.

It has been established that, when arbitrarily many copies of the state $\rho_{RA}$ is available, asymptotically perfect decoupling can be achieved if and only if the rate of decoupling cost is at least $\frac{1}{2} I(R:A)_\rho$~\cite{ADHW2009mother}. This result holds for both the standard and the catalytic settings~\cite{MBDRC2017catalytic,ADJ2017quantum}. With catalyst, the second-order asymptotics has been derived in~\cite{MBDRC2017catalytic}. Recently, in the catalytic setting too, we have conducted the exponential analysis, obtaining the best exponent for the convergence of the performance towards the perfect in case that the rate of decoupling cost is below a critical value~\cite{LiYao2021reliability}.

When the rate of decoupling cost is smaller than $\frac{1}{2} I(R:A)_\rho$, the strong converse property states that for any sequence of decoupling schemes $\big\{\mc{D}_n=(\sigma_{A'_n},\ U_n:\mc{H}_{A^nA'_n} \rar \mc{H}_{\bar{A}_n\tilde{A}_n}) \big\}_{n \in \mathbb{N}}$, we have
\beq
\limsup_{n\rar\infty}\frac{1}{n} \log|\tilde{A}_n| < \frac{1}{2} I(R:A)_\rho  \quad\Rightarrow\quad \lim_{n\rar\infty}\mathfrak{P}^{\rm dec}(\rho_{RA}^{\ox n},\mc{D}_n) = 0.
\eeq
Moreover, the convergence is exponentially fast. This follows from the one-shot smooth-entropy bound~\cite{BCR2011the} coupled with the asymptotic equipartition property~\cite{TCR2009fully, Tomamichel2015quantum}; see also~\cite{Sharma2014a,LWD2016strong} for proofs employing R\'enyi entropies, and~\cite{MBDRC2017catalytic} for discussions on the catalytic setting. The optimal rate of exponential decay of $\mathfrak{P}^{\rm dec}(\rho_{RA}^{\ox n},\mc{D}_n)$ in the strong converse domain, for a fixed rate of decoupling cost, is called the strong converse exponent. Formally, it is defined as
\begin{equation} \label{eq:defsce}
E_{\rm{sc}}^{\rm{dec}}(\rho_{RA},r):=
\inf \left\{\limsup_{n\rar\infty} \frac{-1}{n}\log \mathfrak{P}^{\rm dec}(\rho_{RA}^{\ox n},\mc{D}_n)~\Big|~\limsup_{n\rar\infty}\frac{1}{n} \log|\tilde{A}_n| \leq r\right\}.
\end{equation}
For the same reason as Remark~\ref{rk:fp}, our definition here based on the squared fidelity (cf.\ Eq.~\eqref{eq:def-p-dec}) is equivalent to a definition based on purified distance, making similar modifications.

\smallskip
In this paper, we derive the exact expression for $E_{\rm{sc}}^{\rm{dec}}(\rho_{RA},r)$. Our result is as follows.
\begin{theorem}
\label{theorem:maindec}
Let $\rho_{RA} \in \mc{S}(RA)$ and $r \geq 0$. We have
\begin{equation} \label{eq:maindec}
E_{\rm{sc}}^{\rm{dec}}(\rho_{RA},r)=\sup_{\frac{1}{2} \leq \alpha \leq 1} \frac{1-\alpha}{\alpha}\big\{I_{\alpha}^{*, \rm{reg}}(R:A)_\rho-2r\big\}.
\end{equation}
\end{theorem}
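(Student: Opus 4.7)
Write $E(r):=\sup_{\tfrac12\leq\alpha\leq 1}\frac{1-\alpha}{\alpha}\bigl\{I_\alpha^{*,\rm{reg}}(R:A)_\rho-2r\bigr\}$. I would prove $E_{\rm sc}^{\rm dec}(\rho_{RA},r)\geq E(r)$ (optimality) and $E_{\rm sc}^{\rm dec}(\rho_{RA},r)\leq E(r)$ (achievability) by distinct techniques, mirroring the structure used for Theorem~\ref{theorem:privacy}.

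\emph{Optimality.} Given any scheme $\mc{D}_n=(\sigma_{A'_n},\,U_n)$, set $\tau_{R^n\bar{A}_n\tilde{A}_n}:=U_n(\rho_{RA}^{\ox n}\ox\sigma_{A'_n})U_n^\dg$. I would adapt the Leditzky--Wilde--Datta technique: the H\"older-type inequality of Lemma~\ref{lem:LWDg}, applied to the reduced state $\tr_{\tilde{A}_n}\tau$ in the squared-fidelity expression~\eqref{eq:def-p-dec}, upper bounds $\mathfrak{P}^{\rm dec}$ by a quantity involving $\bar{I}_\alpha^*(R^n:\bar{A}_n)_{\tr_{\tilde{A}_n}\tau}$. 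The dimension bound of Proposition~\ref{prop:mainpro}(vi) converts this into $\bar{I}_\alpha^*(R^n:\bar{A}_n\tilde{A}_n)_\tau-2\log|\tilde{A}_n|$, and the isometric invariance of $\bar{I}_\alpha^*$ together with the product form $\rho_{RA}^{\ox n}\ox\sigma_{A'_n}$ identifies $\bar{I}_\alpha^*(R^n:\bar{A}_n\tilde{A}_n)_\tau$ with $\bar{I}_\alpha^*(R^n:A^n)_{\rho^{\ox n}}$, producing
\beq
\mathfrak{P}^{\rm dec}(\rho_{RA}^{\ox n},\mc{D}_n)\leq 2^{-\frac{1-\alpha}{\alpha}\bigl(\bar{I}_\alpha^*(R^n:A^n)_{\rho^{\ox n}}-2\log|\tilde{A}_n|\bigr)}.
\eeq
Using the basic inequality $\bar{I}_\alpha^*\geq I_\alpha^*$ combined with $\lim_n\tfrac{1}{n}I_\alpha^*(R^n:A^n)_{\rho^{\ox n}}=I_\alpha^{*,\rm{reg}}(R:A)_\rho$ from~\eqref{eq:defrmi}, then passing $n\rar\infty$ and optimizing over $\alpha\in[\tfrac12,1]$ delivers $E_{\rm sc}^{\rm dec}(\rho_{RA},r)\geq E(r)$.

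\emph{Achievability.} Here I would follow the two-stage strategy of Mosonyi and Ogawa. In the first stage, for each fixed block length $m$, I would analyse a Haar-random catalytic decoupling scheme acting on $k$ copies of $\rho_{RA}^{\ox m}$ in the limit $k\rar\infty$. Pinching against the universal symmetric state of Lemma~\ref{lem:sym} reduces the relevant expected fidelity to a quasi-classical computation whose bound can be expressed, via the variational representation of $D_\alpha^\flat$ in Proposition~\ref{prop:mainpro}(iii), as
\beq
E_{\rm sc}^{\rm dec}(\rho_{RA},r)\leq \sup_{\frac12\leq\alpha\leq 1}\frac{1-\alpha}{\alpha}\Bigl\{\tfrac{1}{m}I_\alpha^\flat(R^m:A^m)_{\rho^{\ox m}}-2r\Bigr\}\qquad\text{for every }m\in\mathbb{N}.
\eeq
In the second stage, I would let $m\rar\infty$ and establish $\lim_{m\rar\infty}\tfrac{1}{m}I_\alpha^\flat(R^m:A^m)_{\rho^{\ox m}}=I_\alpha^{*,\rm{reg}}(R:A)_\rho$ through another pinching argument in which the polynomial blow-up $v_{m,|RA|}$ of Lemma~\ref{lem:sym} is absorbed by the normalization $\tfrac{1}{m}$.

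The principal obstacle lies in the achievability. Constructing a random catalytic scheme whose expected performance is controlled by the log-Euclidean mutual information requires carefully marrying the Haar-decoupling integrals of~\cite{ADHW2009mother,MBDRC2017catalytic} with the variational formula for $D_\alpha^\flat$, and the subsequent lifting to $I_\alpha^{*,\rm{reg}}$ hinges on a delicate asymptotic pinching argument at the tensor-power level. The optimality direction, while also nontrivial, reduces after the one-shot H\"older bound to a relatively routine application of the structural properties collected in Proposition~\ref{prop:mainpro}.
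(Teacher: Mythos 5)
Your optimality argument has a genuine gap: it omits the Uhlmann-extension step, which is not a convenience but the linchpin of the Leditzky--Wilde--Datta machinery under the performance measure~\eqref{eq:def-p-dec}. You propose applying Lemma~\ref{lem:LWDg} directly to the pair $\bigl(\tr_{\tilde{A}_n}\tau,\ \omega_{R^n}\ox\omega_{\bar{A}_n}\bigr)$ and claim this produces a bound involving $\bar{I}_\alpha^*(R^n:\bar{A}_n)_{\tr_{\tilde{A}_n}\tau}$, which you then lift via the dimension bound. But to obtain $\bar{I}_\alpha^*$ of the output state from Lemma~\ref{lem:LWDg}, the reference operator $\tau$ in that lemma must be of the form $(\tr_{\tilde{A}_n}\tau)_{R^n}\ox\nu_{\bar{A}_n}$; the residual term is then $D_\beta^*\bigl(\omega_{R^n}\ox\omega_{\bar{A}_n}\big\|(\tr_{\tilde{A}_n}\tau)_{R^n}\ox\nu_{\bar{A}_n}\bigr)=D_\beta^*(\omega_{R^n}\|(\tr_{\tilde{A}_n}\tau)_{R^n})+D_\beta^*(\omega_{\bar{A}_n}\|\nu_{\bar{A}_n})$, and the first summand is unbounded as $\omega_{R^n}$ ranges over $\mc{S}(R^n)$ in the maximization~\eqref{eq:def-p-dec}. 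Conversely, if you choose the reference operator to be $\omega_{R^n}\ox\omega_{\bar{A}_n}$ itself (so that the $D_\beta^*$ term vanishes), you only recover $I_\alpha^*(R^n:\bar{A}_n)$ of the reduced state, which \emph{decreases} under tracing out $\tilde{A}_n$ and to which Proposition~\ref{prop:mainpro}(vi) does not apply (that bound is stated for $\bar{I}_\alpha^*$, not $I_\alpha^*$). Your route would, in effect, only prove the bound under the alternative definition in which the reference marginal is frozen at $\rho_R^{\ox n}$, which is exactly the LWD/Eq.~(3.41) setting that the paper explicitly distinguishes from the present one. The paper resolves this by first passing to the Uhlmann extension $\omega_{R^n\bar{A}_n\tilde{A}_n}$ of $\omega_{R^n}\ox\omega_{\bar{A}_n}$ and applying Lemma~\ref{lem:LWDg} at the level of $U_n(\rho^{\ox n}\ox\sigma_{A'_n})U_n^\dg$ versus $\omega_{R^n\bar{A}_n\tilde{A}_n}$; the $D_\alpha^*$ term then yields $I_\alpha^*(R^n:A^n)_{\rho^{\ox n}}$ (min over both marginals, unitarily invariant), while the $D_\beta^*$ term yields $\bar{I}_\beta^*(R^n:\bar{A}_n\tilde{A}_n)_\omega$, and only \emph{this} is killed by the dimension bound because $\omega_{R^n\bar{A}_n}$ is a product. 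Also note you apply the dimension bound at exponent $\alpha$, whereas it must be invoked at the conjugate exponent $\beta>1$ attached to the $D_\beta^*$ side of the lemma.

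On the achievability side, your two-stage Mosonyi--Ogawa outline is qualitatively in the right direction, but two points are underdeveloped. First, the paper does not establish achievability from a single Haar-random Ansatz; it splits via the variational formula into the regimes $r>\tfrac12 I(R:A)_\tau$ and $r\leq\tfrac12 I(R:A)_\tau$ (Lemmas~\ref{lem:decF1} and~\ref{lem:decouF2}), padding the discarded subsystem in the second regime, and you do not address this dichotomy. Second, you assert $\lim_m \tfrac1m I_\alpha^\flat(R^m:A^m)_{\rho^{\ox m}}=I_\alpha^{*,\rm{reg}}(R:A)_\rho$ ``through another pinching argument''; the paper instead converts the pinched $D_\alpha^\flat$ to $D_\alpha^*$ at fixed $m$ and arrives at a single-letter $I_\alpha^*(R:A)_\rho$, and only then obtains $I_\alpha^{*,\rm{reg}}$ by iterating the scaling identity $E_{\rm sc}^{\rm dec}(\rho,r)=\tfrac1m E_{\rm sc}^{\rm dec}(\rho^{\ox m},mr)$ together with a nontrivial exchange of $\lim_m$ and $\sup_\alpha$ (Lemma~\ref{lem:swap}). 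Neither of those steps is trivially absorbed into the phrase ``asymptotic pinching argument,'' and the swap lemma in particular requires the subadditivity and upper semi-continuity of $\alpha\mapsto I_\alpha^*(R^m:A^m)_{\rho^{\ox m}}$ to invoke Cantor's intersection theorem.
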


Although Theorem~\ref{theorem:maindec} is stated for the catalytic setting, it applies for standard decoupling as well. In the proof, we derive the achievability part by employing a sequence of standard decoupling schemes, while we show the optimality part for any catalytic decoupling schemes. Thus, we conclude that the right hand side of Eq.~\eqref{eq:maindec} is actually the strong converse exponent for both settings.

The regularized formula in Eq.~\eqref{eq:maindec} makes it difficult to understand the properties. If $I_{\alpha}^*$ is additive for any product state $\rho_{R_1A_1}\ox\sigma_{R_2A_2}$, in the sense that $I_{\alpha}^*(R_1R_2:A_1A_2)_{\rho\ox\sigma}=I_{\alpha}^*(R_1:A_1)_\rho+I_{\alpha}^*(R_2:A_2)_\sigma$, then $I_{\alpha}^{*, \rm{reg}}=I_{\alpha}^*$ and the regularization is not needed. This additivity was shown in the classical case for $\alpha\in[\frac{1}{2},\infty)$~\cite{TomamichelHayashi2017operational}, and recently proved in the quantum case for $\alpha\in(1,\infty)$~\cite{ChengGao2023tight}. We conjecture that it holds as well for $\alpha\in[\frac{1}{2},1)$. However, we are unable to prove it, and the methods of~\cite{TomamichelHayashi2017operational,ChengGao2023tight} do not seem to work here. Via an indirect argument, in Lemma~\ref{lem:srrI} in the Appendix, we show that (\romannumeral1) the strong converse exponent in Eq.~\eqref{eq:maindec} is strictly positive if and only if $r<\frac{1}{2} I(R:A)_\rho$, and (\romannumeral2) $I_{\alpha}^{*, \rm{reg}}(R:A)_\rho\rar I(R:A)_\rho$ when $\alpha\nearrow1$. We also show in Lemma~\ref{lem:srrI} that in the optimization of Eq.~\eqref{eq:maindec}, the range $\frac{1}{2}\leq\alpha\leq 1$ can be replaced by $\frac{1}{2}<\alpha<1$, in agreement with Eq.~\eqref{eq:intdec}.

\section{Proof of the Strong Converse Exponent for Smoothing the Max-Relative Entropy}
  \label{sec:proof-mre}
In this section, we prove Theorem~\ref{theorem:dmax}. Our strategy is to first prove it in the special case that $\rho$ and $\sigma$ commute, and then reduce the general quantum case to the commutative case.

\subsection{Variational expression}
In the following proposition and throughout the paper, we use $|x|^+$ to denote $\max\{x, 0\}$.
\begin{proposition}
\label{prop:dmaxvar}
For $\rho \in \mc{S}(\mc{H})$, $\sigma \in \mc{P}(\mc{H})$ and $r \in \mathbb{R}$, we have
\begin{equation}\label{eq:dmaxvar}
\sup_{\frac{1}{2} \leq \alpha < 1} \frac{1-\alpha}{\alpha} \big\{D_{\alpha}^{\flat}(\rho \| \sigma)-r  \big\}=\inf_{\tau \in \mc{S}(\mc{H})} \left\{ D(\tau \| \rho)+\big|D(\tau \| \sigma)-r \big|^+ \right\}.
\end{equation}
Furthermore, if $\rho$ and $\sigma$ commute, then $\mc{S}(\mc{H})$ in the infimisation can be replaced by $\mc{C}_{\rho,\sigma}:=\{\tau~|~\tau\in\mc{S}(\mc{H}),~\tau~\text{commutes with}~\rho\text{~and~}\sigma\}$.
\end{proposition}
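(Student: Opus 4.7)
\medskip\noindent
\textbf{Proof plan.} My approach is to recognise both sides of Eq.~\eqref{eq:dmaxvar} as the two orders of a single minimax for the function
\[
F(\tau,s) := D(\tau\|\rho) + s\bigl(D(\tau\|\sigma)-r\bigr),\qquad (\tau,s)\in\mc{S}(\mc{H})\times[0,1],
\]
and then invoke Sion's theorem. Using the log-Euclidean variational formula of Proposition~\ref{prop:mainpro}(iii) for $\alpha\in(0,1)$, namely $D_{\alpha}^{\flat}(\rho\|\sigma)=\min_\tau\{D(\tau\|\sigma)+\tfrac{\alpha}{1-\alpha}D(\tau\|\rho)\}$, multiplying by $(1-\alpha)/\alpha$, subtracting $\tfrac{1-\alpha}{\alpha}r$, and making the substitution $s=(1-\alpha)/\alpha$ (which bijects $[\tfrac12,1)$ onto $(0,1]$), the LHS of~\eqref{eq:dmaxvar} becomes $\sup_{s\in(0,1]}\inf_\tau F(\tau,s)$. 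For the RHS, the elementary identity $|x|^+=\sup_{s\in[0,1]} sx$ rewrites the integrand as $\sup_{s\in[0,1]}F(\tau,s)$, so the RHS equals $\inf_\tau\sup_{s\in[0,1]}F(\tau,s)$. The proposition thus reduces to a minimax interchange between these two expressions.

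Next, I would apply Sion's minimax theorem on the compact convex set $\mc{S}(\mc{H})\times[0,1]$: $F(\cdot,s)$ is convex and lower semicontinuous in $\tau$ (by the joint convexity and lower semicontinuity of the quantum relative entropy), while $F(\tau,\cdot)$ is affine and continuous in $s$. This gives $\sup_{s\in[0,1]}\inf_\tau F=\inf_\tau\sup_{s\in[0,1]}F$. To match the open-left endpoint appearing on the LHS, I would observe that $g(s):=\inf_\tau F(\tau,s)$ is concave on $[0,1]$ with $g(0)=\inf_\tau D(\tau\|\rho)=0$, and satisfies the two-sided estimate $-sr\leq g(s)\leq F(\rho,s)=s\bigl(D(\rho\|\sigma)-r\bigr)$. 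Hence $\lim_{s\to 0^+}g(s)=g(0)=0$, so $\sup_{s\in(0,1]}g=\sup_{s\in[0,1]}g$, completing the identification LHS $=$ RHS.

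For the commuting case, let $\mc{E}$ be the pinching map associated with a joint eigenbasis of $\rho$ and $\sigma$. For any $\tau\in\mc{S}(\mc{H})$, the pinched state $\mc{E}(\tau)$ lies in $\mc{C}_{\rho,\sigma}$, and the data processing inequality yields $D(\mc{E}(\tau)\|\rho)\leq D(\tau\|\rho)$ together with $D(\mc{E}(\tau)\|\sigma)\leq D(\tau\|\sigma)$. Since $F(\cdot,s)$ is monotone nondecreasing in each of these two arguments for $s\geq 0$ (and $|\cdot|^+$ preserves monotonicity), replacing $\tau$ by $\mc{E}(\tau)$ does not increase either side of~\eqref{eq:dmaxvar}; hence both infima can be restricted to $\mc{C}_{\rho,\sigma}$.

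The main obstacle I anticipate is the careful bookkeeping of possibly infinite values of $F$ when $\supp(\rho)$ and $\supp(\sigma)$ are misaligned: if $\supp(\tau)\not\subseteq\supp(\rho)\cap\supp(\sigma)$ then $F(\tau,s)=+\infty$, so the minimisation effectively takes place on the compact convex subfamily of states supported in $\supp(\rho)\cap\supp(\sigma)$, on which $F$ is finite and continuous -- which is the setting where Sion applies without qualification. The degenerate case $\supp(\rho)\cap\supp(\sigma)=\{0\}$ yields $+\infty$ on both sides of Eq.~\eqref{eq:dmaxvar} by direct inspection of the definitions.
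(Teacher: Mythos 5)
Your proposal follows the paper's proof almost verbatim: the same variational formula for $D^\flat_\alpha$ from Proposition~\ref{prop:mainpro}(iii), the same substitution $s=(1-\alpha)/\alpha$, the same appeal to Sion's theorem, and the same pinching-plus-data-processing argument for the commuting case. The one place you deviate is at the endpoint $s=0$, and there is a genuine (though patchable) gap there. The paper keeps $\lambda\in(0,1]$ throughout and simply evaluates $\sup_{0<\lambda\leq 1}\{D(\tau\|\rho)+\lambda(D(\tau\|\sigma)-r)\}=D(\tau\|\rho)+|D(\tau\|\sigma)-r|^+$ directly. You instead close the interval to $[0,1]$, apply Sion there, and then claim $\sup_{s\in(0,1]}g=\sup_{s\in[0,1]}g$ because ``$\lim_{s\to0^+}g(s)=g(0)=0$'' by a two-sided squeeze $-sr\leq g(s)\leq s\bigl(D(\rho\|\sigma)-r\bigr)$. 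That squeeze is vacuous whenever $\supp(\rho)\not\subseteq\supp(\sigma)$ so that $D(\rho\|\sigma)=+\infty$ --- a case not excluded by your degenerate-case screen, which only removes $\supp(\rho)\cap\supp(\sigma)=\{0\}$ --- and the limit claim is then actually false. For instance, with $\rho=\tfrac12\1$ and $\sigma=\proj{0}$ on $\mathbb{C}^2$, one has $g(s)=\log 2-sr$ for every $s>0$ (only $\tau=\proj{0}$ survives the $+\infty$ penalty from $D(\tau\|\sigma)$) while $g(0)=\inf_\tau D(\tau\|\rho)=0$, so $\lim_{s\to0^+}g(s)=\log 2\neq g(0)$. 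Your end conclusion $\sup_{(0,1]}g=\sup_{[0,1]}g$ nevertheless survives, because the concavity of $g$ already furnishes the one-sided estimate $\liminf_{s\to0^+}g(s)\geq g(0)$, which is all you need; or, more cleanly, one can just evaluate the inner supremum over $\lambda\in(0,1]$ explicitly as the paper does and sidestep the endpoint altogether.
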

\begin{proof}
In the case that $\supp(\rho)\cap\supp(\sigma)=\{0\}$, it is easy to see that both sides of Eq.~\eqref{eq:dmaxvar} are infinity. In what follows, we prove the nontrivial case that $\supp(\rho)\cap\supp(\sigma)\neq\{0\}$. Set
\beq
\mc{S}_{\rho,\sigma}:=\{\tau\in\mc{S}(\mc{H})~|~\supp(\tau)\subseteq\supp(\rho)\cap\supp(\sigma)\}.
\eeq
We have
\begin{align}
&\sup_{\frac{1}{2} \leq \alpha < 1} \frac{1-\alpha}{\alpha}
 \left\{D_{\alpha}^{\flat}(\rho \| \sigma)-r  \right\} \nb\\
=&\sup_{\frac{1}{2} \leq \alpha < 1} \inf_{\tau \in \mc{S}(\mc{H})}\frac{1-\alpha}{\alpha}
 \left\{D(\tau \| \sigma)-\frac{\alpha}{\alpha-1}D(\tau \|\rho)-r \right\} \nb\\
=&\sup_{\frac{1}{2} \leq \alpha < 1} \inf_{\tau \in \mc{S}(\mc{H})}
 \left\{D(\tau \| \rho)+\frac{1-\alpha}{\alpha}(D(\tau \| \sigma)-r) \right\} \nb\\
=&\sup_{0 < \lambda  \leq 1} \inf_{\tau \in \mc{S}_{\rho,\sigma}}
 \left\{D(\tau \| \rho)+\lambda(D(\tau \| \sigma)-r) \right\} \nb\\
=&\inf_{\tau \in \mc{S}_{\rho,\sigma}}\sup_{0 < \lambda \leq 1}
 \left\{D(\tau \| \rho)+\lambda(D(\tau \| \sigma)-r) \right\} \nb\\
=&\inf_{\tau \in \mc{S}_{\rho,\sigma}}\left\{D(\tau\|\rho)+\big|D(\tau\|\sigma)-r\big|^+\right\} \nb\\
=&\inf_{\tau \in \mc{S}(\mc{H})}\left\{D(\tau\|\rho)+\big|D(\tau\|\sigma)-r\big|^+\right\},
\end{align}
where the second line is by the variational expression for $D_{\alpha}^{\flat}(\rho \| \sigma)$ (see Proposition~\ref{prop:mainpro}~(\romannumeral3)), and for the fifth line, we use Sion's minimax theorem (Lemma~\ref{lem:minimax} in the Appendix).

When $\rho$ and $\sigma$ commute, they have a common eigenbasis. Denote it as $\{\ket{x}\}_{x\in\mc{X}}$. For any $\tau \in \mc{S}(\mc{H})$, we write
\beq
\tilde{\tau}=\sum_{x\in\mc{X}}\bra{x}\tau\ket{x}\proj{x}.
\eeq
Then we have $\tilde{\tau}\in\mc{C}_{\rho,\sigma}$, and, due to the data processing inequality
of the relative entropy,
\beq
D(\tilde{\tau}\|\rho)+\big|D(\tilde{\tau}\|\sigma)-r\big|^+\leq D(\tau\|\rho)+\big|D(\tau\|\sigma)-r\big|^+.
\eeq
So in this case the infimisation in Eq.~\eqref{eq:dmaxvar} can be restricted to $\tau\in\mc{C}_{\rho,\sigma}$.
\end{proof}

\begin{remark}\label{rk:vd}
When $\rho$ and $\sigma$ commute, we have $D_{\alpha}^{\flat}(\rho \| \sigma)=D_{\alpha}^{*}(\rho \| \sigma)=\frac{1}{\alpha-1} \log \tr(\rho^\alpha\sigma^{1-\alpha})$, and Proposition~\ref{prop:dmaxvar} can be equivalently stated as follows. For any probability distribution $p\in\mc{Q}(\mc{X})$, nonnegative function $q$ on $\mc{X}$, and $r \in \mathbb{R}$, we have
\begin{equation}
\sup_{\frac{1}{2} \leq \alpha < 1} \frac{1-\alpha}{\alpha} \big\{D_{\alpha}(p \| q)-r \big\}=\inf_{t \in \mc{Q}(\mc{X})} \left\{ D(t \| p)+\big|D(t \| q)-r \big|^+ \right\},
\end{equation}
where $D_{\alpha}(p \| q)=\frac{1}{\alpha-1} \log\sum_{x\in\mc{X}}p(x)^\alpha q(x)^{1-\alpha}$ is the classical R\'enyi relative entropy.
\end{remark}

\subsection{Commutative case}
\begin{theorem}
\label{thm:commu}
Let $\rho \in \mc{S}(\mc{H})$, $\sigma \in \mc{P}(\mc{H})$
and $r \in \mathbb{R}$. Suppose that $\rho$ and $\sigma$ commute. We have
\begin{equation}
\lim_{n \rar \infty} \frac{-1}{n} \log \big(1-\epsilon(\rho^{\ox n} \| \sigma^{\ox n}, nr) \big)
=\sup_{\frac{1}{2} \leq \alpha < 1} \frac{1-\alpha}{\alpha} \{ D^*_{\alpha}(\rho \| \sigma) -r \}.
\end{equation}
\end{theorem}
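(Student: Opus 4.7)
My plan is to reduce Theorem~\ref{thm:commu} to a purely classical problem and then use the method of types. Since $\rho$ and $\sigma$ commute, I would diagonalise them in a common eigenbasis $\{\ket{x}\}_{x\in\mc{X}}$ and identify them with nonnegative vectors $p,q\in\mathbb{R}^{\mc{X}}$. Dephasing $\tilde\rho$ in this basis is a CPTP map that fixes $\sigma^{\ox n}$, preserves the feasibility constraint $\tilde\rho\leq 2^{nr}\sigma^{\ox n}$, and can only decrease $P$ by data processing; so it suffices to optimise over subnormalised $\tilde p$ on $\mc{X}^{\times n}$ satisfying $\tilde p(x^n)\leq 2^{nr}q^n(x^n)$. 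For such $\tilde p$, $F(p^n,\tilde p)=\sum_{x^n}\sqrt{p^n(x^n)\,\tilde p(x^n)}$ and $1-\epsilon=1-\sqrt{1-F^2}\in[F^2/2,\,F^2]$, so the task becomes pinning down the exponential rate of $F$. By Proposition~\ref{prop:dmaxvar} together with Remark~\ref{rk:vd}, the right-hand side of Eq.~\eqref{eq:dmax} equals
\[
E(r):=\inf_{t\in\mc{Q}(\mc{X})}\big\{D(t\|p)+|D(t\|q)-r|^+\big\},
\]
so the goal reduces to proving that the largest feasible fidelity $F^\star_n$ satisfies $F^\star_n = 2^{-\frac{n}{2}E(r)+o(n)}$.

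For the converse direction ($\liminf_n\frac{-1}{n}\log(1-\epsilon)\geq E(r)$), I would decompose $F$ over type classes. Writing $u_t:=\sum_{x^n\in T_n^t}\tilde p(x^n)$ and using that $p^n$ is constant on $T_n^t$, Cauchy--Schwarz gives $\sum_{x^n\in T_n^t}\sqrt{p^n\,\tilde p}\leq\sqrt{|T_n^t|\,p^n(x^n_t)}\,\sqrt{u_t}$. The two feasibility constraints $u_t\leq 1$ and $u_t\leq|T_n^t|\cdot 2^{nr}q^n(x^n_t)$, together with the type estimates~\eqref{eq:numt}--\eqref{eq:prot}, read as $u_t\leq\min\{1,\,2^{n(r-D(t\|q))+o(n)}\}$ and $|T_n^t|\,p^n(x^n_t)=2^{-nD(t\|p)+o(n)}$, producing a per-type bound $2^{-\frac{n}{2}(D(t\|p)+|D(t\|q)-r|^+)+o(n)}$. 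Summing over the at most $(n+1)^{|\mc{X}|}$ types, cf.~\eqref{eq:typenumber}, yields $F\leq 2^{-\frac{n}{2}E(r)+o(n)}$, as required.

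For achievability ($\limsup_n\frac{-1}{n}\log(1-\epsilon)\leq E(r)$), I would pick $t_n\in\mc{T}_n^{\mc{X}}$ approximating a near-minimiser $t^\star$ of $E(r)$ and concentrate $\tilde p$ on the single class $T_n^{t_n}$ according to the two cases in the variational formula: if $D(t_n\|q)\leq r$, set $\tilde p$ uniform on $T_n^{t_n}$; if $D(t_n\|q)>r$, set $\tilde p(x^n)=2^{nr}q^n(x^n)$ on $T_n^{t_n}$. The type estimates verify feasibility ($\sum\tilde p\leq 1$ and $\tilde p\leq 2^{nr}q^n$ pointwise) and show that $\tilde p$ attains equality in the Cauchy--Schwarz bound of the converse, i.e.\ $F(p^n,\tilde p)=2^{-\frac{n}{2}(D(t_n\|p)+|D(t_n\|q)-r|^+)+o(n)}$. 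Letting $t_n\rar t^\star$ and using the continuity of $t\mapsto D(t\|p)$ and $t\mapsto D(t\|q)$ closes this direction.

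The main obstacle is to recognise the two-regime structure of the optimal $\tilde p$: in the regime $D(t\|q)\leq r$ the subnormalisation $\sum\tilde p\leq 1$ is the binding constraint on the type class (so uniform spreading saturates), whereas in $D(t\|q)>r$ the pointwise feasibility $\tilde p\leq 2^{nr}q^n$ is binding (so $\tilde p$ must follow $q^n$ on the type). The $|D(t\|q)-r|^+$ appearing in $E(r)$ is precisely what interpolates between these two regimes, and the Cauchy--Schwarz bound on each type packages both of them into the single quantity $\min\{1,\,2^{n(r-D(t\|q))}\}$, which is why the converse and achievability meet. Invoking Proposition~\ref{prop:dmaxvar} once more rewrites $E(r)$ as $\sup_{\frac{1}{2}\leq\alpha<1}\frac{1-\alpha}{\alpha}\{D^*_\alpha(\rho\|\sigma)-r\}$ and completes the proof.
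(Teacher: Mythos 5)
Your argument is correct and follows essentially the same route as the paper: reduce to the commuting (classical) case, express the optimal fidelity over type classes, sandwich it between $\max_t B_n(t)$ and $(n+1)^{|\mc{X}|}\max_t B_n(t)$, and invoke Proposition~\ref{prop:dmaxvar}/Remark~\ref{rk:vd} to match the variational form. The only cosmetic difference is that the paper packages the per-type optimisation by first applying the explicit channel $\mc{M}_n:X\mapsto\sum_t\tr(\Pi_n^t X\Pi_n^t)\pi_n(t)$ to force $\tilde\rho_n$ into the form $\sum_t\tilde P_n(t)\pi_n(t)$, which is exactly the equality case of your per-type Cauchy--Schwarz step; also note that in your achievability construction you should either split cases at $2^{nr}Q_n(t_n)\gtrless 1$ rather than at $D(t_n\|q)\gtrless r$, or shrink $\tilde p$ by the polynomial factor $(n+1)^{-|\mc{X}|}$, to guarantee the pointwise constraint $\tilde p\leq 2^{nr}q^n$ near the boundary --- this does not affect the exponent.
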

\begin{proof}
From the definition of $\epsilon(\rho^{\ox n} \| \sigma^{\ox n}, nr)$, we see that it suffices to consider the exponential rate of decay of the following quantity:
\begin{equation}
\label{eq:equiva}
A_n:=\max \{F(\rho^{\ox n}, \tilde{\rho}_n)~|~\tilde{\rho}_n \in\mc{S}_\leq(\mc{H}^{\ox n}),
~\tilde{\rho}_n \leq 2^{nr} \sigma^{\ox n}\}.
\end{equation}
Because $\rho$ and $\sigma$ commute, there exists an orthonormal basis $\{\ket{x}\}_{x \in \mc{X}}$ of $\mc{H}$ such that $\rho$ and $\sigma$ can be simultaneously diagonalized, i.e.,
\beq
\rho=\sum_{x \in \mc{X}} p(x) \proj{x}, \quad \sigma=\sum_{x \in \mc{X}} q(x)\proj{x}.
\eeq
Correspondingly, $\rho^{\ox n}$ and $\sigma^{\ox n}$ can be written as
\beq
\rho^{\ox n}=\sum_{x^n \in \mc{X}^{\times n}} p^n(x^n)\proj{x^n}, \quad
\sigma^{\ox n}=\sum_{x^n \in \mc{X}^{\times n}} q^n(x^n)\proj{x^n},
\eeq
where $p^n(x^n)=\prod_{i=1}^np(x_i)$, $q^n(x^n)=\prod_{i=1}^nq(x_i)$ and
$\ket{x^n}=\bigotimes_{i=1}^n\ket{x_i}$. We will employ the method of types. For
a type $t\in\mc{T}^\mc{X}_n$, let $P_n(t)=\sum_{x^n\in T_n^t}p^n(x^n)$ be the
total weight of the distribution $p^n$ falling in the type class $T_n^t$, and
similarly let $Q_n(t)=\sum_{x^n\in T_n^t}q^n(x^n)$. Then we can write
\beq
\label{eq:stateform}
\rho^{\ox n}=\sum_{t \in \mc{T}^{\mc{X}}_n} P_n(t) \pi_n(t), \quad
\sigma^{\ox n}=\sum_{t \in \mc{T}^{\mc{X}}_n} Q_n(t) \pi_n(t),
\eeq
where $\pi_n(t)=\frac{1}{|T_n^t|}\sum_{x^n\in T_n^t}\proj{x_n}$ is the maximally
mixed state on the subspace $\mc{H}_n^t:=\Span\{\ket{x^n}~|~x^n\in T_n^t \}$.
Note that $\mc{H}^{\ox n}=\bigoplus_{t \in \mc{T}_n^{\mc{X}}} \mc{H}_n^t$
is a decomposition of $\mc{H}^{\ox n}$ into the direct sum of orthogonal
subspaces.

We now show that the state $\tilde{\rho}_n$ in the maximization in
Eq.~(\ref{eq:equiva}) can be restricted to the following form:
\beq\label{eq:optate}
\tilde{\rho}_n= \sum_{t \in \mc{T}_n^{\mc{X}}} \tilde{P}_n(t)\pi_n(t), \text{~~with~~}
\sum_{t \in \mc{T}_n^{\mc{X}}}\tilde{P}_n(t) \leq 1.
\eeq
Let $\Pi_n^t$ be the projection onto the subspace $\mc{H}_n^t$. We construct the
following CPTP map:
\beq
\mc{M}_n:\ X\ \ \mapsto\ \ \sum_{t \in \mc{T}_n^{\mc{X}}} \tr(\Pi_n^tX\Pi_n^t)\pi_n(t).
\eeq
Obviously, $\mc{M}_n(\rho^{\ox n})=\rho^{\ox n}$ and $\mc{M}_n(\sigma^{\ox n})=\sigma^{\ox n}$.
Thus, for any state $\tilde{\rho}_n \in\mc{S}_\leq(\mc{H}^{\ox n})$ such that
$\tilde{\rho}_n \leq 2^{nr} \sigma^{\ox n}$, we have
\begin{align}
&F(\rho^{\ox n}, \mc{M}_n(\tilde{\rho}_n)) \geq F(\rho^{\ox n}, \tilde{\rho}_n), \label{eq:type1}\\
&\mc{M}_n(\tilde{\rho}_n) \leq 2^{nr} \sigma^{\ox n} \label{eq:type2},
\end{align}
where the first inequality is because the fidelity is non-decreasing under CPTP maps,
and the second one is because $\mc{M}_n$ keeps positivity. So, if $\tilde{\rho}_n$
is not of the form as in Eq.~\eqref{eq:optate}, an action of $\mc{M}_n$ will make it
so, and Eqs.~\eqref{eq:type1} and \eqref{eq:type2} ensure that we can restrict the state
$\tilde{\rho}_n$ in the maximization of Eq.~(\ref{eq:equiva}) to be of such a form.

With Eqs.~\eqref{eq:stateform} and \eqref{eq:optate}, Eq.~(\ref{eq:equiva}) translates to
\begin{equation}
\label{eq:optimal}
A_n=\max \left\{\sum_{t \in \mc{T}_n^\mc{X}} \sqrt{P_n(t)\tilde{P}_n(t)}
~\Big|~\sum_{t \in \mc{T}^{\mc{X}}_n} \tilde{P}_n(t)\leq 1,
~(\forall t \in \mc{T}_n^\mc{X})~\tilde{P}_n(t) \leq 2^{nr} Q_n(t) \right\}.
\end{equation}
Let
\begin{align}
B_n(t)
&= \max \left\{\sqrt{P_n(t)\tilde{P}_n(t)}~\Big|~\tilde{P}_n(t)\leq\min\{2^{nr}Q_n(t),1\}\right\}\nb\\
&=\sqrt{P_n(t)}\sqrt{\min\{2^{nr}Q_n(t), 1 \}}
\end{align}
Then, we have
\begin{equation}
\label{eq:relation}
\max_{t \in \mc{T}^{\mc{X}}_n} B_n(t) \leq  A_n \leq (n+1)^{|\mc{X}|} \max_{t \in \mc{T}^{\mc{X}}_n} B_n(t),
\end{equation}
where the second inequality is due to Eq.~(\ref{eq:typenumber}).
The remaining thing is to estimate $\max_{t \in \mc{T}_n^{\mc{X}}} B_n(t)$. Using Eq.~(\ref{eq:proset}), we bound it as
\begin{align}
\label{eq:Bnt1}
      \max_{t \in \mc{T}^{\mc{X}}_n} B_n(t)
&\geq \max_{t \in \mc{T}^{\mc{X}}_n} \left\{\sqrt{(n+1)^{-|\mc{X}|}
      2^{-nD(t \| p)}}\ \min\Big\{\sqrt{(n+1)^{-|\mc{X}|}
      2^{-n(D(t \| q)-r)}}, 1 \Big\} \right\},  \\
\label{eq:Bnt2}
      \max_{t \in \mc{T}^{\mc{X}}_n} B_n(t)
&\leq \max_{t \in \mc{T}^{\mc{X}}_n} \left\{\sqrt{2^{-nD(t \| p)}}\
      \min\left\{ \sqrt{2^{-n(D(t \| q)-r)}}, 1 \right\} \right\}.
\end{align}
Eq.~(\ref{eq:relation}), Eq.~(\ref{eq:Bnt1}) and Eq.~(\ref{eq:Bnt2}) give
\begin{align}
\frac{-1}{n} \log A_n
&\geq \frac{1}{2}\min_{t \in \mc{T}^{\mc{X}}_n}
      \left\{D(t \| p)+\big|D(t \| q)-r\big|^+\right\} -|\mc{X}| \frac{\log (n+1)}{n}, \label{eq:Ant11} \\
\frac{-1}{n} \log A_n
&\leq \frac{1}{2}\min_{t \in \mc{T}^{\mc{X}}_n}
      \left\{D(t \| p)+\Big|D(t \| q)-r+|\mc{X}| \frac{\log (n+1)}{n}\Big|^+ \right\}
      +\frac{1}{2}|\mc{X}| \frac{\log (n+1)}{n}  \nb\\
&\leq \frac{1}{2}\min_{t \in \mc{T}^{\mc{X}}_n}
      \left\{D(t \| p)+\big|D(t \| q)-r\big|^+\right\} +|\mc{X}| \frac{\log (n+1)}{n}. \label{eq:Ant22}
\end{align}
For a fixed nonnegative function $b$ on $\mc{X}$, the function $a\mapsto D(a\|b)$ defined on $\mc{Q}(\mc{X})$ is $+\infty$ when $\supp(a)\nsubseteq\supp(b)$ and otherwise it is uniformly continuous. On the other hand, $\mc{T}^{\mc{X}}_n$ is dense in the probability simplex $\mc{Q}(\mc{X})$ as $n \rar \infty$. So, by taking the limit, we get from Eqs.~\eqref{eq:Ant11} and \eqref{eq:Ant22} that
\begin{align}
\lim_{n \rar \infty} \frac{-1}{n} \log A_n
&=\frac{1}{2}\min_{t \in \mc{Q}(\mc{X})} \left\{ D(t \| p)+\big|D(t \| q)-r\big|^+\right\} \nb\\
&=\frac{1}{2}\min_{\tau \in \mc{C}_{\rho,\sigma}} \left\{ D(\tau \| \rho)
    +\big|D(\tau \| \sigma)-r\big|^+\right\} \nb\\
&=\frac{1}{2}\sup_{\frac{1}{2} \leq \alpha < 1} \frac{1-\alpha}{\alpha}
    \big\{D_{\alpha}^*(\rho \| \sigma)-r\big\}, \label{eq:conclusion}
\end{align}
where the last line is by Proposition~\ref{prop:dmaxvar} and Remark~\ref{rk:vd}. Note that for commutative $\rho$ and $\sigma$ we have $D_{\alpha}^{\flat}(\rho \| \sigma)=D_{\alpha}^*(\rho \| \sigma)$. At last, making use of the relation
\beq
1-P=1-\sqrt{1-F^2}=\frac{1}{2}F^2+o(F^2)
\eeq
between the purified distance and the fidelity, we obtain the claimed result of
Theorem~\ref{thm:commu} from Eq.~\eqref{eq:conclusion}.
\end{proof}

\subsection{General case}
Now, we are ready to complete the proof of Theorem~\ref{theorem:dmax}. The following Proposition~\ref{prop:dmaxpinching} allows us to lift the above result of the commutative case to deal with the general case. At first, we prove a technical result of Lemma~\ref{lem:likeUlman}, which is mainly due to Uhlmann's theorem.

\begin{lemma}
\label{lem:likeUlman}
Let $\mc{H}=\bigoplus_i\mc{H}_i$ be a decomposition of $\mc{H}$ into mutually orthogonal subspaces.
Let $\rho, \sigma \in \mc{S}_{\leq}(\mc{H})$ be such that $\rho=\sum_i \rho_i$ and
$\sigma=\sum_i \sigma_i$, with $\supp (\rho_i)\subseteq \mc{H}_i$ and $\supp (\sigma_i) \subseteq \mc{H}_i$. Denote the projection onto $\mc{H}_i$ as $\Pi_i$. Then for any state
$\tilde{\rho} \in \mc{S}_{\leq}(\mc{H})$ satisfying $\rho=\sum_i \Pi_i
\tilde{\rho} \Pi_i$, there exists a state $\tilde{\sigma} \in \mc{S}_{\leq}(\mc{H})$
such that
\beq
\sigma=\sum_i \Pi_i \tilde{\sigma} \Pi_i \text{\quad~and~\quad} F(\rho, \sigma)=F(\tilde{\rho}, \tilde{\sigma}).
\eeq
\end{lemma}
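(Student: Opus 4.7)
The plan is to construct $\tilde\sigma$ by applying Uhlmann's theorem block by block along the orthogonal decomposition $\mc{H} = \bigoplus_i \mc{H}_i$. Note first that the hypothesis $\rho = \sum_i \Pi_i \tilde\rho \Pi_i$, combined with $\rho = \sum_i \rho_i$ and $\supp(\rho_i) \subseteq \mc{H}_i$, forces $\Pi_i \tilde\rho \Pi_i = \rho_i$ for every $i$ (apply $\Pi_j$ on both sides of the first identity and use $\Pi_j \Pi_i = \delta_{ij}\Pi_i$); the pinching of $\tilde\rho$ with respect to $\{\Pi_i\}_i$ is therefore precisely $\rho$, and in particular $\tr\tilde\rho = \tr\rho$.

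I would fix an auxiliary space $\mc{H}''$ with $\dim \mc{H}'' \geq \dim \mc{H}$ and a purification $\ket{\tilde\psi} \in \mc{H} \ox \mc{H}''$ of $\tilde\rho$. The block slice $\ket{\tilde\psi_i} := (\Pi_i \ox \1_{\mc{H}''}) \ket{\tilde\psi}$ lies in $\mc{H}_i \ox \mc{H}''$ and satisfies $\tr_{\mc{H}''} \ketbra{\tilde\psi_i}{\tilde\psi_i} = \Pi_i \tilde\rho \Pi_i = \rho_i$, so it is a purification of $\rho_i$ as a subnormalized state on $\mc{H}_i$. Invoking Uhlmann's theorem inside each block, I would then choose a purification $\ket{\xi_i} \in \mc{H}_i \ox \mc{H}''$ of $\sigma_i$ satisfying $\braket{\tilde\psi_i}{\xi_i} = \|\sqrt{\rho_i}\sqrt{\sigma_i}\|_1 \geq 0$ (the phase freedom of a purification allows this). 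The candidate output is $\tilde\sigma := \tr_{\mc{H}''} \ketbra{\xi}{\xi}$ with $\ket{\xi} := \sum_i \ket{\xi_i} \in \mc{H} \ox \mc{H}''$.

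The block orthogonality $\mc{H}_i \perp \mc{H}_j$ for $i \neq j$ makes all cross terms vanish in two critical places: first, $\Pi_i \tilde\sigma \Pi_i = \tr_{\mc{H}''} \ketbra{\xi_i}{\xi_i} = \sigma_i$, so $\sum_i \Pi_i \tilde\sigma \Pi_i = \sigma$ as required; second, $\braket{\tilde\psi}{\xi} = \sum_i \braket{\tilde\psi_i}{\xi_i} = \sum_i \|\sqrt{\rho_i}\sqrt{\sigma_i}\|_1 = \|\sqrt\rho \sqrt\sigma\|_1$, the last step using the block-diagonal identity $\sqrt\rho \sqrt\sigma = \bigoplus_i \sqrt{\rho_i}\sqrt{\sigma_i}$. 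Since also $\tr \tilde\sigma = \braket{\xi}{\xi} = \sum_i \tr\sigma_i = \tr \sigma$, the subnormalization correction $\sqrt{(1-\tr\tilde\rho)(1-\tr\tilde\sigma)}$ in the definition of $F$ equals $\sqrt{(1-\tr\rho)(1-\tr\sigma)}$. Combining Uhlmann's inequality $\|\sqrt{\tilde\rho}\sqrt{\tilde\sigma}\|_1 \geq |\braket{\tilde\psi}{\xi}|$ with the reverse inequality from monotonicity of fidelity under the pinching channel $\mc{M}(X) = \sum_i \Pi_i X \Pi_i$ then yields $F(\tilde\rho,\tilde\sigma) = F(\rho,\sigma)$. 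The main (but mild) obstacle I anticipate is simply careful bookkeeping to confirm the cross-term vanishing and the matching of subnormalization corrections; the block-diagonal structure should make both routine.
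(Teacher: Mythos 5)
Your proof is correct and follows essentially the same route as the paper's: purify $\tilde\rho$, invoke Uhlmann's theorem block by block to build purifications $\ket{\xi_i}$ of each $\sigma_i$, sum them to obtain $\ket{\xi}$ and $\tilde\sigma = \tr_{\mc{H}''}\proj{\xi}$, and sandwich $F(\tilde\rho,\tilde\sigma)$ between the Uhlmann lower bound and the monotonicity-under-pinching upper bound. The only cosmetic difference is that the paper applies Uhlmann to the pair $(\tilde\rho,\sigma_i)$ on the full space while you apply it to $(\rho_i,\sigma_i)$ inside the block; these coincide since $\bar F(\tilde\rho,\sigma_i)=\bar F(\rho_i,\sigma_i)$ when $\supp(\sigma_i)\subseteq\mc{H}_i$, a fact you in effect use via $\braket{\tilde\psi}{\xi_i}=\braket{\tilde\psi_i}{\xi_i}$.
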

\begin{proof}
Let $\bar{F}(X,Y):=\|\sqrt{X}\sqrt{Y}\|_1$ for any $X,Y\in\mc{P}(\mc{H})$. By direct calculation, we see that
\begin{equation}
\label{eq:lem2-1}
F(\rho, \sigma)-\sqrt{(1-\tr\rho)(1-\tr\sigma)}
=\sum_i \bar{F}(\Pi_i \tilde{\rho} \Pi_i, \sigma_i)=\sum_i \bar{F}(\tilde{\rho}, \sigma_i).
\end{equation}
Let $\ket{\psi} \in \mc{H} \ox \mc{H}'$ be a purification of $\tilde{\rho}$, where $\mc{H} \cong\mc{H}'$ (that is, $\tilde{\rho}=\tr_{\mc{H}'}\proj{\psi}$). By Uhlmann's theorem~\cite{Uhlmann1976transition}, for each $\sigma_i$, there exists a purification $\{ \ket{\phi_i} \} \in \mc{H} \ox \mc{H}'$, such that
\begin{equation}
\label{eq:lem2-2}
\bar{F}(\tilde{\rho}, \sigma_i)=\braket{\psi}{\phi_i}.
\end{equation}
Denote $\ket{\phi}=\sum_i \ket{\phi_i}$, we will show that $\tilde{\sigma}:=\tr_{\mc{H}'}\proj{\phi}$ satisfies our requirement. Since $\Pi_i\ket{\phi}=\ket{\phi_i}$, we can easily check that $\sigma=\sum_i \Pi_i \tilde{\sigma} \Pi_i$. So it remains to show $F(\rho, \sigma)=F(\tilde{\rho}, \tilde{\sigma})$. Eq.~(\ref{eq:lem2-1}) and Eq.~(\ref{eq:lem2-2}) give that
\begin{align}
F(\rho, \sigma) &= \sum_i \braket{\psi}{\phi_i}+\sqrt{(1-\tr\rho)(1-\tr\sigma)} \nb\\
&=\bar{F}(\proj{\psi}, \proj{\phi})+\sqrt{(1-\tr\proj{\psi})(1-\tr\proj{\phi})} \nb\\
&=F(\proj{\psi}, \proj{\phi}). \label{eq:purification}
\end{align}
On the other hand, by the monotonicity of fidelity under the action of quantum channels, we have
\beq\label{eq:monofidelity}
F(\rho, \sigma)\geq F(\tilde{\rho}, \tilde{\sigma}) \geq F(\proj{\psi}, \proj{\phi}).
\eeq
Eq.~\eqref{eq:purification} and Eq.~\eqref{eq:monofidelity} lead to $F(\rho, \sigma)=F(\tilde{\rho}, \tilde{\sigma})$, completing the proof.
\end{proof}

\begin{proposition}
\label{prop:dmaxpinching}
Let $\mc{H}=\bigoplus_{i \in \mc{I}}\mc{H}_i$ be a decomposition of $\mc{H}$ into mutually
orthogonal subspaces. Let $\rho \in \mc{S}(\mc{H})$, and $\sigma \in \mc{P}(\mc{H})$
be such that $\sigma=\sum_{i \in \mc{I}} \sigma_i$ with $\supp (\sigma_i) \subseteq \mc{H}_i$. Denote the projection onto $\mc{H}_i$ as $\Pi_i$ and let $\mc{E}$ be the pinching map given by $\mc{E}(X)=\sum_{i \in \mc{I}} \Pi_i X \Pi_i$. Then for any $0\leq\epsilon\leq 1$ we have
\begin{equation}
D_{\rm{max}}^{\epsilon}(\mc{E}(\rho) \| \sigma) \leq D_{\rm{max}}^{\epsilon}(\rho \| \sigma)
\leq D_{\rm{max}}^{\epsilon}(\mc{E}(\rho) \| \sigma)+\log |\mc{I}|.
\end{equation}
\end{proposition}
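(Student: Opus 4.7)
The plan is to prove the two inequalities separately, with the lower bound following straightforwardly from data processing and the upper bound requiring a combination of Lemma~\ref{lem:likeUlman} with the pinching inequality.

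For the lower bound $D_{\rm{max}}^{\epsilon}(\mc{E}(\rho) \| \sigma) \leq D_{\rm{max}}^{\epsilon}(\rho \| \sigma)$, I would take any $\tilde{\rho}\in\mc{B}^{\epsilon}(\rho)$ with $\tilde{\rho}\leq 2^{\lambda}\sigma$ and push it through the channel $\mc{E}$. Since $\mc{E}$ is CPTP and preserves the block-diagonal operator $\sigma$ (i.e., $\mc{E}(\sigma)=\sigma$), the image $\mc{E}(\tilde{\rho})$ satisfies $\mc{E}(\tilde{\rho})\leq 2^{\lambda}\sigma$, and monotonicity of the purified distance gives $P(\mc{E}(\tilde{\rho}),\mc{E}(\rho))\leq P(\tilde{\rho},\rho)\leq\epsilon$. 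Thus $\mc{E}(\tilde{\rho})$ is a feasible smoothed state for $\mc{E}(\rho)$, giving the bound.

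For the upper bound, suppose $D_{\rm{max}}^{\epsilon}(\mc{E}(\rho)\|\sigma)\leq\mu$ and pick $\tilde{\eta}\in\mc{B}^{\epsilon}(\mc{E}(\rho))$ with $\tilde{\eta}\leq 2^{\mu}\sigma$. The first step is to replace $\tilde{\eta}$ by $\mc{E}(\tilde{\eta})$: because $\mc{E}$ is idempotent and CPTP, we still have $P(\mc{E}(\tilde{\eta}),\mc{E}(\rho))\leq\epsilon$, and $\mc{E}(\tilde{\eta})\leq 2^{\mu}\mc{E}(\sigma)=2^{\mu}\sigma$; so we may assume $\tilde{\eta}$ itself is block-diagonal with respect to $\bigoplus_{i}\mc{H}_i$. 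Now apply Lemma~\ref{lem:likeUlman} with the lemma's $\rho,\sigma,\tilde{\rho}$ taken to be $\mc{E}(\rho)$, $\tilde{\eta}$, and $\rho$ respectively: the hypotheses hold since $\mc{E}(\rho)$ and $\tilde{\eta}$ are both block-diagonal and $\mc{E}(\rho)=\sum_i\Pi_i\rho\Pi_i$ by construction. The lemma then produces a subnormalized state, which I will call $\tilde{\rho}$, satisfying $\mc{E}(\tilde{\rho})=\tilde{\eta}$ together with the fidelity-preservation identity $F(\rho,\tilde{\rho})=F(\mc{E}(\rho),\tilde{\eta})$; equivalently $P(\rho,\tilde{\rho})\leq\epsilon$, so $\tilde{\rho}\in\mc{B}^{\epsilon}(\rho)$.

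It then remains to show $\tilde{\rho}\leq 2^{\mu+\log|\mc{I}|}\sigma$. This is exactly where the pinching inequality from the preliminaries enters: since $\mc{E}$ is the pinching associated with a Hermitian having $|\mc{I}|$ distinct spectral projections $\{\Pi_i\}_{i\in\mc{I}}$, we have $\tilde{\rho}\leq|\mc{I}|\,\mc{E}(\tilde{\rho})=|\mc{I}|\,\tilde{\eta}\leq|\mc{I}|\cdot 2^{\mu}\sigma=2^{\mu+\log|\mc{I}|}\sigma$. Combining with $\tilde{\rho}\in\mc{B}^{\epsilon}(\rho)$ yields $D_{\rm{max}}^{\epsilon}(\rho\|\sigma)\leq\mu+\log|\mc{I}|$, as desired.

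The main obstacle is the upper bound; the nontrivial ingredient is constructing a candidate smoothed state $\tilde{\rho}$ for $\rho$ out of a candidate $\tilde{\eta}$ for $\mc{E}(\rho)$ without worsening the fidelity, which is precisely the content of Lemma~\ref{lem:likeUlman} (an Uhlmann-type argument). The loss of the factor $|\mc{I}|$ is then unavoidable and comes cleanly from the pinching inequality. The lower bound is essentially free once one observes that $\sigma$ is fixed by $\mc{E}$.
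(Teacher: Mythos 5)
Your proof is correct and follows essentially the same route as the paper: reduce to a block-diagonal candidate via an extra pinching, invoke Lemma~\ref{lem:likeUlman} (with the same role assignment for $\rho$, $\sigma$, $\tilde\rho$) to lift the smoothed state back while preserving fidelity, and close with the pinching inequality to pay the $\log|\mc{I}|$ overhead. The only cosmetic difference is that you spell out the data-processing argument for the first inequality, which the paper simply cites as known.
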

\begin{proof}
The first inequality is known in the literature (see, e.g., \cite{Renner2005security,Tomamichel2015quantum}), and can be verified directly from the definition. We now prove the second inequality. It holds trivially if $D_{\rm{max}}^{\epsilon}(\mc{E}(\rho) \| \sigma)=+\infty$. So, we assume that $D_{\rm{max}}^{\epsilon}(\mc{E}(\rho) \| \sigma)<+\infty$. In this case, by the definition of the smooth max-relative entropy, we know that there exists a state $\bar{\rho}\in\mc{S}_{\leq}(\mc{H})$ such that
\begin{equation}
\label{eq:definitionmax}
P(\mc{E}(\rho),\bar{\rho}) \leq \epsilon, \quad \bar{\rho} \leq 2^{D_{\rm{max}}^{\epsilon}(\mc{E}(\rho) \| \sigma)}\sigma.
\end{equation}
Without loss of generality, we assume that $\bar{\rho}$ can be written as $\bar{\rho}=\sum_{i \in \mc{I}} \bar{\rho}_i$ with $\supp (\bar{\rho}_i) \subseteq \mc{H}_i$. Otherwise, applying $\mc{E}$ to $\bar{\rho}$ will make it so and still satisfy the above inequalities.
Then, according to Lemma~\ref{lem:likeUlman}, there exists a state $\tilde{\rho}\in\mc{S}_{\leq}(\mc{H})$ such that
\begin{equation}
\label{eq:applylemma}
\bar{\rho}= \mc{E}(\tilde{\rho}),\quad F(\rho,\tilde{\rho})=F(\mc{E}(\rho),\bar{\rho}).
\end{equation}
The combination of Eq.~(\ref{eq:definitionmax}) and Eq.~(\ref{eq:applylemma}) leads to
\begin{equation}
\label{eq:dmaxpp}
P(\rho,\tilde{\rho})=P(\mc{E}(\rho),\bar{\rho}) \leq \epsilon
\end{equation}
and
\begin{equation}
\label{eq:dmaxpd}
\tilde{\rho} \leq |\mc{I}|\mc{E}(\tilde{\rho})
\leq 2^{D_{\rm{max}}^{\epsilon}(\mc{E}(\rho) \| \sigma)+\log |\mc{I}|}\sigma,
\end{equation}
where for the latter inequality we have used the pinching inequality.
Eq.~(\ref{eq:dmaxpp}) and Eq.~(\ref{eq:dmaxpd}) imply that
\begin{equation}
D_{\rm{max}}^{\epsilon}(\rho \| \sigma)
\leq D_{\rm{max}}^{\epsilon}(\mc{E}(\rho) \| \sigma)+\log |\mc{I}|,
\end{equation}
and we are done.
\end{proof}

\begin{remark}
\label{remark:smoothqpinching}
An equivalent statement of Proposition~\ref{prop:dmaxpinching} is that, under the same conditions,
for any $\lambda\in\mathbb{R}$ we have
\beq
\label{eq:smoothqpinching}
    \epsilon(\mc{E}(\rho) \| \sigma, \lambda)
\leq\epsilon(\rho \| \sigma, \lambda)
\leq\epsilon(\mc{E}(\rho) \| \sigma, \lambda-\log|\mc{I}|).
\eeq
\end{remark}

\begin{remark}
\label{remark:dmaxpinching}
A special case of Proposition~\ref{prop:dmaxpinching} which may be of particular interest, is when $\mc{E}=\mc{E}_\sigma$ being the pinching map associated with $\sigma$. In this case, for any $0\leq\epsilon\leq 1$ we have
\begin{equation}
D_{\rm{max}}^{\epsilon}(\mc{E}_\sigma(\rho) \| \sigma) \leq D_{\rm{max}}^{\epsilon}(\rho \| \sigma)
\leq D_{\rm{max}}^{\epsilon}(\mc{E}_\sigma(\rho) \| \sigma)+\log |v(\sigma)|.
\end{equation}
\end{remark}

\medskip
We also need two simple properties of the smoothing quantity $\epsilon(\rho \| \sigma, \lambda)$, which are stated in the following Lemma~\ref{lem:smoothq}, and can be verified directly from the definition.
\begin{lemma}
\label{lem:smoothq}
Let $\rho \in \mc{S}(\mc{H})$ and $\sigma \in \mc{P}(\mc{H})$. we have
\begin{enumerate}[(i)]
  \item Monotonicity: the function $\lambda\mapsto\epsilon(\rho \| \sigma, \lambda)$ is monotonically non-increasing;
  \item Data processing inequality: $\epsilon(\mc{N}(\rho) \| \mc{N}(\sigma), \lambda) \leq \epsilon(\rho \| \sigma, \lambda)$ holds for any CPTP map $\mc{N}$ and any $\lambda\in\mathbb{R}$.
\end{enumerate}
\end{lemma}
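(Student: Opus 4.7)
The plan is to verify both parts directly from the definition
\[
\epsilon(\rho \| \sigma, \lambda)=\min\bigl\{P(\rho,\tilde{\rho})~\big|~\tilde{\rho}\in\mc{S}_{\leq}(\mc{H}),\ \tilde{\rho}\leq 2^\lambda\sigma\bigr\}
\]
in Eq.~\eqref{eq:invdmax-def}. The common idea for both parts is to take a minimizer of one of the smoothing problems and exhibit it (or its image under the channel) as a feasible candidate for the other problem, then invoke elementary properties of the positive semidefinite order and of the purified distance. Attainment of the minimum is automatic, since for each fixed $\lambda$ the feasible set is a norm-bounded, closed subset of $\mc{P}(\mc{H})$, hence compact in finite dimension, and $\tilde{\rho}\mapsto P(\rho,\tilde{\rho})$ is continuous.

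For (i), I take $\lambda_1\leq\lambda_2$ and note that $2^{\lambda_1}\sigma\leq 2^{\lambda_2}\sigma$, so any $\tilde\rho\in\mc{S}_{\leq}(\mc{H})$ with $\tilde\rho\leq 2^{\lambda_1}\sigma$ also satisfies $\tilde\rho\leq 2^{\lambda_2}\sigma$. The feasible set at $\lambda_2$ therefore contains the one at $\lambda_1$, and taking the minimum of $P(\rho,\cdot)$ over the larger set yields a value no greater than over the smaller set. This gives $\epsilon(\rho\|\sigma,\lambda_2)\leq\epsilon(\rho\|\sigma,\lambda_1)$.

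For (ii), I let $\tilde\rho$ attain the minimum in $\epsilon(\rho\|\sigma,\lambda)$, so that $\tilde\rho\in\mc{S}_{\leq}(\mc{H})$, $\tilde\rho\leq 2^\lambda\sigma$ and $P(\rho,\tilde\rho)=\epsilon(\rho\|\sigma,\lambda)$. Since $\mc{N}$ is positive and trace-non-increasing on $\mc{S}_{\leq}$, one has $\mc{N}(\tilde\rho)\in\mc{S}_{\leq}(\mc{H}')$ and $\mc{N}(\tilde\rho)\leq 2^\lambda\mc{N}(\sigma)$, so $\mc{N}(\tilde\rho)$ is feasible for the smoothing problem defining $\epsilon(\mc{N}(\rho)\|\mc{N}(\sigma),\lambda)$. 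The monotonicity of the fidelity under CPTP maps (equivalently, of the purified distance) gives $P(\mc{N}(\rho),\mc{N}(\tilde\rho))\leq P(\rho,\tilde\rho)$, and combining these two observations yields $\epsilon(\mc{N}(\rho)\|\mc{N}(\sigma),\lambda)\leq\epsilon(\rho\|\sigma,\lambda)$.

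No step poses a real obstacle: both arguments are one-line consequences of the definition once the standard facts (monotonicity of multiplication by positive scalars in the PSD order, positivity of CPTP maps, and DPI for the purified distance) are invoked. The only thing worth being careful about is that $\mc{N}(\tilde\rho)$ should be treated as a subnormalized state rather than a normalized one, which is exactly the reason the smoothing set $\mc{S}_{\leq}$ is used in Eq.~\eqref{eq:invdmax-def} in the first place.
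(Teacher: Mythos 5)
Your proof is correct and proceeds exactly in the spirit the paper intends, since the paper offers no proof beyond the remark that the lemma ``can be verified directly from the definition''; your direct verification via feasible-set containment for (i) and pushing a minimizer through $\mc{N}$ while invoking positivity of CPTP maps and the DPI for purified distance for (ii), together with the compactness remark justifying attainment of the minimum, is the natural way to fill in those details.
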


\medskip
\begin{proofof}[of Theorem \ref{theorem:dmax}]
Let $\delta>0$ be arbitrary and let $M_{\delta}$ be an integer such that $\frac{\log v(\sigma^{\ox m})}{m} \leq \delta$ for all $m \geq M_\delta$. Fix such an integer $m$. Then for any integer $n$ we write $n=km+l$ with $0\leq l <m$. On the one hand, for $n$ big enough such that $nr\leq km(r+\delta)$, by Lemma~\ref{lem:smoothq} we have
\begin{align}
      \epsilon\big(\rho^{\ox n} \| \sigma^{\ox n}, nr\big)
&\geq \epsilon\left(\left(\rho^{\ox m}\right)^{\ox k} \big\|
              \left(\sigma^{\ox m}\right)^{\ox k}, km(r+\delta)\right) \nb\\
&\geq \epsilon\left(\left(\mc{E}_{\sigma^{\ox m}}(\rho^{\ox m})\right)^{\ox k}\big\|
              \left(\sigma^{\ox m}\right)^{\ox k}, km(r+\delta)\right). \label{eq:dmaxf11}
\end{align}
On the other hand, for $n$ big enough such that $nr\geq (k+1)m(r-\delta)$, we also have
\begin{align}
      \epsilon\big(\rho^{\ox n} \| \sigma^{\ox n}, nr\big)
&\leq \epsilon\left(\left(\rho^{\ox m}\right)^{\ox (k+1)} \big\|
              \left(\sigma^{\ox m}\right)^{\ox (k+1)}, (k+1)m(r-\delta)\right) \nb\\
&\leq \epsilon\left(\left(\mc{E}_{\sigma^{\ox m}}(\rho^{\ox m})\right)^{\ox (k+1)}\big\|
              \left(\sigma^{\ox m}\right)^{\ox (k+1)}, (k+1)m(r-2\delta)\right), \label{eq:dmaxf12}
\end{align}
where the first inequality is by Lemma~\ref{lem:smoothq} and the second inequality follows
from Proposition~\ref{prop:dmaxpinching} (in the form of Eq.~\eqref{eq:smoothqpinching}). Because $\mc{E}_{\sigma^{\ox m}}(\rho^{\ox m})$ commutes with $\sigma^{\ox m}$, Theorem~\ref{thm:commu} applies to the last lines of Eq.~\eqref{eq:dmaxf11} and Eq.~\eqref{eq:dmaxf12}. This leads to
\begin{align}
&\liminf_{n \rar \infty} \frac{-1}{n}\log\big(1-\epsilon(\rho^{\ox n} \| \sigma^{\ox n}, nr)\big) \geq \sup_{\frac{1}{2} \leq \alpha < 1} \frac{1-\alpha}{\alpha} \left\{\frac{D_{\alpha}^*(\mc{E}_{\sigma^{\ox m}}(\rho^{\ox m}) \| \sigma^{\ox m})}{m} -r-\delta \right\}, \label{eq:dmaxf21} \\
&\limsup_{n \rar \infty}\!\frac{-1}{n}\log\big(1-\epsilon(\rho^{\ox n} \| \sigma^{\ox n}, nr)\big) \leq\! \sup_{\frac{1}{2} \leq \alpha < 1} \frac{1-\alpha}{\alpha} \left\{\!\frac{D_{\alpha}^*(\mc{E}_{\sigma^{\ox m}}(\rho^{\ox m}) \| \sigma^{\ox m})}{m} -r+2\delta\! \right\}. \label{eq:dmaxf22}
\end{align}
Since $\delta>0$ and $m\geq M_{\delta}$ are arbitrary, first letting $m\rar\infty$ and then letting $\delta\rar 0$ in Eq.~\eqref{eq:dmaxf21} and Eq.~\eqref{eq:dmaxf22}, and also making use of the result proved in~\cite[Lemma 3]{HayashiTomamichel2016correlation} that
\begin{equation}
\label{eq:dmaxf3}
\frac{1}{m}D_{\alpha}^*(\mc{E}_{\sigma^{\ox m}}(\rho^{\ox m}) \| \sigma^{\ox m})
\stackrel{m\rar\infty}{\lrar} D^{*}_{\alpha}(\rho \| \sigma)
\end{equation}
uniformly for $\alpha\in[\frac{1}{2},1)$, we eventually arrive at
\begin{equation}
\lim_{n \rar \infty} \frac{-1}{n} \log \big(1-\epsilon(\rho^{\ox n} \| \sigma^{\ox n}, nr) \big)
=\sup_{\frac{1}{2} \leq \alpha < 1} \frac{1-\alpha}{\alpha} \left\{ D^*_{\alpha}(\rho \| \sigma) -r \right\}
\end{equation}
and we are done.
\end{proofof}

\section{Proof of the Strong Converse Exponent for Quantum Privacy Amplification}
  \label{sec:proof-qpa}
This section is devoted to the proof of Theorem~\ref{theorem:privacy}. The first three subsections are for the achievability part, and the last subsection is for the optimality part.

\subsection{Variational expression}
We define an intermediate quantity $G(\rho_{XE}, r)$, which will be proven to be an upper bound of $E_{\rm{sc}}^{\rm{pa}}(\rho_{XE}, r)$.
\begin{definition}
Let $\rho_{XE} \in \mc{S}(XE)$ be a CQ state and $r \geq 0$. The quantity $G(\rho_{XE}, r)$ is defined as
\begin{equation}
G(\rho_{XE}, r):= \sup_{\frac{1}{2} \leq \alpha \leq 1} \frac{1-\alpha}{\alpha}\left\{r-H_{\alpha}^{\flat}(X|E)_\rho\right\}.
\end{equation}
\end{definition}

We will prove a variational expression for $G(\rho_{XE}, r)$. The following Proposition~\ref{prop:varpa}
provides a slightly general result. The exact variational expression for $G(\rho_{XE}, r)$ is given
in Eq.~\eqref{eq:varG2}.

\begin{proposition}
\label{prop:varpa}
Let $\rho_{XE}\in\mc{S}(XE)$ and $r \geq 0$. We have
\begin{equation}\label{eq:varG1}
\sup_{\frac{1}{2} \leq \alpha \leq 1} \frac{1-\alpha}{\alpha}\left\{r-H_{\alpha}^{\flat}(X|E)_\rho\right\}
=\inf_{\tau_{XE} \in \mc{S}_{\rho}(XE)} \left\{D(\tau_{XE} \| \rho_{XE})+|r-H(X|E)_\tau|^+ \right\}.
\end{equation}
Furthermore, if $\rho_{XE}$ is a CQ state, then the set $\mc{S}_{\rho}(XE)$ in the infimisation can be replaced by $\mc{F}:=\{\tau_{XE}~|~\tau_{XE}~\text{is a CQ state},~\tau_{XE} \in \mc{S}_{\rho}(XE)\}$. That is,
\begin{equation}\label{eq:varG2}
G(\rho_{XE}, r)
=\inf_{\tau_{XE} \in \mc{F}} \left\{D(\tau_{XE} \| \rho_{XE})+|r-H(X|E)_\tau|^+ \right\}.
\end{equation}
\end{proposition}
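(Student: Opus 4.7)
The plan is to mimic the strategy used in the proof of Proposition~\ref{prop:dmaxvar}, now adapted to the conditional setting. First, I apply the variational representation of $D_\alpha^\flat$ from Proposition~\ref{prop:mainpro}(iii) for $\alpha\in(0,1)$, namely
\[
D_\alpha^\flat(\rho_{XE}\|\1_X\ox\sigma_E) = \min_{\tau_{XE}\in\mc{S}(XE)}\bigl\{D(\tau_{XE}\|\1_X\ox\sigma_E) + \tfrac{\alpha}{1-\alpha}D(\tau_{XE}\|\rho_{XE})\bigr\},
\]
to the definition $H_\alpha^\flat(X|E)_\rho = -\min_{\sigma_E}D_\alpha^\flat(\rho_{XE}\|\1_X\ox\sigma_E)$. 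Swapping the two minimizations and using the Gibbs identity $\min_{\sigma_E}D(\tau_{XE}\|\1_X\ox\sigma_E) = -H(X|E)_\tau$ (attained at $\sigma_E=\tau_E$), one obtains, after multiplying by $\tfrac{1-\alpha}{\alpha}\geq 0$,
\[
\tfrac{1-\alpha}{\alpha}\bigl\{r-H_\alpha^\flat(X|E)_\rho\bigr\} = \min_{\tau_{XE}\in\mc{S}_\rho(XE)}\bigl\{D(\tau_{XE}\|\rho_{XE}) + \tfrac{1-\alpha}{\alpha}(r-H(X|E)_\tau)\bigr\},
\]
the restriction to $\mc{S}_\rho(XE)$ being automatic since $D(\tau\|\rho)=+\infty$ outside.

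Next, with the substitution $\lambda=\tfrac{1-\alpha}{\alpha}$, which bijects $\alpha\in[\tfrac{1}{2},1]$ onto $\lambda\in[0,1]$, I take the supremum over $\alpha$ and apply Sion's minimax theorem (Lemma~\ref{lem:minimax}) to exchange $\sup_\lambda$ with $\min_\tau$. The objective is affine in $\lambda$ on the compact interval $[0,1]$ and convex, lower semicontinuous in $\tau$ on the convex compact set $\mc{S}_\rho(XE)$, so Sion applies. After the swap, the inner supremum evaluates to $\sup_{\lambda\in[0,1]}\lambda(r-H(X|E)_\tau) = |r-H(X|E)_\tau|^+$, which establishes~\eqref{eq:varG1}.

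For~\eqref{eq:varG2}, I would reduce the general infimum to CQ states via the completely dephasing channel $\mc{P}$ acting on $X$. When $\rho_{XE}$ is CQ, $\mc{P}(\rho_{XE})=\rho_{XE}$, and both $\rho_{XE}$ and $\1_X\ox\tau_E$ are fixed points of $\mc{P}$. For any $\tau_{XE}\in\mc{S}_\rho(XE)$, set $\hat\tau_{XE}:=\mc{P}(\tau_{XE})$. One verifies that $\hat\tau\in\mc{F}$ (it is CQ, and its support remains in $\supp(\rho)$ by the block-diagonal structure of $\rho$). The data processing inequality for the relative entropy, applied with the common fixed points $\rho$ and $\1_X\ox\tau_E$, yields simultaneously $D(\hat\tau\|\rho)\leq D(\tau\|\rho)$ and $D(\hat\tau\|\1_X\ox\hat\tau_E)\leq D(\tau\|\1_X\ox\tau_E)$, the second of which (since $\hat\tau_E=\tau_E$) translates to $H(X|E)_{\hat\tau}\geq H(X|E)_\tau$, whence $|r-H(X|E)_{\hat\tau}|^+\leq |r-H(X|E)_\tau|^+$. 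Thus the objective in~\eqref{eq:varG1} does not increase under dephasing, so the infimum may be restricted to $\mc{F}$, proving~\eqref{eq:varG2}.

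The main technical point is checking the hypotheses for Sion's minimax theorem in the present conditional setting---most notably the lower semicontinuity of $-H(X|E)_\tau$ on $\mc{S}_\rho(XE)$---and correctly handling the endpoint $\alpha=1$ (equivalently $\lambda=0$), where the variational expression for $D_\alpha^\flat$ formally degenerates but the objective remains $0$, consistent with $\min_\tau D(\tau\|\rho)=0$ attained at $\tau=\rho$. Everything else is a routine adaptation of the argument behind Proposition~\ref{prop:dmaxvar}.
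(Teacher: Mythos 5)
Your argument is correct and follows essentially the same route as the paper's own proof: apply the variational representation of $D_\alpha^\flat$, swap the minimizations, use the Gibbs identity to reduce to $-H(X|E)_\tau$, reparametrize by $\lambda=\frac{1-\alpha}{\alpha}\in[0,1]$, invoke Sion's minimax theorem noting the convexity of $\tau\mapsto -H(X|E)_\tau$ and $\tau\mapsto D(\tau\|\rho)$, and then reduce to CQ states by dephasing on $X$ together with the data processing inequality. The handling of the endpoint $\alpha=1$ and the restriction to $\mc{S}_\rho(XE)$ also match the paper.
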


\begin{proof}
When $\alpha\in[\frac{1}{2},1)$, we have
\begin{align}
 & \frac{1-\alpha}{\alpha}\left\{r-H_{\alpha}^{\flat}(X|E)_{\rho}\right\} \nb\\
=& \frac{1-\alpha}{\alpha}\left\{\inf_{\sigma_E\in\mc{S}(E)}
   D_{\alpha}^{\flat}(\rho_{XE} \| \1_X \ox \sigma_E)+r\right\} \nb\\
=& \inf_{\sigma_E\in\mc{S}(E)} \inf_{\tau_{XE} \in \mc{S}(XE)}\frac{1-\alpha}{\alpha}
   \left\{D(\tau_{XE}\|{\1}_X\ox\sigma_E)-\frac{\alpha}{\alpha-1}D(\tau_{XE}\|\rho_{XE})+r\right\} \nb\\
=& \inf_{\tau_{XE}\in\mc{S}(XE)}\left\{\frac{1-\alpha}{\alpha}\big(D(\tau_{XE}\|\1_X\ox\tau_E)+r\big)
   +D(\tau_{XE} \| \rho_{XE}) \right\}, \nb\\
=& \inf_{\tau_{XE}\in\mc{S}_\rho(XE)}\left\{\frac{1-\alpha}{\alpha}\big(r-H(X|E)_{\tau}\big)
   +D(\tau_{XE} \| \rho_{XE}) \right\}, \label{eq:v2}
\end{align}
where the second line is by definition and the third line is by the variational expression for $D_{\alpha}^{\flat}(\rho \| \sigma)$ (see Proposition~\ref{prop:mainpro}~(\romannumeral3)). When $\alpha=1$, the first line and last line of Eq.~\eqref{eq:v2} are still equal, because they are both $0$. So,
\begin{align}
&\sup_{\frac{1}{2} \leq \alpha \leq 1}
 \frac{1-\alpha}{\alpha}\left\{r-H_{\alpha}^{\flat}(X|E)_{\rho}\right\} \nb\\
=&\sup_{\frac{1}{2} \leq \alpha \leq 1}\inf_{\tau_{XE}\in\mc{S}_\rho(XE)}
  \left\{\frac{1-\alpha}{\alpha}\big(r-H(X|E)_{\tau}\big)+D(\tau_{XE} \| \rho_{XE}) \right\} \nb\\
=&\sup_{\lambda \in [0,1]} \inf_{\tau_{XE} \in \mc{S}_{\rho}(XE)}
  \left\{\lambda\big(r-H(X|E)_{\tau}\big)+D(\tau_{XE} \| \rho_{XE}) \right\} \nb\\
=&\inf_{\tau_{XE} \in \mc{S}_{\rho}(XE)} \sup_{\lambda \in [0,1]}
  \left\{\lambda\big(r-H(X|E)_{\tau}\big)+D(\tau_{XE} \| \rho_{XE}) \right\}  \nb\\
=&\inf_{\tau_{XE}\in\mc{S}_{\rho}(XE)}\left\{D(\tau_{XE} \| \rho_{XE})+|r-H(X|E)_{\tau}|^+\right\},
\end{align}
where for the fourth line we have used Sion's minimax theorem (Lemma~\ref{lem:minimax} in the Appendix). Note that the functions $\tau_{XE}\mapsto -H(X|E)_{\tau}$ and $\tau_{XE}\mapsto D(\tau_{XE} \| \rho_{XE})$ are both convex (see, e.g., \cite{Wilde2013quantum}). This confirms Eq.~\eqref{eq:varG1}.

Now we consider the special case that $\rho_{XE}=\sum_{x \in \mc{X}} q(x)\proj{x}_X \ox \rho_E^x$ is a CQ state. For any $\tau_{XE}\in\mc{S}_{\rho}(XE)$, let
\beq
\tilde{\tau}_{XE}:=\sum_{x \in \mc{X}}\proj{x} \ox \tr_X[\tau_{XE} (\proj{x}_X\ox \1_E)].
\eeq
Obviously, $\tilde{\tau}_{XE}$ is a CQ state and, by the data processing inequality of the relative entropy,
\begin{align}
D(\tilde{\tau}_{XE} \| \rho_{XE}) &\leq D(\tau_{XE} \| \rho_{XE}), \\
H(X|E)_{\tilde{\tau}} &\geq H(X|E)_{\tau}.
\end{align}
So, in this case, the infimisation in the right hand side of Eq.~\eqref{eq:varG1} can be restricted to $\tau_{XE} \in \mc{F}$, and we get Eq.~\eqref{eq:varG2}.
\end{proof}

\subsection{An upper bound in terms of the log-Euclidean R{\'e}nyi conditional entropy}
Now we show that $G(\rho_{XE}, r)$ is achievable as a strong converse rate for privacy amplification.
\begin{theorem}
\label{thm:PA-F}
Let $\rho_{XE}$ be a CQ state and $r \geq 0$. It holds that
\begin{equation}
E_{\rm{sc}}^{\rm{pa}}(\rho_{XE}, r) \leq  G(\rho_{XE}, r).
\end{equation}
\end{theorem}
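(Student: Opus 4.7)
The plan is to prove this achievability bound by combining the variational expression of Proposition~\ref{prop:varpa} (Eq.~\eqref{eq:varG2}) with a ``fictitious source'' change-of-state argument in the spirit of Mosonyi--Ogawa. Recall that $E^{\rm pa}_{\rm sc}(\rho_{XE}, r) \leq G(\rho_{XE}, r)$ is an achievability-type statement: one must exhibit a sequence of hash functions $\{f_n\}$ of rate at least $r$ whose security parameter $\mathfrak{P}^{\rm pa}(\rho_{XE}^{\ox n}, f_n)$ decays no faster than $\exp(-nG(\rho_{XE}, r))$.

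First, I would fix an arbitrary CQ state $\tau_{XE}\in\mc{F}$. By Eq.~\eqref{eq:varG2} it then suffices to produce a sequence $\{f_n\}$ of rate $\to r$ with
\[
\limsup_{n\to\infty}\bigl(-\tfrac{1}{n}\bigr)\log\mathfrak{P}^{\rm pa}(\rho_{XE}^{\ox n},f_n)\ \leq\ D(\tau_{XE}\|\rho_{XE})+|r-H(X|E)_\tau|^+,
\]
and then take the infimum over $\tau_{XE}$. The two terms on the right have distinct meanings: $|r-H(X|E)_\tau|^+$ is the classical error exponent of privacy amplification against the ``fictitious'' i.i.d.\ source $\tau_{XE}^{\ox n}$ at rate $r$, while $D(\tau_{XE}\|\rho_{XE})$ is the penalty for using $\tau$ as the reference rather than $\rho$.

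Next, the construction of $\{f_n\}$ good for $\tau^{\ox n}$ would proceed by pinching $E^n$ with the universal symmetric state $\sigma^u_{E^n}$ of Lemma~\ref{lem:sym}: since the classical register $X^n$ is left untouched, this reduces the problem to an essentially classical one on the eigenbasis of $\sigma^u$, where a 2-universal-hashing / method-of-types analysis yields $f_n$ at rate $r$ satisfying $\mathfrak{P}^{\rm pa}(\tau_{XE}^{\ox n},f_n)\geq \mathrm{poly}(n)^{-1}\cdot 2^{-n|r-H(X|E)_\tau|^+}$. The log-Euclidean R\'enyi conditional entropy enters here because the Petz-type R\'enyi conditional entropy of the pinched state, divided by $n$, converges to $H^\flat_\alpha$ in the sense of~\cite[Lemma~3]{HayashiTomamichel2016correlation}. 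I would then transfer the bound from $\tau^{\ox n}$ to $\rho^{\ox n}$ via a typical-subspace device: on the $\tau$-typical subspace inside $\supp(\rho^{\ox n})$ the ratio of the two i.i.d.\ states is roughly $2^{-nD(\tau_{XE}\|\rho_{XE})}$, and combining this with operator monotonicity of $\sqrt{\cdot}$ inside the fidelity should yield $\mathfrak{P}^{\rm pa}(\rho_{XE}^{\ox n},f_n)\geq 2^{-nD(\tau\|\rho)-o(n)}\,\mathfrak{P}^{\rm pa}(\tau_{XE}^{\ox n},f_n)$.

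The hard part will be this change-of-state step linking the fidelity-based security parameters of $\rho^{\ox n}$ and $\tau^{\ox n}$. In the classical setting this is a one-line typicality bound, but the quantum fidelity is nonlinear in its arguments, and the required inequality appears to demand marrying the pinching inequality on $E^n$ with careful $L_1$-norm estimates of $\|\sqrt{\rho^{\ox n}}\sqrt{\cdot}\|_1$ restricted to the $\tau$-typical subspace. The template of~\cite{HayashiTomamichel2016correlation, MosonyiOgawa2017strong} will guide these estimates, but adapting it from a trace-distance or min-entropy security parameter to the fidelity-based one employed here is where the main technical effort will lie.
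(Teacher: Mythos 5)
Your proposal correctly lands on the variational form Eq.~\eqref{eq:varG2} and the idea of constructing $\{f_n\}$ tailored to a fictitious i.i.d.\ source $\tau_{XE}$ and then transferring the performance guarantee to $\rho_{XE}$, with $D(\tau_{XE}\|\rho_{XE})$ as a change-of-measure penalty. However, the crux of your plan---the claimed inequality $\mathfrak{P}^{\rm pa}(\rho_{XE}^{\ox n},f_n)\geq 2^{-nD(\tau\|\rho)-o(n)}\,\mathfrak{P}^{\rm pa}(\tau_{XE}^{\ox n},f_n)$---is a genuine gap, and you acknowledge this yourself. Projecting onto a $\tau$-typical subspace gives an operator inequality of the shape $\Pi_n\rho^{\ox n}\Pi_n\gtrsim 2^{-nD(\tau\|\rho)}\Pi_n\tau^{\ox n}\Pi_n$, but turning that into a multiplicative bound between the two fidelity-based security parameters is not routine: the security parameter involves the \emph{normalized} output state so one cannot simply cut down $\rho^{\ox n}$ by a projector, the optimizing $\omega_{E^n}$ differs on the two sides, and the typical projector on $X^nE^n$ need not commute with the dephasing $\mc{P}_{f_n}$. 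Your sketch also misplaces where the pinching by $\sigma^u_{E^n}$ enters (it is used only in Section~V.C, after this theorem, to upgrade $H^\flat_\alpha$ to $H^*_\alpha$; the log-Euclidean quantity enters here only through the variational representation of $D^\flat_\alpha$, Proposition~\ref{prop:mainpro}~(iii)).

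The missing idea in the paper is Lemma~\ref{lem:fidelity-re}: for any auxiliary state $\tau$,
\begin{equation*}
-\log F^2(\rho,\sigma)\ \leq\ D(\tau\|\rho)+D(\tau\|\sigma).
\end{equation*}
Instead of comparing $\mathfrak{P}^{\rm pa}(\rho^{\ox n},\cdot)$ with $\mathfrak{P}^{\rm pa}(\tau^{\ox n},\cdot)$, the paper applies this inequality directly with $\rho\to\mc{P}_{f_n}(\rho_{XE}^{\ox n})$, $\sigma\to\pi_{Z_n}\ox\tau_E^{\ox n}$, $\tau\to\mc{P}_{f_n}(\tau_{XE}^{\ox n})$. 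Data processing collapses the first term to $nD(\tau_{XE}\|\rho_{XE})$ with no typicality argument at all, while the second term, $D(\mc{P}_{f_n}(\tau^{\ox n})\|\pi_{Z_n}\ox\tau_E^{\ox n})$, is sent to zero using the relative-entropy form of Hayashi's achievability result~\cite{Hayashi2015precise} at a rate just below $H(X|E)_\tau$. The $|r-H(X|E)_\tau|^+$ piece is then produced by a separate, elementary padding argument (Lemma~\ref{lem:F_2}) that dilutes $\pi_{Z'_n}$ into $\pi_{Z_n}$ and costs exactly $2^{n(r'-r)}$ in fidelity; accordingly, the proof splits $G$ into $G_1$ and $G_2$ depending on whether $H(X|E)_\tau>r$ or not. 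Until you replace your change-of-state step with something of comparable strength to Lemma~\ref{lem:fidelity-re}, the proposal does not close.
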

To prove Theorem~\ref{thm:PA-F}, we introduce
\begin{align}
G_1(\rho_{XE}, r)&:=\inf_{\tau_{XE} \in \mc{F}_1} D(\tau_{XE} \| \rho_{XE}),\\
G_2(\rho_{XE}, r)&:=\inf_{\tau_{XE} \in \mc{F}_2} \{D(\tau_{XE} \| \rho_{XE})+r-H(X|E)_{\tau} \},
\end{align}
where
\begin{align}
\mc{F}_1&:=\{\tau_{XE}~|~\tau_{XE} \in \mc{F},~r<H(X|E)_{\tau} \},\\
\mc{F}_2&:=\{ \tau_{XE}~|~\tau_{XE} \in \mc{F},~r \geq H(X|E)_{\tau}\}
\end{align}
with $\mc{F}$ being defined in Proposition~\ref{prop:varpa}. Due to the variational expression of Eq.~\eqref{eq:varG2}, we have
\begin{equation}
\label{eq:FF_1}
G(\rho_{XE}, r)=\min \{G_1(\rho_{XE}, r),  G_2(\rho_{XE}, r) \}.
\end{equation}
Hence, it suffices to show that $E_{\rm{sc}}^{\rm{pa}}(\rho_{XE}, r)$ is upper bounded by both $G_1(\rho_{XE}, r)$ and $G_2(\rho_{XE}, r)$. This is done in the following
Lemma~\ref{lem:F1} and Lemma~\ref{lem:F_2}, respectively.

\begin{lemma}
\label{lem:F1}
For any CQ state $\rho_{XE}$ and $r \geq 0$, we have
\begin{equation}
E_{\rm{sc}}^{\rm{pa}}(\rho_{XE}, r) \leq G_1(\rho_{XE}, r).
\end{equation}
\end{lemma}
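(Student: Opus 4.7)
The strategy relies on the variational expression of Proposition~\ref{prop:varpa}: $G_1(\rho_{XE},r)$ is the infimum of $D(\tau_{XE}\|\rho_{XE})$ over CQ states $\tau_{XE}\in\mc{F}_1$. Hence it suffices to prove $E_{\rm{sc}}^{\rm{pa}}(\rho_{XE},r)\leq D(\tau_{XE}\|\rho_{XE})$ for each such $\tau$ with $D(\tau\|\rho)<\infty$; infimising over $\tau$ then yields the lemma. The plan is a change-of-state argument: I would build a sequence of hash functions that behaves well on the ``alternative'' state $\tau$, and then transfer their good behavior to $\rho$ at a cost controlled by $D(\tau\|\rho)$.

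Fix such a $\tau$. Since $H(X|E)_\tau>r$, the direct coding theorem for privacy amplification applied to $\tau$ (via universal hashing together with the Leftover Hashing Lemma~\cite{TSSR2011leftover} and the quantum AEP~\cite{TCR2009fully}, or via the refined direct-exponent constructions of~\cite{Hayashi2015precise,LYH2023tight}) furnishes hash functions $\{f_n\colon\mc{X}^{\times n}\to\mc{Z}_n\}$ with $\liminf_n\frac{1}{n}\log|\mc{Z}_n|\geq r$ satisfying $1-\mathfrak{P}^{\rm pa}(\tau_{XE}^{\ox n},f_n)\leq 2^{-n\gamma}$ for some $\gamma>0$ and all $n$ large. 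Now I want to show the same $f_n$'s also satisfy $\limsup_n -\frac{1}{n}\log\mathfrak{P}^{\rm pa}(\rho_{XE}^{\ox n},f_n)\leq D(\tau\|\rho)$. Using the identity $F^2(A,B)=2^{-D_{1/2}^*(A\|B)}$ and fixing a near-optimal $\omega^*_n$ in the maximisation defining $\mathfrak{P}^{\rm pa}(\tau^{\ox n},f_n)$, the aim is to identify, for each $n$, a POVM element $K_n$ satisfying
\[
\tr(K_n\tau_{XE}^{\ox n})\geq 1-2^{-n\gamma}
\qquad\text{and}\qquad
\tr(K_n\rho_{XE}^{\ox n})\leq \mathrm{poly}(n)\cdot\mathfrak{P}^{\rm pa}(\rho_{XE}^{\ox n},f_n),
\]
so that $K_n$ simultaneously certifies the good behavior on $\tau^{\ox n}$ and bounds the security parameter on $\rho^{\ox n}$ from below. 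Given such a $K_n$, data processing of the relative entropy under the binary test $\{K_n,\1-K_n\}$ gives
\[
nD(\tau\|\rho)=D(\tau^{\ox n}\|\rho^{\ox n})\geq h\big(\tr(K_n\tau^{\ox n})\,\big\|\,\tr(K_n\rho^{\ox n})\big),
\]
where $h$ denotes the binary relative entropy; since $\tr(K_n\tau^{\ox n})\to 1$ exponentially and the associated $H_2$ term is $o(n)$, this rearranges to $\tr(K_n\rho^{\ox n})\geq 2^{-nD(\tau\|\rho)-o(n)}$, and hence to the desired exponential bound on the security parameter under $\rho$.

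The main obstacle is the construction of $K_n$: the security parameter is defined through squared fidelity rather than through a single-outcome POVM acceptance probability, so extra work is required to extract a two-outcome test that simultaneously respects the fidelities on $\tau^{\ox n}$ and $\rho^{\ox n}$. This is precisely where the Mosonyi--Ogawa technique~\cite{MosonyiOgawa2017strong} enters: one restricts attention to permutation-invariant (or at least type-respecting) hash functions and pinches with respect to the universal symmetric state of Lemma~\ref{lem:sym}, thereby reducing the fidelity manipulation to an essentially commutative situation in which $K_n$ can be read off from the optimal Fuchs--Caves measurement. It is through this commutative reduction that the resulting bound ends up being expressed in terms of the log-Euclidean conditional R\'enyi entropy $H_\alpha^\flat$ rather than the sandwiched $H_\alpha^*$; the sharper version announced in Theorem~\ref{theorem:privacy} will be recovered in a later amplification step.
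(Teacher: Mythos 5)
Your high-level plan is the same as the paper's: fix $\tau$ with $H(X|E)_\tau>r$, borrow a good sequence of hash functions from the direct-coding theorem applied to $\tau$, and pay a change-of-state cost of $D(\tau\|\rho)$ when transferring back to $\rho$. Where you diverge is in the transfer mechanism, and that is exactly where your proposal has a genuine gap.

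You propose to extract a \emph{binary} test $K_n$ with the two properties
$\tr(K_n\tau^{\ox n})\to 1$ exponentially and $\tr(K_n\rho^{\ox n})\le\mathrm{poly}(n)\cdot\mathfrak{P}^{\rm pa}(\rho^{\ox n},f_n)$, and then apply data processing of the relative entropy under $\{K_n,\1-K_n\}$. The binary-test data-processing step is fine, but the existence of such a $K_n$ is not established, and the difficulty is more serious than your paragraph suggests. The second property asks for a single POVM element whose acceptance probability on $\mc{P}_{f_n}(\rho^{\ox n})$ is upper-bounded (up to polynomial slack) by a fidelity between two states, and there is no generic reason a fidelity should dominate a one-sided acceptance probability in this way. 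The Fuchs--Caves measurement achieving $F$ is multi-outcome and does not collapse to a binary test without loss, and your suggestion to fix this with the Mosonyi--Ogawa universal-pinching reduction is both premature and misplaced at this stage: in the paper the pinching machinery is used only in the subsequent ``last step'' to upgrade the intermediate $H_\alpha^{\flat}$ bound to an $H_\alpha^*$ bound, not to build a binary test inside the proof of this lemma.

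The paper sidesteps the whole issue via Lemma~\ref{lem:fidelity-re}, which states $-\log F^2(\rho,\sigma)\le D(\tau\|\rho)+D(\tau\|\sigma)$ for \emph{any} reference $\tau$. Applying it with the triple $\big(\mc{P}_{f_n}(\tau^{\ox n}),\,\mc{P}_{f_n}(\rho^{\ox n}),\,\pi_{Z_n}\ox\tau_E^{\ox n}\big)$, using data processing to bound the first term by $nD(\tau\|\rho)$, and using Hayashi's relative-entropy convergence (Eq.~\eqref{eq:paachi-1}) to kill the second term, gives the desired bound directly. This lemma keeps the full multi-outcome Fuchs--Caves measurement inside a single inequality and never needs a binary test or any commutativity reduction, which is precisely why it works where your $K_n$ construction stalls. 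If you want to salvage your route, you would need to prove the existence of the two-sided test $K_n$ explicitly (e.g., via a Holevo--Helstrom-type projector adapted to the post-hash states), which is substantially more work than invoking Lemma~\ref{lem:fidelity-re}.
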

\begin{proof}
By the definition of $G_1(\rho_{XE}, r)$, for any $\delta>0$, there exists a CQ state $\tau_{XE}\in\mc{S}_\rho(XE)$ such that
\begin{align}
&H(X|E)_{\tau}>r, \label{eq:priv1}\\
&D(\tau_{XE} \| \rho_{XE}) \leq G_1(\rho_{XE}, r)+\delta.\label{eq:priv2}
\end{align}
Eq.~\eqref{eq:priv1} ensures that $r$ is an achievable rate of randomness extraction from $\tau_{XE}$~\cite{DevetakWinter2005distillation}. That is, there exists a sequence of hash functions $\big\{f_n: \mc{X}^{\times n} \rar \mc{Z}_n=\{1, \ldots, 2^{nr}\}\big\}_{n \in \mathbb{N}}$, such that $\mathfrak{P}^{\rm pa}(\rho_{XE}^{\ox n},f_n)\rar 1$, as $n\rar\infty$. We point out that, due to the Fuchs-van de Graaf inequality~\cite{FuchsVan1999cryptographic} between fidelity and trace distance, as well as the triangle inequality and the contractivity property of the trace distance, this is equivalent to the original statement in~\cite{DevetakWinter2005distillation} that
\begin{equation}
\lim_{n \rar \infty}
\Big\|\mc{P}_{f_n}(\tau_{XE}^{\ox n})-\frac{\1_{Z_n}}{|\mc{Z}_n|} \ox \tau_E^{\ox n}\Big\|_1=0.
\end{equation}
In fact, \cite[Theorem 1]{Hayashi2015precise} implies that we even have the stronger form:
\begin{equation}\label{eq:paachi-1}
\lim_{n \rar \infty}
D\Big(\mc{P}_{f_n}(\tau_{XE}^{\ox n})\Big\|\frac{\1_{Z_n}}{|\mc{Z}_n|} \ox \tau_E^{\ox n}\Big)=0.
\end{equation}
To apply this result to the extraction of randomness from $\rho_{XE}$, we employ Lemma~\ref{lem:fidelity-re}. This lets us get
\begin{align}
     &-\log \mathfrak{P}^{\rm pa}\left(\rho_{XE}^{\ox n},f_n\right) \nb\\
  =  &-\log \max_{\omega_{E^n} \in \mc{S}(E^n)} F^2\Big(\mc{P}_{f_n}(\rho_{XE}^{\ox n}),
       \frac{\1_{Z_n}}{|\mc{Z}_n|} \ox \omega_{E^n}\Big) \nb\\
\leq &-\log F^2\Big(\mc{P}_{f_n}(\rho_{XE}^{\ox n}),
     \frac{\1_{Z_n}}{|\mc{Z}_n|} \ox \tau_E^{\ox n}\Big) \nb\\
\leq &D\Big(\mc{P}_{f_n}(\tau_{XE}^{\ox n}) \big\| \mc{P}_{f_n}(\rho_{XE}^{\ox n})\Big)
      +D\Big(\mc{P}_{f_n}(\tau_{XE}^{\ox n}) \big\| \frac{\1_{Z_n}}{|\mc{Z}_n|}\ox\tau_E^{\ox n}\Big) \nb\\
\leq &n D(\tau_{XE} \| \rho_{XE})+D\Big(\mc{P}_{f_n}(\tau_{XE}^{\ox n})
      \big\| \frac{\1_{Z_n}}{|\mc{Z}_n|}\ox \tau_E^{\ox n}\Big), \label{eq:paachi-2}
\end{align}
where the equality is by definition (Eq.~\eqref{eq:def-p-pa}), and the last inequality is by the data processing inequality of the relative entropy. Taking limit, and by Eq.~\eqref{eq:paachi-1}, we obtain
\beq\label{eq:paachi-3}
\limsup_{n\rar\infty}\frac{-1}{n} \log \mathfrak{P}^{\rm pa}\left(\rho_{XE}^{\ox n},f_n\right)
\leq D(\tau_{XE} \| \rho_{XE}).
\eeq
Noticing that the sequence of hash functions $\big\{f_n: \mc{X}^{\times n} \rar \mc{Z}_n\big\}_n$ satisfies
\beq
\frac{1}{n}\log |\mc{Z}_n|=r,
\eeq
we deduce from Eq.~\eqref{eq:paachi-3} and the definition of $E_{\rm{sc}}^{\rm{pa}}(\rho_{XE}, r)$ that
\begin{align}
 E_{\rm{sc}}^{\rm{pa}}(\rho_{XE}, r)
\leq &D(\tau_{XE} \| \rho_{XE}) \nb\\
\leq &G_1(\rho_{XE}, r)+\delta,
\end{align}
where the second line is due to Eq.~\eqref{eq:priv2}. Because $\delta>0$ is arbitrary, we complete the proof by letting $\delta \rar 0$.
\end{proof}

\begin{lemma}
\label{lem:F_2}
For any CQ state $\rho_{XE}$ and $r \geq 0$, we have
\begin{equation}
E_{\rm{sc}}^{\rm{pa}}(\rho_{XE}, r) \leq G_2(\rho_{XE}, r).
\end{equation}
\end{lemma}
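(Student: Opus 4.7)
The plan is to adapt the template of Lemma~\ref{lem:F1} to the regime $r\geq H(X|E)_\tau$, where perfect randomness extraction from $\tau_{XE}$ is no longer possible and the target $\frac{\1_{Z_n}}{|\mc{Z}_n|}\ox\tau_E^{\ox n}$ cannot be approached. The key construction is a \emph{padded hash}: first extract randomness from $\tau$ at the slightly smaller (feasible) rate $H(X|E)_\tau-\delta$, then embed the output into a larger alphabet of size $\lceil 2^{nr}\rceil$ by appending a fixed constant coordinate. The resulting relative-entropy deficit tensor-splits into a vanishing ``extraction'' part plus an additive ``padding'' part whose rate per letter is exactly $r-H(X|E)_\tau+\delta$, which is what we need.

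Concretely, fix $\delta>0$ small enough that $H(X|E)_\tau-\delta\geq 0$ and pick any $\tau_{XE}\in\mc{F}_2$. By~\cite[Theorem~1]{Hayashi2015precise}, as already invoked in the proof of Lemma~\ref{lem:F1}, there exist hash functions $g_n\colon\mc{X}^{\times n}\to\mc{Z}_n^{(1)}$ with $\frac{1}{n}\log|\mc{Z}_n^{(1)}|=H(X|E)_\tau-\delta$ (up to integer rounding) such that
\[
D\Big(\mc{P}_{g_n}(\tau_{XE}^{\ox n})\,\Big\|\,\tfrac{\1_{Z_n^{(1)}}}{|\mc{Z}_n^{(1)}|}\ox\tau_E^{\ox n}\Big)\xrightarrow[n\to\infty]{}0.
\]
Introduce a second alphabet $\mc{Z}_n^{(2)}$ with $|\mc{Z}_n^{(2)}|=\lceil 2^{n(r-H(X|E)_\tau+\delta)}\rceil$, pick any $z_0\in\mc{Z}_n^{(2)}$, and define $f_n(x^n):=(g_n(x^n),z_0)$ into $\mc{Z}_n:=\mc{Z}_n^{(1)}\times\mc{Z}_n^{(2)}$, so that $\frac{1}{n}\log|\mc{Z}_n|\geq r$ and $\mc{P}_{f_n}(\tau_{XE}^{\ox n})=\mc{P}_{g_n}(\tau_{XE}^{\ox n})\ox\proj{z_0}_{Z_n^{(2)}}$. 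Applying Lemma~\ref{lem:fidelity-re} with $\omega_{E^n}=\tau_E^{\ox n}$ and the data processing inequality in the exact manner of the chain leading to~\eqref{eq:paachi-2}, combined with the product-additivity of the relative entropy across the trivial $Z_n^{(2)}$ factor, yields
\[
-\log\mathfrak{P}^{\rm pa}(\rho_{XE}^{\ox n},f_n)\leq nD(\tau_{XE}\|\rho_{XE})+D\Big(\mc{P}_{g_n}(\tau_{XE}^{\ox n})\,\Big\|\,\tfrac{\1_{Z_n^{(1)}}}{|\mc{Z}_n^{(1)}|}\ox\tau_E^{\ox n}\Big)+\log|\mc{Z}_n^{(2)}|.
\]
Dividing by $n$, letting $n\to\infty$ so the middle term vanishes, then $\delta\to 0$, and finally infimizing over $\tau\in\mc{F}_2$, one obtains $E_{\rm sc}^{\rm pa}(\rho_{XE},r)\leq G_2(\rho_{XE},r)$.

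The main conceptual obstacle is that, unlike in Lemma~\ref{lem:F1}, one cannot drive the relative entropy to the uniform target to zero: it must grow linearly at rate $r-H(X|E)_\tau$. The padding construction is the cleanest way to embrace this unavoidable deficit, since it materializes as the additive term $\log|\mc{Z}_n^{(2)}|$ that is transparently tuneable by the choice of padding alphabet, while the $D(\tau\|\rho)$ transfer step through data processing remains identical to the subcapacity argument of Lemma~\ref{lem:F1}.
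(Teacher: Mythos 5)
Your proposal is correct and in essence coincides with the paper's proof: both construct the same sequence of hash functions (extract from $\tau$ at the feasible rate $H(X|E)_\tau-\delta$, then embed into a codomain of size $\approx 2^{nr}$; your fixed coordinate $z_0$ is just the paper's co-restriction $f_n=f'_n$ into a larger alphabet, after relabeling the codomain as a Cartesian product). The only cosmetic difference is how the padding penalty is extracted: the paper rescales the fidelity directly, $F(\mc{P}_{f_n}(\rho^{\ox n}),\tfrac{\1_{Z_n}}{|\mc{Z}_n|}\ox\sigma_{E^n})=\sqrt{2^{n(r'-r)}}\,F(\mc{P}_{f'_n}(\rho^{\ox n}),\tfrac{\1_{Z'_n}}{|\mc{Z}'_n|}\ox\sigma_{E^n})$, reusing the fidelity bound from Lemma~\ref{lem:F1} as a black box, whereas you rerun the Lemma~\ref{lem:fidelity-re} chain and tensor-split the relative entropy term to pull out $\log|\mc{Z}_n^{(2)}|$ — equivalent computations giving the same exponent $r-H(X|E)_\tau+\delta$.
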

\begin{proof}
According to the definition of $G_2(\rho_{XE}, r)$, there exists a CQ state $\tau_{XE}\in\mc{S}_\rho(XE)$ such that
\begin{align}
&D(\tau_{XE} \| \rho_{XE})+r-H(X|E)_{\tau} = G_2(\rho_{XE}, r), \\
&H(X|E)_{\tau}\leq r.
\end{align}
For any $\delta>0$, if we choose $r'=H(X|E)_{\tau}-\delta$ as the rate of randomness extraction, then from the proof of Lemma \ref{lem:F1} we know that there exist a sequence of hash functions $\big\{f'_n: \mc{X}^{\times n} \rar \mc{Z}'_n=\{1, \ldots, 2^{nr'} \}\big\}_{n \in \mathbb{N}}$ and a sequence of states $\{\sigma_{E^n} \}_{n \in \mathbb{N}}$, such that
\begin{equation}
\label{eq:above}
\limsup_{n\rar\infty} \frac{-1}{n}\log  F^2\big(\mc{P}_{f'_n}(\rho_{XE}^{\ox n}), \frac{\1_{Z'_n}}{|\mc{Z}'_n|} \ox \sigma_{E^n}\big) \leq D(\tau_{XE} \| \rho_{XE}).
\end{equation}
We convert $\{f'_n\}_{n \in \mathbb{N}}$ to a sequence of hash functions $\big\{f_n : \mc{X}^{\times n} \rar \mc{Z}_n=\{1, \ldots, 2^{nr}\}\big\}_{n\in \mathbb{N}}$, by extending the sizes of the extracted randomness but otherwise letting $f_n=f'_n$. Then we have
\begin{align}
F\big(\mc{P}_{f_n}(\rho_{XE}^{\ox n}), \frac{\1_{Z_n}}{|\mc{Z}_n|} \ox \sigma_{E^n}\big)
=&F\big(\mc{P}_{f'_n}(\rho_{XE}^{\ox n}), \frac{\1_{Z'_n}}{|\mc{Z}_n|} \ox \sigma_{E^n}\big) \nb\\
=&\sqrt{\frac{|\mc{Z}'_n|}{|\mc{Z}_n|}}F\big(\mc{P}_{f'_n}(\rho_{XE}^{\ox n}),
        \frac{\1_{Z'_n}}{|\mc{Z}'_n|} \ox \sigma_{E^n}\big) \nb\\
=&\sqrt{2^{n(r'-r)}}F\big(\mc{P}_{f'_n}(\rho_{XE}^{\ox n}),
        \frac{\1_{Z'_n}}{|\mc{Z}'_n|} \ox \sigma_{E^n}\big). \label{eq:fedl}
\end{align}
Eq.~(\ref{eq:above}) and Eq.~(\ref{eq:fedl}) imply that
\begin{align}
      E_{\rm{sc}}^{\rm{pa}}(\rho_{XE}, r)
\leq &\limsup_{n\rar\infty} \frac{-1}{n}\log F^2\big(\mc{P}_{f_n}(\rho_{XE}^{\ox n}), \frac{\1_{Z_n}}{|\mc{Z}_n|} \ox \sigma_{E^n}\big) \nb\\
\leq &D(\tau_{XE} \| \rho_{XE})+r-r' \nb\\
  =  & G_2(\rho_{XE}, r) +\delta. \label{eq:F2final}
\end{align}
Noticing that Eq.~(\ref{eq:F2final}) holds for any $\delta >0$, let $\delta \rar 0$
and we are done.
\end{proof}

\subsection{Proof of the achievability part: last step}
With Theorem~\ref{thm:PA-F}, we are now ready to complete the proof of the achievability part of Theorem~\ref{theorem:privacy}.

\medskip
\begin{proof}
At first, we fix $m$ and let $\tilde{\rho}_{X^mE^m}= \mc{E}_{\sigma^u_{E^m}}(\rho^{\ox m}_{XE})$, where $\sigma^u_{E^m}$ is the particular symmetric state of Lemma~\ref{lem:sym}.
Theorem~\ref{thm:PA-F} implies that there exist a sequence of hash functions $\{f_{m,k} : \mc{X}^{\times mk} \rar \mc{Z}_{m,k}\}_{k \in \mathbb{N}}$ and a sequence of states $\{\sigma_{E^{mk}} \}_{k \in \mathbb{N}}$, such that
\begin{equation}
\label{eq:rate}
\liminf_{k\rar\infty} \frac{1}{k} \log |\mc{Z}_{m,k}| \geq mr,
\end{equation}
\begin{equation}
\label{eq:intermit}
\limsup_{k\rar\infty} \frac{-1}{k} \log F^2\Big(\mc{P}_{f_{m,k}}\big((\tilde{\rho}_{X^mE^m})^{\ox k}\big), \frac{\1_{Z_{m,k}}}{|\mc{Z}_{m,k}|} \ox \sigma_{E^{mk}}\Big) \leq \sup_{\frac{1}{2} \leq \alpha \leq 1}
\frac{1-\alpha}{\alpha}\left\{mr-H^{\flat}_{\alpha}(X^m|E^m)_{\tilde{\rho}}\right\}.
\end{equation}
Then, for any integer $n$ we construct a hash function $f_n$ as follows. We find the integer $k$ such that $n\in [km, (k+1)m)$ and choose $\mc{Z}_n=\mc{Z}_{m,k}$ as the alphabet set of the extracted randomness and define $f_n$ as
\[
f_n(x_1, \ldots, x_{mk}, x_{mk+1}, \ldots, x_n)=f_{m,k}(x_1, \ldots, x_{mk}).
\]
It is easy to see that
\[
\mc{P}_{f_n}(\rho_{XE}^{\ox n})=\mc{P}_{f_{m,k}}(\rho_{XE}^{\ox mk}) \ox \rho_E^{\ox (n-mk)}.
\]

The performance of $f_n$ can be estimated via
\begin{align}
& F^2\Big(\mc{P}_{f_{m,k}}\big((\tilde{\rho}_{X^mE^m})^{\ox k}\big), \frac{\1_{Z_{m,k}}}{|\mc{Z}_{m,k}|} \ox \sigma_{E^{mk}}\Big) \nb\\
\leq & F^2\Big(\mc{P}_{f_{m,k}}\big((\tilde{\rho}_{X^mE^m})^{\ox k}\big), \frac{\1_{Z_{m,k}}}{|\mc{Z}_{m,k}|}\ox \mc{E}_{\sigma^u_{E^m}}^{\ox k}(\sigma_{E^{mk}})\Big) \nb\\
\leq &  (v_{m,d_E})^k F^2\Big(\mc{P}_{f_{m,k}}(\rho_{XE}^{\ox mk}), \frac{\1_{Z_{m,k}}}{|\mc{Z}_{m,k}|} \ox \mc{E}_{\sigma^u_{E^m}}^{\ox k}(\sigma_{E^{mk}})\Big) \nb\\
= & (v_{m,d_E})^k F^2\Big(\mc{P}_{f_{m,k}}(\rho_{XE}^{\ox mk})\ox \rho_E^{\ox (n-mk)}, \frac{\1_{Z_{m,k}}}{|\mc{Z}_{m,k}|} \ox \mc{E}_{\sigma^u_{E^m}}^{\ox k}(\sigma_{E^{mk}}) \ox \rho_E^{\ox (n-mk)}\Big) \nb\\
\leq & (v_{m,d_E})^k \max_{\omega_{E^n} \in \mc{S}(E^n)} F^2\Big(\mc{P}_{f_n}(\rho_{XE}^{\ox n}), \frac{\1_{Z_{n}}}{|\mc{Z}_{n}|} \ox \omega_{E^n}\Big), \label{eq:main}
\end{align}
where the second line comes from the data processing inequality for the fidelity and the fact that $\mc{P}_{f_{m,k}}\big((\tilde{\rho}_{X^mE^m})^{\ox k}\big)$ is invariant under the action of $\mc{E}_{\sigma^u_{E^m}}^{\ox k}$, and the third line is due to Lemma~\ref{lem:appen1}.
By Eq.~(\ref{eq:rate}), we have
\beq
\liminf_{n\rar\infty} \frac{1}{n} \log|\mc{Z}_n|=\frac{1}{m}\liminf_{k\rar\infty} \frac{1}{k} \log|Z_{m,k}| \geq r.
\eeq
Hence, Eq.~(\ref{eq:intermit}), Eq.~(\ref{eq:main}) and the definition of $E_{\rm{sc}}^{\rm{pa}}(\rho_{XE}, r)$ let us arrive at
\begin{align}
E_{\rm{sc}}^{\rm{pa}}(\rho_{XE}, r)
\leq &\limsup_{n\rar\infty} \frac{-1}{n}\log\max_{\omega_{E^n} \in \mc{S}(E^n)} F^2\big(\mc{P}_{f_n}(\rho_{XE}^{\ox n}), \frac{\1_{Z_n}}{|\mc{Z}_n|} \ox \omega_{E^n}\big) \nb\\
\leq & \sup_{\frac{1}{2} \leq \alpha \leq 1} \frac{1-\alpha}{\alpha}\left\{r-\frac{H^{\flat}_{\alpha}(X^m|E^m)_{\tilde{\rho}}}{m}\right\}+\frac{\log v_{m,d_E}}{m}. \label{eq:bepinching}
\end{align}

Now, we proceed to bound $E_{\rm{sc}}^{\rm{pa}}(\rho_{XE}, r)$ based on Eq.~(\ref{eq:bepinching}). We have
\begin{align}
E_{\rm{sc}}^{\rm{pa}}(\rho_{XE}, r) &\leq \sup_{\frac{1}{2} \leq \alpha \leq 1} \frac{1-\alpha}{\alpha}\left\{\inf_{\sigma_{E^m} \in \mc{S}(E^m)}\frac{D_{\alpha}^{\flat}\big(\mc{E}_{\sigma^u_{E^m}}(\rho_{XE}^{\ox m})\big\|\1_{X}^{\ox m} \ox \sigma_{E^m}\big)}{m}+r\right\}+\frac{\log v_{m,d_E}}{m} \nb\\
&\leq  \sup_{\frac{1}{2} \leq \alpha \leq 1} \frac{1-\alpha}{\alpha}\left\{\frac{D_{\alpha}^{\flat}\big(\mc{E}_{\sigma^u_{E^m}}(\rho_{XE}^{\ox m})\big\|\1_{X}^{\ox m} \ox \sigma^u_{E^m}\big)}{m}+r\right\}+\frac{\log v_{m,d_E}}{m} \nb\\
&= \sup_{\frac{1}{2} \leq \alpha \leq 1} \frac{1-\alpha}{\alpha}\left\{\frac{D_{\alpha}^{*}\big(\mc{E}_{\sigma^u_{E^m}}(\rho_{XE}^{\ox m})\big\|\1_{X}^{\ox m} \ox \sigma^u_{E^m}\big)}{m}+r\right\}+\frac{\log v_{m,d_E}}{m} \nb\\
&\leq \sup_{\frac{1}{2} \leq \alpha \leq 1} \frac{1-\alpha}{\alpha}\left\{\frac{D_{\alpha}^{*}\big(\rho_{XE}^{\ox m}\big\|\1_{X}^{\ox m} \ox \sigma^u_{E^m}\big)}{m}+r\right\}+\frac{\log v_{m,d_E}}{m} \nb\\
&\leq \sup_{\frac{1}{2} \leq \alpha \leq 1} \frac{1-\alpha}{\alpha}\left\{\inf_{\sigma_E} \frac{D_{\alpha}^{*}\big(\rho_{XE}^{\ox m}\big\|\1_{X}^{\ox m} \ox \sigma_E^{\ox m}\big)}{m}+r+\frac{\log v_{m,d_E}}{m}\right\}+\frac{\log v_{m,d_E}}{m} \nb\\
&= \sup_{\frac{1}{2} \leq \alpha \leq 1} \frac{1-\alpha}{\alpha}\left\{r-H^{*}_{\alpha}(X|E)_{\rho}+\frac{\log v_{m,d_E}}{m}\right\}+\frac{\log v_{m,d_E}}{m} \nb\\
&\leq \sup_{\frac{1}{2} \leq \alpha \leq 1} \frac{1-\alpha}{\alpha}\left\{r-H^{*}_{\alpha}(X|E)_{\rho}\right\}+\frac{2\log v_{m,d_E}}{m}, \label{eq:fina}
\end{align}
where the third line is because $\mc{E}_{\sigma^u_{E^m}}(\rho_{XE}^{\ox m})$ commutes with $\1_{X}^{\ox m} \ox \sigma^u_{E^m}$, the fourth line is due to the data processing inequality for the
sandwiched R\'enyi divergence and for the fifth line we have used Lemma~\ref{lem:sym} and
Proposition~\ref{prop:mainpro}~(\romannumeral2).
Because Eq.~(\ref{eq:fina}) holds for any $m \in \mathbb{N}$, we complete the proof by letting $m \rar \infty$.
\end{proof}

\subsection{Proof of the optimality part}
In this subsection, we prove the optimality part of Theorem~\ref{theorem:privacy}.

\medskip
\begin{proof}
Let $\big\{f_n : \mc{X}^{\times n} \rar \mc{Z}_n \big\}_{n \in \mathbb{N}}$ be an arbitrary sequence
of hash functions such that
\beq\label{eq:precond}
\liminf_{n\rar\infty} \frac{1}{n} \log |\mc{Z}_n| \geq r.
\eeq
Then by Lemma~\ref{lem:LWD} in the Appendix, for any $\frac{1}{2} < \alpha < 1$ and $\beta>1$ such that $\frac{1}{\alpha}+\frac{1}{\beta}=2$, and for any state $\omega_{E^n}\in\mc{S}(E^n)$, we have
\begin{align}
&\frac{2\alpha}{1-\alpha} \log F\big(\mc{P}_{f_n}(\rho_{XE}^{\ox n}), \frac{\1_{Z_n}}{|\mc{Z}_n|} \ox \omega_{E^n}\big) \nb\\
\leq&H^{*}_{\alpha}(Z_n|E^n)_{\mc{P}_{f_n}(\rho_{XE}^{\ox n})}-H^{*}_{\beta}(Z_n|E^n)_{\frac{\1_{Z_n}}{|\mc{Z}_n|} \ox \omega_{E^n}}  \nb\\
=&H^{*}_{\alpha}(Z_n|E^n)_{\mc{P}_{f_n}(\rho_{XE}^{\ox n})}-\log |\mc{Z}_n|. \label{eq:optpa1}
\end{align}
Let $V_{f_n}:\mc{H}_{X^n}\rar\mc{H}_{X^n}\ox\mc{H}_{Z_n}$ be an isometry such that
$V_{f_n}\ket{x^n}=\ket{x^n}\ox\ket{f_n(x^n)}$ for any $x^n\in\mc{X}^{\times n}$. Then
\beq
\mc{P}_{f_n}(\rho_{XE}^{\ox n})=\tr_{X^n}\left[V_{f_n}\rho_{XE}^{\ox n}V_{f_n}^\dg \right].
\eeq
So,
\begin{align}
     H^{*}_{\alpha}(Z_n|E^n)_{\mc{P}_{f_n}(\rho_{XE}^{\ox n})}
\leq&H^{*}_{\alpha}(X^nZ_n|E^n)_{V_{f_n}\rho_{XE}^{\ox n}V_{f_n}^\dg } \nb\\
  = &H^{*}_{\alpha}(X^n|E^n)_{\rho_{XE}^{\ox n}}  \nb\\
  = &n H^{*}_{\alpha}(X|E)_{\rho_{XE}},   \label{eq:optpa11}
\end{align}
where the first line is by Proposition~\ref{prop:mainpro}~(\romannumeral5), and the second line is because a local isometry does not change the sandwiched R\'enyi conditional entropy. Eqs.~\eqref{eq:precond},\eqref{eq:optpa1} and \eqref{eq:optpa11} imply
\begin{align}
&\limsup_{n\rar\infty} \frac{-1}{n}\log \max_{\omega_{E^n} \in \mc{S}(E^n)}F^2\big(\mc{P}_{f_n}(\rho_{XE}^{\ox n}),\frac{\1_{Z_n}}{|\mc{Z}_n|}\ox\omega_{E^n}\big) \nb\\
\geq &\sup_{\frac{1}{2}<\alpha <1}\frac{1-\alpha}{\alpha}\big\{r-H^{*}_{\alpha}(X|E)_{\rho}\big\} \nb\\
=&\sup_{\frac{1}{2} \leq \alpha \leq 1}
  \frac{1-\alpha}{\alpha}\big\{r-H^{*}_{\alpha}(X|E)_{\rho}\big\}. \label{eq:optpa2}
\end{align}
Noticing that Eq.~\eqref{eq:optpa2} holds for any sequence of hash functions $\{f_n\}_n$ that satisfies Eq.~\eqref{eq:precond}, by the definition of $E_{\rm{sc}}^{\rm{pa}}(\rho_{XE}, r)$, we get
\begin{equation}
E_{\rm{sc}}^{\rm{pa}}(\rho_{XE}, r)
\geq\sup_{\frac{1}{2}\leq\alpha\leq 1}\frac{1-\alpha}{\alpha}\left\{r-H^*_{\alpha}(X|E)_\rho\right\}.
\end{equation}
\end{proof}

\begin{remark}
Eq.~\eqref{eq:optpa2} holds as well if we replace the ``$\limsup$'' with a ``$\liminf$'', which makes the statement stronger. Recall that in the proof of the achievability part, we have upper bounded the strong converse exponent defined with the ``$\limsup$'' (cf.\ Eq.~\eqref{eq:bepinching}), which is stronger than that with a ``$\liminf$''. So our proof shows that, in the definition of $E_{\rm{sc}}^{\rm{pa}}(\rho_{XE}, r)$ in Eq.~\eqref{eq:defscpa}, we can use either ``$\limsup$'' or ``$\liminf$'' to quantify the rate of exponential decay of $\mathfrak{P}^{\rm pa}(\rho_{XE}^{\ox n},f_n)$, leading to the same result of the strong converse exponent.
\end{remark}

\section{Proof of the Strong Converse Exponent for Quantum Information Decoupling}
  \label{sec:proof-dqi}
In this section, we prove Theorem~\ref{theorem:maindec}. The proof is divided into the achievability part and the optimality part, with the former accomplished in the first three subsections, and the latter done in the last subsection. It is worth pointing out that, by proving the achievability part using a sequence of standard decoupling schemes and dealing with the optimality part for any catalytic decoupling schemes, we have shown that Theorem~\ref{theorem:maindec} actually holds for both standard decoupling and catalytic decoupling.

\subsection{Variational expression}
At first, we introduce a useful intermediate quantity $L(\rho_{RA}, r)$.
\begin{definition}
Let $\rho_{RA} \in \mc{S}({RA})$ and $r \geq 0$. The quantity $L(\rho_{RA}, r)$ is defined as
\begin{equation}
L(\rho_{RA}, r):=\sup_{\frac{1}{2} \leq \alpha \leq 1} \frac{1-\alpha}{\alpha}\left\{I_{\alpha}^{\flat}(R:A)_\rho-2r\right\}.
\end{equation}
\end{definition}

In the following proposition, we prove a variational expression for $L(\rho_{RA}, r)$.
\begin{proposition}
\label{prop:vardec}
Let $\rho_{RA} \in \mc{S}(RA)$ and $r \geq 0$. We have
\begin{equation}
L(\rho_{RA}, r)=\inf_{\tau_{RA} \in \mc{S}_{\rho}(RA)} \left\{ D(\tau_{RA} \| \rho_{RA})+|I(R:A)_{\tau}-2r|^+\right\}.
\end{equation}
\end{proposition}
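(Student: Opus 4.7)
The plan mirrors the proof of Proposition~\ref{prop:varpa}. First, by the variational representation for $D_\alpha^\flat$ (Proposition~\ref{prop:mainpro}(\romannumeral3)) applied to $D_\alpha^\flat(\rho_{RA}\|\sigma_R\ox\sigma_A)$, combined with the definition $I_\alpha^\flat(R:A)_\rho = \min_{\sigma_R,\sigma_A} D_\alpha^\flat(\rho_{RA}\|\sigma_R\ox\sigma_A)$, I swap the order of the two minimizations. Using that $\min_{\sigma_R,\sigma_A} D(\tau_{RA}\|\sigma_R\ox\sigma_A) = I(R:A)_\tau$, attained at $(\sigma_R,\sigma_A)=(\tau_R,\tau_A)$, I obtain, for $\alpha\in[\frac{1}{2},1)$,
\[
\frac{1-\alpha}{\alpha}\bigl\{I_\alpha^\flat(R:A)_\rho - 2r\bigr\} = \min_{\tau_{RA}\in\mc{S}_\rho(RA)}\bigl\{D(\tau_{RA}\|\rho_{RA}) + \lambda\bigl(I(R:A)_\tau - 2r\bigr)\bigr\},
\]
with $\lambda := \frac{1-\alpha}{\alpha} \in (0,1]$; the identity extends trivially to $\alpha=1$ (both sides vanish).

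Next I take the supremum over $\alpha\in[\frac{1}{2},1]$, equivalently over $\lambda\in[0,1]$, and invoke Sion's minimax theorem (Lemma~\ref{lem:minimax}) to interchange $\sup$ and $\min$. The objective is linear (hence concave) in $\lambda$, so the only nontrivial hypothesis is convexity in $\tau$ for each fixed $\lambda\in[0,1]$. This is the main obstacle: unlike the conditional entropy $-H(X|E)_\tau$ in Proposition~\ref{prop:varpa}, which is convex in $\tau$ thanks to its variational expression $\min_{\sigma_E}D(\tau\|\1_X\ox\sigma_E)$ (an infimum of jointly convex functions over an affine parameter), the mutual information $I(R:A)_\tau$ is in general neither convex nor concave in $\tau$, because the map $(\sigma_R,\sigma_A)\mapsto\sigma_R\ox\sigma_A$ appearing in its variational expression is bilinear rather than affine.

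I overcome this by the identity
\[
D(\tau_{RA}\|\rho_{RA}) + I(R:A)_\tau = -H(A|R)_\tau - H(R|A)_\tau - \tr\bigl(\tau_{RA}\log\rho_{RA}\bigr),
\]
obtained by direct expansion from $H(X|Y)=H(XY)-H(Y)$. Since the conditional entropy is concave in the joint state (itself a consequence of joint convexity of relative entropy via $-H(A|B)_\tau=\min_{\sigma_B}D(\tau_{AB}\|\1_A\ox\sigma_B)$), both $-H(A|R)_\tau$ and $-H(R|A)_\tau$ are convex in $\tau$, and the remaining linear term preserves convexity. Writing, for $\lambda\in[0,1]$,
\[
D(\tau_{RA}\|\rho_{RA}) + \lambda\, I(R:A)_\tau = \lambda\bigl[D(\tau_{RA}\|\rho_{RA})+I(R:A)_\tau\bigr] + (1-\lambda)\,D(\tau_{RA}\|\rho_{RA}),
\]
exhibits the objective as a nonnegative combination of convex functions of $\tau$, hence convex. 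Sion's theorem then delivers
\[
L(\rho_{RA},r) = \min_{\tau_{RA}\in\mc{S}_\rho(RA)}\sup_{\lambda\in[0,1]}\bigl\{D(\tau_{RA}\|\rho_{RA}) + \lambda(I(R:A)_\tau - 2r)\bigr\} = \min_{\tau_{RA}\in\mc{S}_\rho(RA)}\bigl\{D(\tau_{RA}\|\rho_{RA}) + |I(R:A)_\tau - 2r|^+\bigr\},
\]
which is the claimed identity. Thus the only step beyond Proposition~\ref{prop:varpa} is the convexity check, handled by the conditional-entropy identity above.
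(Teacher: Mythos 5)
Your proof is correct and takes essentially the same route as the paper's: the variational representation of $D_\alpha^\flat$, a swap of minimizations, Sion's minimax theorem over $\lambda=\frac{1-\alpha}{\alpha}\in[0,1]$, and a decomposition of the objective into convex pieces to justify the minimax step. Your phrasing of the convexity check as a $\lambda$-convex combination of $D(\tau\|\rho)+I(R:A)_\tau$ and $D(\tau\|\rho)$ is, after expanding $D(\tau\|\rho)=-H(RA)_\tau-\tr(\tau\log\rho)$, algebraically identical to the paper's expansion $-\tr(\tau_{RA}\log\rho_{RA})-\lambda H(A|R)_\tau-\lambda H(R|A)_\tau-(1-\lambda)H(RA)_\tau$, so the two arguments coincide in substance.
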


\begin{proof}
For $\frac{1}{2} \leq \alpha < 1$, we have
\begin{align}
 & \frac{1-\alpha}{\alpha}\left\{I_{\alpha}^{\flat}(R:A)_\rho-2r\right\} \nb\\
=& \frac{1-\alpha}{\alpha}
   \left\{\inf_{\sigma_R,\sigma_A}D_{\alpha}^{\flat}(\rho_{RA}\|\sigma_{R}\ox\sigma_{A})-2r\right\} \nb\\
=&\frac{1-\alpha}{\alpha}
  \left\{\inf_{\sigma_R, \sigma_A} \inf_{\tau_{RA} \in \mc{S}(RA)}\Big\{D(\tau_{RA} \|\sigma_{R} \ox \sigma_{A} )+\frac{\alpha}{1-\alpha}D(\tau_{RA}\| \rho_{RA}) \Big\}-2r \right\} \nb\\
=&\frac{1-\alpha}{\alpha}
  \left\{\inf_{\tau_{RA} \in \mc{S}(RA)}\Big\{D(\tau_{RA} \|\tau_{R} \ox \tau_{A} )
   +\frac{\alpha}{1-\alpha}D(\tau_{RA}\| \rho_{RA}) \Big\}-2r \right\} \nb\\
=&\inf_{\tau_{RA} \in \mc{S}_\rho(RA)}
  \left\{D(\tau_{RA} \| \rho_{RA})+\frac{1-\alpha}{\alpha}\big(I(R:A)_{\tau}-2r\big)\right\}, \label{eq:v3}
\end{align}
where the second line is by definition, the third line is due to the variational expression for $D_{\alpha}^{\flat}(\rho \| \sigma)$ (see Proposition~\ref{prop:mainpro}~(\romannumeral3)). When $\alpha=1$, we can directly check that the first line and the last line of Eq.~\eqref{eq:v3} are still equal. So,
\begin{align}
  L(\rho_{RA}, r)
=&\sup_{\frac{1}{2} \leq \alpha \leq 1} \inf_{\tau_{RA} \in \mc{S}_\rho(RA)}
  \left\{D(\tau_{RA} \| \rho_{RA})+\frac{1-\alpha}{\alpha}\big(I(R:A)_{\tau}-2r\big)\right\} \nb\\
=&\sup_{\lambda \in [0,1]} \inf_{\tau_{RA} \in \mc{S}_\rho(RA)}
\big\{D(\tau_{RA} \| \rho_{RA})+\lambda(I(R:A)_{\tau}-2r)  \big\} \nb\\
=& \inf_{\tau_{RA} \in \mc{S}_\rho(RA)} \sup_{\lambda \in [0,1]}
\big\{D(\tau_{RA} \| \rho_{RA})+\lambda(I(R:A)_{\tau}-2r)  \big\} \nb\\
=&\inf_{\tau_{RA} \in \mc{S}_\rho(RA)} \big\{ D(\tau_{RA} \| \rho_{RA})+|I(R:A)_\tau-2r|^+ \big\}.
\end{align}
In the above derivation, the interchange of the supremum and the infimum is by Sion's minimax theorem (Lemma~\ref{lem:minimax} in the Appendix). The convexity of the function $\tau_{RA}\mapsto D(\tau_{RA} \| \rho_{RA})+\lambda(I(R:A)_\tau-2r)$ that the Sion's minimax theorem requires is not obvious. To see this, we write
\[
\begin{split}
&D(\tau_{RA} \| \rho_{RA})+\lambda(I(R:A)_\tau-2r)\\
=&-\tr \tau_{RA} \log \rho_{RA} -\lambda H(A|R)_\tau -\lambda H(R|A)_\tau-(1-\lambda)H(RA)_\tau-2\lambda r.
\end{split}
\]
Because $-\tr \tau_{RA} \log \rho_{RA}$, $-\lambda H(A|R)_\tau$, $ -\lambda H(R|A)_\tau$ and
$-(1-\lambda)H(RA)_\tau$ are all convex functions of $\tau_{RA}$ (see, e.g., \cite{Wilde2013quantum} for the latter three), the result follows.
\end{proof}

\subsection{An upper bound in terms of the log-Euclidean R{\'e}nyi mutual information}
Now we show that $L(\rho_{RA}, r)$ is achievable as a strong converse rate for quantum information decoupling.
\begin{theorem}
\label{thm:Fdeco}
Let $\rho_{RA} \in \mc{S}(RA)$ and $r \geq 0$. It holds that
\begin{equation}
E^{\rm{dec}}_{\rm{sc}}(\rho_{RA}, r) \leq L(\rho_{RA}, r).
\end{equation}
Furthermore, this bound holds even if we are restricted to the setting of standard decoupling.
\end{theorem}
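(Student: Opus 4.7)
The plan is to mirror the proof of Theorem~\ref{thm:PA-F}, the privacy-amplification analog. Using the variational expression of Proposition~\ref{prop:vardec}, I split $L(\rho_{RA},r)=\min\{L_1,L_2\}$ with
\[
L_1:=\inf\bigl\{D(\tau_{RA}\|\rho_{RA})\;:\;\tau_{RA}\in\mc{S}_\rho(RA),\ I(R:A)_\tau<2r\bigr\}
\]
and
\[
L_2:=\inf\bigl\{D(\tau_{RA}\|\rho_{RA})+I(R:A)_\tau-2r\;:\;\tau_{RA}\in\mc{S}_\rho(RA),\ I(R:A)_\tau\geq 2r\bigr\},
\]
and prove $E_{\rm sc}^{\rm dec}(\rho_{RA},r)\leq L_1$ and $E_{\rm sc}^{\rm dec}(\rho_{RA},r)\leq L_2$ separately, mirroring Lemmas~\ref{lem:F1} and~\ref{lem:F_2}.

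The common engine is a change of measure. Given any standard decoupling isometry $V_n:\mc{H}_{A^n}\to\mc{H}_{\bar A_n\tilde A_n}$, set $\omega^\sigma_n:=\tr_{\tilde A_n}[V_n\sigma^{\ox n}V_n^\dg]$ for $\sigma\in\{\rho,\tau\}$. The fidelity--relative-entropy inequality (the same Lemma~\ref{lem:fidelity-re} used in the privacy-amplification proof), applied with target $\tau_R^{\ox n}\ox\omega^\tau_{n,\bar A_n}$ and combined with data processing and the identity $\min_{\omega_R,\omega_{\bar A}}D(\omega^\tau_n\|\omega_R\ox\omega_{\bar A})=I(R^n:\bar A_n)_{\omega^\tau_n}$, gives
\[
-\log\mathfrak{P}^{\rm dec}(\rho_{RA}^{\ox n},V_n)\;\leq\;n\,D(\tau\|\rho)+I(R^n:\bar A_n)_{\omega^\tau_n}.
\]
Thus the problem reduces to choosing, for each $\tau$, a rate-$r$ standard decoupling scheme whose residual mutual information on $\tau^{\ox n}$ is asymptotically at most $n\,|I(R:A)_\tau-2r|^+$.

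For $L_1$, I take a scheme at rate $r>I(R:A)_\tau/2$ furnished by the direct part of decoupling~\cite{ADHW2009mother}, in the exponential-trace-distance form of~\cite{LiYao2021reliability}, so that $\|\omega^\tau_n-\tau_R^{\ox n}\ox\omega^\tau_{n,\bar A_n}\|_1\leq 2^{-nE}$ with $E>0$. The Audenaert--Fannes continuity of the von~Neumann entropy then upgrades this to $I(R^n:\bar A_n)_{\omega^\tau_n}\to 0$, because the factor $\log\dim(R^n\bar A_n)=O(n)$ is killed by the exponential. For $L_2$, $\tau$ itself cannot be decoupled at rate $r$, so I pick $r':=\tfrac12 I(R:A)_\tau+\delta>r$ for small $\delta>0$ and apply the $L_1$ construction at rate $r'$ to get $V_n:\mc{H}_{A^n}\to\mc{H}_{\bar A'_n\tilde A'_n}$ with $|\tilde A'_n|=2^{nr'}$ and $I(R^n:\bar A'_n)_{\omega^\tau_n}\to 0$. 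Decomposing $\tilde A'_n=\tilde A_n\ox C_n$ with $|\tilde A_n|=2^{nr}$ and $|C_n|=2^{n(r'-r)}$, I view $V_n$ as a rate-$r$ standard scheme with kept system $\bar A_n:=\bar A'_n\ox C_n$. The dimension bound for the mutual information (Proposition~\ref{prop:mainpro}~(vi) at $\alpha=1$) then gives $I(R^n:\bar A_n)_{\omega^\tau_n}\leq I(R^n:\bar A'_n)_{\omega^\tau_n}+2\log|C_n|\leq n(I(R:A)_\tau-2r+2\delta)+o(n)$. Plugging into the main display, sending $\delta\downarrow 0$ and infimising over $\tau$ then yields $L_2$.

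The main obstacle is the $L_1$ step: I need the decoupling scheme for $\tau^{\ox n}$ to converge in relative entropy, and the trace-distance exponent $E$ from~\cite{LiYao2021reliability} must be strictly positive and controllable sufficiently far from the critical rate $I(R:A)_\tau/2$, so that Audenaert--Fannes absorbs the $O(n)$ dimension factor. If this route proves too lossy or non-uniform near the boundary, a fallback is to bound the residual directly via a sandwiched R\'enyi mutual information at some $\alpha>1$ (using the sandwiched direct exponent of~\cite{LiYao2021reliability}) and then take $\alpha\searrow 1$, recovering the KL statement by monotonicity of Proposition~\ref{prop:mainpro}~(i). Granted this, the $L_2$ step is essentially bookkeeping, since the relabelling $V_n\mapsto V'_n$ is a permutation of tensor factors realising a genuine rate-$r$ standard decoupling scheme meeting the constraint of Eq.~\eqref{eq:defsce}.
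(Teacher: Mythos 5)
Your proposal mirrors the paper's overall architecture: the same split $L=\min\{L_1,L_2\}$ via Proposition~\ref{prop:vardec}, and the same ``engine'' based on the fidelity--relative-entropy inequality (Lemma~\ref{lem:fidelity-re}), data processing, and the direct part of decoupling from~\cite{ADHW2009mother}. Two remarks on where you diverge from the paper's own execution.

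First, your ``main obstacle'' in the $L_1$ step is not actually an obstacle, and the fallback via sandwiched quantities at $\alpha>1$ is unnecessary. Your main display,
\[
-\log\mathfrak{P}^{\rm dec}(\rho_{RA}^{\ox n},V_n)\;\leq\;n\,D(\tau\|\rho)\;+\;I(R^n:\bar A_n)_{\omega^\tau_n},
\]
only requires $\tfrac{1}{n}I(R^n:\bar A_n)_{\omega^\tau_n}\to 0$ after dividing by $n$ and taking $\limsup$, not $I(R^n:\bar A_n)_{\omega^\tau_n}\to 0$. Alicki--Fannes gives $I(R^n:\bar A_n)_{\omega^\tau_n}\leq 4\epsilon_n\cdot n\log|R|+O(1)$ with $\epsilon_n=\tfrac12\|\omega^\tau_n-\tau_R^{\ox n}\ox\omega^\tau_{n,\bar A_n}\|_1$, and hence $\tfrac1n I(R^n:\bar A_n)_{\omega^\tau_n}=O(\epsilon_n)\to 0$ as soon as $\epsilon_n\to 0$, which the plain (non-exponential) decoupling theorem of~\cite{ADHW2009mother} already provides at any rate strictly above $\tfrac12 I(R:A)_\tau$; this is exactly the route the paper takes in Lemma~\ref{lem:decF1} (via Lemma~\ref{lem:decoupling}). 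Invoking the exponential trace-distance decay of~\cite{LiYao2021reliability} is an over-requirement and risks circularity, since that paper is by the same authors and concerns the catalytic setting.

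Second, your $L_2$ argument is a genuine variant of the paper's Lemma~\ref{lem:decouF2}. The paper passes to an Uhlmann extension $\omega_{R^n\bar A_n\tilde A_n}$, splits the discarded system, and then runs a chain of fidelity inequalities using the operator bound $\tr_{\tilde A'_n}[\omega_{R^n\bar A_n\tilde A_n}]\leq|\bar A'_n|^2\,\omega_{R^n}\ox\omega_{\bar A_n}\ox\tfrac{\1_{\bar A'_n}}{|\bar A'_n|}$ together with data processing for fidelity. You instead stay entirely in the relative-entropy picture: reuse the same engine for the relabelled scheme with kept system $\bar A'_n\ox C_n$, and control the extra term by the mutual-information dimension bound $I(R^n:\bar A'_nC_n)_{\tilde\omega^\tau_n}\leq I(R^n:\bar A'_n)_{\omega^\tau_n}+2\log|C_n|$. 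This is cleaner and avoids Uhlmann altogether. Two minor bookkeeping points to fix: you should evaluate the mutual information in the state $\tilde\omega^\tau_n=\tr_{\tilde A_n}[V_n\tau^{\ox n}V_n^\dg]$ (which retains $C_n$), not in $\omega^\tau_n$ as written; and $2^{nr}$ and $2^{n(r'-r)}$ need to be rounded to integers with suitable embeddings as the paper does for $|\bar A'_n|$. With these cosmetic repairs the argument is correct and gives the same bound $L_2+2\delta$.
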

In order to prove Theorem~\ref{thm:Fdeco}, we introduce
\begin{align}
L_1(\rho_{RA}, r)&:=\inf_{\tau_{RA} \in \mc{O}_1} D(\tau_{RA} \| \rho_{RA}), \\
L_2(\rho_{RA}, r)&:=\inf_{\tau_{RA} \in \mc{O}_2} \left\{D(\tau_{RA} \| \rho_{RA})+I(R:A)_{\tau}-2r \right\},
\end{align}
with
\begin{align}
\mc{O}_1 &:= \{\tau_{RA}~|~\tau_{RA} \in \mc{S}_\rho(RA),~I(R:A)_\tau < 2r\}, \\
\mc{O}_2 &:= \{\tau_{RA}~|~\tau_{RA} \in \mc{S}_\rho(RA),~I(R:A)_{\tau} \geq 2r\}.
\end{align}
It is obviously seen from Proposition~\ref{prop:vardec} that
\begin{equation}
\label{eq:LL1L2}
L(\rho_{RA}, r)=\min \{L_1(\rho_{RA}, r), L_2(\rho_{RA}, r) \}.
\end{equation}
So, it suffices to show that $E^{\rm{dec}}_{\rm{sc}}(\rho_{RA}, r)$ is upper bounded
by both $L_1(\rho_{RA}, r)$ and $L_2(\rho_{RA}, r)$. We accomplish the proof in
the following Lemma~\ref{lem:decF1} and Lemma~\ref{lem:decouF2}, respectively.

\begin{lemma}
\label{lem:decF1}
For $\rho_{RA} \in \mc{S}(RA)$ and $r \geq 0$, we have
\begin{equation}
E^{\rm{dec}}_{\rm{sc}}(\rho_{RA}, r) \leq L_1(\rho_{RA}, r).
\end{equation}
\end{lemma}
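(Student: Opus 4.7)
The plan is to mimic the proof of Lemma~\ref{lem:F1}: exploit the fact that when $I(R:A)_{\tau}<2r$, asymptotically perfect decoupling of $\tau_{RA}^{\ox n}$ at rate $r$ is available, and then transfer the analysis to $\rho_{RA}^{\ox n}$ using the fidelity--relative-entropy inequality of Lemma~\ref{lem:fidelity-re}, paying a price of $D(\tau_{RA}\|\rho_{RA})$ per copy. Infimizing over admissible $\tau_{RA}\in\mc{O}_1$ will yield the bound by $L_1(\rho_{RA},r)$.

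Concretely, given $\delta>0$, I would pick $\tau_{RA}\in\mc{O}_1$ with $D(\tau_{RA}\|\rho_{RA})\leq L_1(\rho_{RA},r)+\delta$. Since $I(R:A)_{\tau}<2r$ strictly, the standard achievability result for quantum information decoupling~\cite{ADHW2009mother} supplies a sequence of (standard) schemes $\{\mc{D}_n=(U_n:\mc{H}_{A^n}\rar\mc{H}_{\bar{A}_n\tilde{A}_n})\}_n$ with $\limsup_n\tfrac{1}{n}\log|\tilde{A}_n|\leq r$, and state sequences $\{\omega_R^{(n)}\}_n$, $\{\omega_{\bar{A}_n}^{(n)}\}_n$, such that the decoupled state $\tr_{\tilde{A}_n}[U_n\tau_{RA}^{\ox n}U_n^\dg]$ approaches $\omega_R^{(n)}\ox\omega_{\bar{A}_n}^{(n)}$. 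What the argument actually needs is the strengthened, relative-entropy version
\beq
D\Big(\tr_{\tilde{A}_n}\big[U_n\tau_{RA}^{\ox n}U_n^\dg\big]\ \Big\|\ \omega_R^{(n)}\ox\omega_{\bar{A}_n}^{(n)}\Big)\stackrel{n\rar\infty}{\lrar}0,
\eeq
the exact analogue of the role played by~\cite[Theorem~1]{Hayashi2015precise} in Eq.~\eqref{eq:paachi-1}.

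Granting this, I apply Lemma~\ref{lem:fidelity-re} with intermediate state $\tr_{\tilde{A}_n}[U_n\tau_{RA}^{\ox n}U_n^\dg]$ to get
\beq
-\log F^2\Big(\tr_{\tilde{A}_n}[U_n\rho^{\ox n}U_n^\dg],\,\omega_R^{(n)}\ox\omega_{\bar{A}_n}^{(n)}\Big)\leq nD(\tau_{RA}\|\rho_{RA})+\epsilon_n,
\eeq
where the first summand is controlled by the data processing inequality for the CPTP map $X\mapsto\tr_{\tilde{A}_n}[U_nX^{\ox n}U_n^\dg]$ applied to $D(\tau_{RA}^{\ox n}\|\rho_{RA}^{\ox n})$, and $\epsilon_n\rar 0$ by the display above. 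Since $\omega_R^{(n)}\ox\omega_{\bar{A}_n}^{(n)}$ is feasible in the maximization defining $\mathfrak{P}^{\rm dec}(\rho_{RA}^{\ox n},\mc{D}_n)$, dividing by $n$ and taking $\limsup$ produces $E^{\rm dec}_{\rm sc}(\rho_{RA},r)\leq D(\tau_{RA}\|\rho_{RA})\leq L_1(\rho_{RA},r)+\delta$, and letting $\delta\rar 0$ finishes the proof.

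The main obstacle I anticipate is justifying the relative-entropy convergence of the decoupled state, because standard decoupling theorems are formulated only in trace distance or fidelity. I expect this strengthening to be deducible from the exponential-convergence bound proved in~\cite{LiYao2021reliability}: exponential decay of $1-F^2$, combined with a mild control of the supports of the relevant states (which can be enforced by composing with a small pinching, as elsewhere in this paper), can be converted into vanishing relative entropy via routine estimates. A cleaner alternative would be to prove the required statement directly by adapting the symmetric-state/pinching construction that will appear in the subsequent subsections.
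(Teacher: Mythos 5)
Your overall strategy is the same as the paper's: pick $\tau_{RA}\in\mc{O}_1$ nearly attaining $L_1$, apply the classical decoupling achievability result of~\cite{ADHW2009mother} to $\tau_{RA}^{\ox n}$, and transfer to $\rho_{RA}^{\ox n}$ via Lemma~\ref{lem:fidelity-re} plus data processing for relative entropy. You correctly identify the crux: upgrading the trace-distance (or fidelity) convergence of the decoupled state to vanishing \emph{relative entropy}. However, you leave this step as an acknowledged gap and your two proposed fixes do not plausibly close it.

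Your first suggestion --- exponential decay of $1-F^2$ from~\cite{LiYao2021reliability} plus ``mild support control'' --- is not a routine estimate. Relative entropy is not dominated by any function of trace distance without strong control over \emph{minimum eigenvalues} of the second argument, and here the second argument involves $\omega_{\bar A_n}$, which is a scheme-dependent reduced state whose spectrum you do not control. An exponential rate of trace-distance decay does not by itself tame the potential $-\log(\text{min eigenvalue})$ factor, so the conversion is not automatic. Your second suggestion (a pinching/symmetric-state construction) is left unspecified.

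The paper closes the gap with a different and sharper idea that you did not locate. Crucially, the decoupling operation does not touch $R^n$, so the $R^n$-marginal of $\tr_{\tilde A_n}[U_n\tau_{RA}^{\ox n}U_n^\dg]$ is exactly $\tau_R^{\ox n}$. By choosing the comparison product state to be $\tau_R^{\ox n}\ox\omega_{\bar A_n}$ (not a generic $\omega_R^{(n)}\ox\omega_{\bar A_n}^{(n)}$ as in your proposal), the relative entropy collapses to a \emph{difference of conditional entropies}: $H(R^n|\bar A_n)_{\tau_R^{\ox n}\ox\omega_{\bar A_n}}-H(R^n|\bar A_n)$ on the decoupled state. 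This difference is controlled by the Alicki--Fannes continuity bound~\cite{AlickiFannes2004continuity}, which is insensitive to support mismatches and whose logarithmic dimension prefactor $n\log|R|$ is exactly compensated by the vanishing trace distance $\epsilon_n$, giving $\frac{1}{n}D(\cdot\|\cdot)\rar 0$. This structural observation is the essential missing idea in your proposal.
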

\begin{proof}
The definition of $L_1(\rho_{RA}, r)$ ensures that, for any $\delta>0$,
there exists a state $\tau_{RA} \in \mc{S}_\rho(RA)$ such that
\begin{align}
&D(\tau_{RA} \| \rho_{RA}) \leq L_1(\rho_{RA}, r)+\delta, \label{eq:reupper}\\
&\frac{1}{2}I(R:A)_\tau < r. \label{eq:muI}
\end{align}

With Eq.~\eqref{eq:muI}, the work~\cite{ADHW2009mother} implies that $r$ is an asymptotically achievable rate of decoupling cost for the state $\tau_{RA}$. Moreover, this can be done by employing standard decoupling schemes in which the catalytic systems are of dimension one and can be omitted. Specifically, there exists a sequence of isometries $\{\mc{D}_n\equiv U_n:\mc{H}_{A^n}\rar\mc{H}_{\bar{A}_n\tilde{A}_n}\}_{n \in \mathbb{N}}$ which serve as decoupling schemes, such that
\begin{align}
&\lim_{n \rar \infty} \left\|\tr_{\tilde{A}_n}\big[U_{n}\tau_{RA}^{\ox n}
   U_{n}^\dg \big]-\tau_R^{\ox n} \ox \omega_{\bar{A}_n} \right\|_1=0, \label{eq:decadhw-1} \\
&\lim_{n\rar\infty} \frac{1}{n} \log |\tilde{A}_n|=\frac{1}{2}I(R:A)_\tau+\epsilon \leq r,
 \label{eq:decadhw-2}
\end{align}
where $\omega_{\bar{A}_n}$ is the reduced state of $\tr_{\tilde{A}_n}[U_{n}\tau_{RA}^{\ox n}
U_{n}^\dg ]$ on $\bar{A}_n$, and $\epsilon>0$ is a sufficiently small constant. This statement is implicitly included in~\cite{ADHW2009mother}, and we provide a detailed proof in the Appendix (see Lemma~\ref{lem:decoupling}). We remark that Eq.~\eqref{eq:decadhw-1} is equivalent to
\beq
\mathfrak{P}^{\rm dec}\left(\tau_{RA}^{\ox n},\mc{D}_n\right)
\equiv\max_{\gamma_{R^n},\gamma_{\bar{A}_n}}F^2
\left(\tr_{\tilde{A}_n}[U_{n}\tau_{RA}^{\ox n}U_{n}^\dg ],\gamma_{R^n}\ox\gamma_{\bar{A}_n}\right) \rar 1,
\eeq
which can be verified by employing the Fuchs-van de Graaf inequality~\cite{FuchsVan1999cryptographic} between fidelity and trace distance, as well as the triangle inequality and the contractivity property of the trace distance.

We want a result similar to Eq.~\eqref{eq:decadhw-1}, but with the trace distance replaced by the relative entropy. A weaker form sufficient for our purpose can be obtained by making use of the uniform continuity of quantum conditional entropy~\cite{AlickiFannes2004continuity,Winter2016tight}.
At first, we have
\begin{align}
&D\left(\tr_{\tilde{A}_n}[U_{n}\tau_{RA}^{\ox n}U_{n}^\dg ]
\big\|\tau_R^{\ox n}\ox\omega_{\bar{A}_n}\right) \nb\\
=&H(R^n|\bar{A}_n)_{\tau_R^{\ox n}\ox\omega_{\bar{A}_n}}
 -H(R^n|\bar{A}_n)_{\tr_{\tilde{A}_n}[U_{n}\tau_{RA}^{\ox n}U_{n}^\dg ]} \nb\\
\leq & 4\epsilon_n\log|R|^n-2\epsilon_n\log\epsilon_n-2(1-\epsilon_n)\log(1-\epsilon_n)\label{eq:ceconti},
\end{align}
where $\epsilon_n:=\frac{1}{2}\|\tr_{\tilde{A}_n}\big[U_{n}\tau_{RA}^{\ox n}U_{n}^\dg \big]-\tau_R^{\ox n} \ox \omega_{\bar{A}_n}\|_1$, and the last line is a direct application of the Alicki-Fannes inequality~\cite{AlickiFannes2004continuity}. Then, we combine Eq.~\eqref{eq:decadhw-1} and Eq.~\eqref{eq:ceconti} to obtain
\beq\label{eq:reasydec}
\lim_{n\rar\infty}\frac{1}{n}
D\left(\tr_{\tilde{A}_n}[U_{n}\tau_{RA}^{\ox n}U_{n}^\dg ]
\big\|\tau_R^{\ox n}\ox\omega_{\bar{A}_n}\right)=0.
\eeq

Now, we apply the above decoupling schemes to the states $\{\rho_{RA}^{\ox n}\}_n$ and estimate the performance. Our main method is Lemma~\ref{lem:fidelity-re}, from which we deduce that
\begin{align}
     &-\log \mathfrak{P}^{\rm dec}\left(\rho_{RA}^{\ox n},\mc{D}_n\right) \nb\\
  =  &-\log \max_{\gamma_{R^n},\gamma_{\bar{A}_n}}F^2\left(
       \tr_{\tilde{A}_n}[U_{n}\rho_{RA}^{\ox n}U_{n}^\dg ],\gamma_{R^n}\ox\gamma_{\bar{A}_n}\right) \nb\\
\leq &-\log F^2\left(\tr_{\tilde{A}_n}[U_{n}\rho_{RA}^{\ox n}U_{n}^\dg ],
       \tau_R^{\ox n}\ox\omega_{\bar{A}_n}\right) \nb\\
\leq &D\big(\tr_{\tilde{A}_n}[U_{n}\tau_{RA}^{\ox n}U_{n}^\dg ] \big\|
       \tr_{\tilde{A}_n}[U_{n}\rho_{RA}^{\ox n}U_{n}^\dg ]\big)
     +D\left(\tr_{\tilde{A}_n}[U_{n}\tau_{RA}^{\ox n}U_{n}^\dg ] \big\|
       \tau_R^{\ox n}\ox\omega_{\bar{A}_n}\right) \nb\\
\leq &n D(\tau_{RA} \| \rho_{RA})+D\big(\tr_{\tilde{A}_n}[U_{n}\tau_{RA}^{\ox n}U_{n}^\dg ]\big\|
       \tau_R^{\ox n}\ox\omega_{\bar{A}_n}\big) , \label{eq:decachi-11}
\end{align}
where the equality is by definition (Eq.~\eqref{eq:def-p-dec} specialized to the standard setting of decoupling), and the last inequality is due to the data processing inequality of the relative entropy. Eq.~\eqref{eq:reasydec} and Eq.~\eqref{eq:decachi-11} together result in
\beq\label{eq:decachi-12}
\limsup_{n\rar\infty}\frac{-1}{n}\log\mathfrak{P}^{\rm dec}\left(\rho_{RA}^{\ox n},\mc{D}_n\right)
\leq D(\tau_{RA} \| \rho_{RA}).
\eeq

At last, according to the definition of $E^{\rm{dec}}_{\rm{sc}}(\rho_{RA}, r)$, Eq.~\eqref{eq:decadhw-2} and Eq.~\eqref{eq:decachi-12} imply that
\begin{align}
       E^{\rm{dec}}_{\rm{sc}}(\rho_{RA}, r)
\leq & D(\tau_{RA} \| \rho_{RA}) \nb\\
\leq & L_1(\rho_{RA}, r)+\delta,
\end{align}
where the second line is due to Eq.~\eqref{eq:reupper}. Since $\delta>0$ is arbitrary, we complete the proof by letting $\delta\rar 0$.
\end{proof}

\begin{lemma}
\label{lem:decouF2}
For $\rho_{RA} \in \mc{S}(RA)$ and $r \geq 0$, we have
\begin{equation}
E^{\rm{dec}}_{\rm{sc}}(\rho_{RA}, r) \leq L_2(\rho_{RA}, r).
\end{equation}
\end{lemma}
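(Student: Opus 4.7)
The plan mirrors the strategy used for Lemma~\ref{lem:F_2} in the privacy amplification section: exploit the variational representation of $L_2$ via a ``tilted'' auxiliary state $\tau_{RA}$, construct a decoupling scheme designed to work perfectly on $\tau$, shrink it to meet the smaller rate $r$, and estimate the degradation on $\rho$ through Lemma~\ref{lem:fidelity-re}.

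Specifically, by the definition of $L_2$, for any $\delta>0$ I would pick $\tau_{RA}\in\mc{S}_\rho(RA)$ with $I(R:A)_\tau\geq 2r$ and $D(\tau_{RA}\|\rho_{RA})+I(R:A)_\tau-2r\leq L_2(\rho_{RA},r)+\delta$. Applying the asymptotically achievable standard decoupling theorem (Lemma~\ref{lem:decoupling} in the Appendix) to $\tau_{RA}$ at rate $r'':=\tfrac{1}{2}I(R:A)_\tau+\delta\geq r+\delta$, I obtain isometries $U_n:\mc{H}_{A^n}\rar\mc{H}_{\bar{A}_n\tilde{A}'_n}$ with $\log|\tilde{A}'_n|\leq\lceil n r''\rceil$ such that $\|\tr_{\tilde{A}'_n}[U_n\tau_{RA}^{\ox n}U_n^\dg]-\tau_R^{\ox n}\ox\omega_{\bar{A}_n}\|_1\rar 0$.

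To descend from rate $r''$ to rate $r$, I would split $\tilde{A}'_n=\tilde{A}_n\ox\tilde{A}''_n$ with $\log|\tilde{A}_n|=\lceil nr\rceil$ and $\log|\tilde{A}''_n|\leq n(r''-r)+O(1)$, and then take $\mc{D}_n$ to be the same isometry but discarding only $\tilde{A}_n$, so that $\bar{A}_n^{\text{new}}:=\bar{A}_n\tilde{A}''_n$. Setting $\sigma_n:=\tr_{\tilde{A}_n}[U_n\tau_{RA}^{\ox n}U_n^\dg]$ and choosing the trial target $\tau_R^{\ox n}\ox(\sigma_n)_{\bar{A}_n\tilde{A}''_n}$, Lemma~\ref{lem:fidelity-re} gives
\begin{equation*}
-\log F^2\bigl(\tr_{\tilde{A}_n}[U_n\rho_{RA}^{\ox n}U_n^\dg],\,\tau_R^{\ox n}\ox(\sigma_n)_{\bar{A}_n\tilde{A}''_n}\bigr)\leq D\bigl(\sigma_n\bigm\|\tr_{\tilde{A}_n}[U_n\rho_{RA}^{\ox n}U_n^\dg]\bigr)+I(R^n:\bar{A}_n\tilde{A}''_n)_{\sigma_n}.
\end{equation*}
The first term is at most $nD(\tau_{RA}\|\rho_{RA})$ by the data processing inequality, and the second is controlled by the dimension bound (Proposition~\ref{prop:mainpro}~(vi) at $\alpha=1$) as $I(R^n:\bar{A}_n\tilde{A}''_n)_{\sigma_n}\leq I(R^n:\bar{A}_n)_{\sigma_n}+2\log|\tilde{A}''_n|$, which ultimately yields $\limsup_n\tfrac{-1}{n}\log\mathfrak{P}^{\text{dec}}(\rho_{RA}^{\ox n},\mc{D}_n)\leq D(\tau_{RA}\|\rho_{RA})+I(R:A)_\tau-2r+2\delta\leq L_2(\rho_{RA},r)+3\delta$, and sending $\delta\rar 0$ finishes the proof.

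The main technical subtlety will be ensuring that $I(R^n:\bar{A}_n)_{\sigma_n}=o(n)$, since the decoupling theorem only supplies $o(1)$ closeness in trace distance between $\tr_{\tilde{A}'_n}[U_n\tau_{RA}^{\ox n}U_n^\dg]$ and the product state $\tau_R^{\ox n}\ox\omega_{\bar{A}_n}$. I would close this gap through the Alicki--Fannes--Winter continuity bound for conditional entropy: since the unperturbed value $I(R^n:\bar{A}_n)_{\tau_R^{\ox n}\ox\omega_{\bar{A}_n}}=0$, a trace-distance deviation of $\epsilon_n=o(1)$ translates into a mutual-information deviation of at most $2\epsilon_n\log|R|^n+g(\epsilon_n)=o(n)$, which is sufficient for the asymptotic analysis above. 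Once this is in hand, the rest of the estimate is bookkeeping.
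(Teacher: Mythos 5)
Your proposal is correct, but it takes a genuinely different route than the paper. After obtaining a decoupling scheme for $\tau_{RA}$ at the larger rate $r''$, you go back to Lemma~\ref{lem:fidelity-re} with the trial state $\sigma_n=\tr_{\tilde{A}_n}[U_n\tau_{RA}^{\ox n}U_n^\dg]$ and trial target $\tau_R^{\ox n}\ox(\sigma_n)_{\bar{A}_n\tilde{A}''_n}$, so the penalty term emerges as $I(R^n:\bar{A}_n\tilde{A}''_n)_{\sigma_n}$; you then need both the dimension bound $I(R^n:\bar{A}_n\tilde{A}''_n)\leq I(R^n:\bar{A}_n)+2\log|\tilde{A}''_n|$ and an Alicki--Fannes--Winter continuity step to argue that $I(R^n:\bar{A}_n)_{\sigma_n}=o(n)$. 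The paper instead takes the fidelity bound already established in the proof of Lemma~\ref{lem:decF1}, $\limsup_n\tfrac{-1}{n}\log F^2(\tr_{\tilde{A}_n}[U_n\rho_{RA}^{\ox n}U_n^\dg],\omega_{R^n}\ox\omega_{\bar{A}_n})\leq D(\tau_{RA}\|\rho_{RA})$ at rate $r'$, and converts it to rate $r$ purely by fidelity manipulations: Uhlmann's theorem produces an extension $\omega_{R^n\bar{A}_n\tilde{A}_n}$ of $\omega_{R^n}\ox\omega_{\bar{A}_n}$, the operator inequality $\tr_{\tilde{A}'_n}[\omega_{R^n\bar{A}_n\tilde{A}_n}]\leq|\bar{A}'_n|^2\,\omega_{R^n}\ox\omega_{\bar{A}_n}\ox\pi_{\bar{A}'_n}$ together with data processing for fidelity transfers a factor $|\bar{A}'_n|^{-2}$ onto $F^2$, and this factor translates directly into the additive term $2(r'-r)$. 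Both approaches produce the same exponent; yours is more uniform in repeatedly invoking Lemma~\ref{lem:fidelity-re}, while the paper's avoids a second application of that lemma and the AFW continuity estimate, at the modest price of introducing the Uhlmann-extension and operator-inequality machinery. One small detail to be explicit about, which the paper handles carefully, is that splitting $\tilde{A}'_n$ into a tensor factor of dimension exactly $\lceil2^{nr}\rceil$ requires choosing $|\tilde{A}''_n|=\lceil|\tilde{A}'_n|/|\tilde{A}_n|\rceil$ and embedding, so the dimensions remain integers.
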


\begin{proof}
According to the definition of $L_2(\rho_{RA}, r)$, there exists
$\tau_{RA} \in \mc{S}_\rho(RA)$ such that
\begin{align}
&D(\tau_{RA} \| \rho_{RA})+I(R:A)_\tau-2r = L_2(\rho_{RA}, r), \\
&I(R:A)_\tau \geq 2r.
\end{align}
Let $r'=\frac{1}{2} I(R:A)_\tau+\epsilon$, where $\epsilon>0$ is a small constant. From the proof of Lemma~\ref{lem:decF1} (cf.\ Eqs.~\eqref{eq:decadhw-2} and~\eqref{eq:decachi-12}), we know that there exist a sequence of standard decoupling schemes $\{U_n:\mc{H}_{A^n} \rar \mc{H}_{\bar{A}_n\tilde{A}_n}\}_{n \in \mathbb{N}}$ in which the catalytic systems are of dimension one, and a sequence of states $\{(\omega_{R^n},\omega_{\bar{A}_n})\}_{n \in \mathbb{N}}$, such that
\begin{align}
\label{eq:extensionF11}
&\limsup_{n\rar\infty} \frac{-1}{n}\log F^2\left(\tr_{\tilde{A}_n}[U_{n}\rho_{RA}^{\ox n} U_{n}^\dg ], \omega_{R^n} \ox \omega_{\bar{A}_n}\right) \leq D(\tau_{RA} \| \rho_{RA}), \\
&\lim_{n\rar\infty} \frac{1}{n}\log|\tilde{A}_n| = r'. \label{eq:extensionF12}
\end{align}
By Uhlmann's theorem~\cite{Uhlmann1976transition}, there exists an extension $\omega_{R^n\bar{A}_n\tilde{A}_n}$ of $\omega_{R^n} \ox \omega_{\bar{A}_n}$ such that
\begin{equation}\label{eq:extensionF2}
F^2\left(\tr_{\tilde{A}_n}[U_{n}\rho_{RA}^{\ox n} U_{n}^\dg ],\omega_{R^n}\ox\omega_{\bar{A}_n}\right)
=F^2\left(U_{n}\rho_{RA}^{\ox n}U_{n}^\dg ,\omega_{R^n\bar{A}_n\tilde{A}_n} \right).
\end{equation}
We construct for each $n\in\mathbb{N}$ a new decoupling scheme. Let $\tilde{A}_n=\bar{A}'_n\tilde{A}'_n$ be a decomposition such that $|\tilde{A}'_n|=\lceil2^{nr}\rceil$. We point out that, to ensure that the dimension of $\bar{A}'_n$ is also an integer, we actually choose $|\bar{A}'_n|=\lceil|\tilde{A}_n|
/|\tilde{A}'_n|\rceil$ and this is an embedding. The new decoupling scheme is to discard the system $\tilde{A}'_n$ instead of $\tilde{A}_n$, and keep the systems $\bar{A}_n$ and $\bar{A}'_n$. The rate of the cost of this sequence of decoupling schemes is
\beq\label{eq:ratecdec}
\lim_{n\rar\infty}\frac{1}{n}\log|\tilde{A}'_n|=r,
\eeq
and the performance is estimated as
\begin{align}
&\max_{\gamma_{R^n},\gamma_{\bar{A}_n\bar{A}'_n}} F^2\left(\tr_{\tilde{A}'_n}[U_{n}\rho_{RA}^{\ox n} U_{n}^\dg ], \gamma_{R^n} \ox \gamma_{\bar{A}_n\bar{A}'_n} \right) \nb\\
\geq &F^2\Big(\tr_{\tilde{A}'_n}[U_{n}\rho_{RA}^{\ox n} U_{n}^\dg ],\omega_{R^n} \ox \omega_{\bar{A}_n} \ox \frac{\1_{\bar{A}'_n}}{|\bar{A}'_n|}\Big)  \nb\\
\geq & \frac{1}{|\bar{A}'_n|^2} F^2\left(\tr_{\tilde{A}'_n}[U_{n}\rho_{RA}^{\ox n} U_{n}^\dg ],\tr_{\tilde{A}'_n}[\omega_{R^n\bar{A}_n\tilde{A}_n}]\right) \nb\\
\geq &\frac{|\tilde{A}'_n|^2}{|\tilde{A}_n|^2}F^2\left(U_{n}\rho_{RA}^{\ox n}U_{n}^\dg ,\omega_{R^n\bar{A}_n\tilde{A}_n} \right). \label{eq:extensionF3}
\end{align}
In Eq.~\eqref{eq:extensionF3}, the last line is by the data processing inequality for the fidelity function, and for the third line we have used the inequality
\begin{equation}
\tr_{\tilde{A}'_n}[\omega_{R^n\bar{A}_n\tilde{A}_n}]=:\omega_{R^n\bar{A}_n\bar{A}'_n} \leq |\bar{A}'_n| \omega_{R^n\bar{A}_n} \ox \1_{\bar{A}'_n}
=|\bar{A}'_n|^2\omega_{R^n} \ox \omega_{\bar{A}_n} \ox \frac{\1_{\bar{A}'_n}}{|\bar{A}'_n|}.
\end{equation}
Making use of Eqs.~(\ref{eq:extensionF11})--(\ref{eq:ratecdec}), we translate Eq.~\eqref{eq:extensionF3} into
\begin{align}
&\limsup_{n\rar\infty} \frac{-1}{n}\log \max_{\gamma_{R^n},\gamma_{\bar{A}_n\bar{A}'_n}} F^2\left(\tr_{\tilde{A}'_n}[U_{n}\rho_{RA}^{\ox n} U_{n}^\dg ],
\gamma_{R^n} \ox \gamma_{\bar{A}_n\bar{A}'_n} \right) \nb\\
\leq &D(\tau_{RA} \| \rho_{RA}) +2(r'-r) \nb\\
  =  &L_2(\rho_{RA}, r)+ 2\epsilon. \label{eq:extensionF4}
\end{align}
At last, according to the definition of $E^{\rm{dec}}_{\rm{sc}}(\rho_{RA}, r)$, Eq.~\eqref{eq:ratecdec} and Eq.~\eqref{eq:extensionF4} result in
\beq
E^{\rm{dec}}_{\rm{sc}}(\rho_{RA}, r)\leq L_2(\rho_{RA}, r)+ 2\epsilon.
\eeq
Because $\epsilon>0$ is arbitrary, letting $\epsilon\rar 0$ we conclude the proof.
\end{proof}

\subsection{Proof of the achievability part: last step}
With Theorem~\ref{thm:Fdeco}, we are now ready to complete the proof of the achievability part of
Theorem~\ref{theorem:maindec}.

\medskip
\begin{proof}
Fix an arbitrary $\delta>0$. Let $N_{\delta}$ be a sufficiently large integer such that $\frac{\log v_{m,d_A}}{m} \leq \delta$ for all $m \geq  N_{\delta}$. Fix such an integer $m \geq  N_{\delta}$ and let $\tilde{\rho}_{R^mA^m}=\mc{E}_{\sigma^u_{R^m}} \ox \mc{E}_{\sigma^u_{A^m}}(\rho_{RA}^{\ox m})$, where $\sigma^u_{R^m}$ and $\sigma^u_{A^m}$ are the particular symmetric states of Lemma~\ref{lem:sym}. Theorem~\ref{thm:Fdeco} tells us that there exist a sequence of isometries
\beq\left\{U_k:\mc{H}_{A^{mk}}\rar\mc{H}_{\bar{A}_{m,k}\tilde{A}_{m,k}}\right\}_{k\in\mathbb{N}}
\eeq
which serve as standard decoupling schemes, and a sequence of states
$\{(\omega_{R^{mk}},\omega_{\bar{A}_{m,k}})\}_{k \in \mathbb{N}}$,
such that
\begin{equation}
\limsup_{k\rar\infty} \frac{1}{k} \log |\tilde{A}_{m,k}| \leq m(r-\delta)
\end{equation}
and
\begin{align}
&\limsup_{k\rar\infty} \frac{-1}{k} \log  F^2\left(\tr_{\tilde{A}_{m,k}}\left[U_{k}\tilde{\rho}_{R^mA^m}^{\ox k} U_{k}^\dg \right],
\omega_{R^{mk}} \ox \omega_{\bar{A}_{m,k}} \right) \nb\\
\leq & \sup_{\frac{1}{2} \leq \alpha \leq 1} \frac{1-\alpha}{\alpha}\left\{I_{\alpha}^{\flat}(R^m:A^m)_{\tilde{\rho}}-2m(r-\delta)\right\}.\label{eq:dec1}
\end{align}

Based on the above result, for any integer $n$, we construct a standard decoupling scheme for $\rho_{RA}^{\ox n}$. By the Stinespring dilation theorem, there is a system $\hat{A}$ of dimension $|\hat{A}|\leq v_{m,d_A}$ and a isometry $V:\mc{H}_{A^m}\rar\mc{H}_{A^m}\ox\mc{H}_{\hat{A}}$, such that
\beq
\mc{E}_{\sigma^u_{A^m}}(\cdot)=
\tr_{\hat{A}}\left[V(\cdot) V^\dg \right].
\eeq
Write $n=mk+l$ with $0\leq l <m$, and $\rho_{RA}^{\ox n}=(\rho_{RA}^{\ox m})^{\ox k}\ox\rho_{RA}^{\ox l}$.
Then the decoupling scheme is constructed as
\beq
\mc{D}_n\equiv U_{k}V^{\ox k}\ox\1_{A^l},
\eeq
where $U_{k}V^{\ox k}$ acts on the first $mk$ copies of the $A$ system, and $\1_{A^l}$ is the identity matrix on the last $l$ copies of the $A$ system. We will keep the system $\bar{A}_{m,k}$ and discard the systems $\tilde{A}_{m,k}$, $\hat{A}^k$, and $A^l$. The resulting state can be written as
\begin{align}
\rho_{R^n\bar{A}_{m,k}}
=&\tr_{\tilde{A}_{m,k} \hat{A}^k A^{l}}
 \left[U_k\left(V(\rho_{RA}^{\ox m})V^\dg \right)^{\ox k} U_k^\dg  \ox \rho_{RA}^{\ox l} \right] \nb\\
=&\tr_{\tilde{A}_{m,k}}\left[U_{k}\big(\mc{E}_{\sigma^u_{A^m}}(\rho_{RA}^{\ox m})\big)^{\ox k}
U_{k}^\dg \right]\ox \rho_R^{\ox l}. \label{eq:remainingst}
\end{align}

Now, we evaluate the performance of this scheme.
\begin{align}
&\max_{\gamma_{R^{n}}, \gamma_{\bar{A}_{m,k}}}
 F^2\left(\rho_{R^n\bar{A}_{m,k}},\gamma_{R^{n}}\ox \gamma_{\bar{A}_{m,k}}\right) \nb\\
\geq &F^2\left(\rho_{R^n\bar{A}_{m,k}},\mc{E}_{\sigma^u_{R^m}}^{\ox k}(\omega_{R^{mk}})
      \ox \rho_{R}^{\ox l} \ox \omega_{\bar{A}_{m,k}}\right) \nb\\
=&F^2\left(\tr_{\tilde{A}_{m,k}}
     \left[U_{k}\big(\mc{E}_{\sigma^u_{A^m}}(\rho_{RA}^{\ox m})\big)^{\ox k}U_{k}^\dg \right],
     \mc{E}_{\sigma^u_{R^m}}^{\ox k}(\omega_{R^{mk}}) \ox \omega_{\bar{A}_{m,k}} \right) \nb\\
\geq & (v_{m,d_R})^{-k} F^2\left(\tr_{\tilde{A}_{m,k}}\left[U_{k}\tilde{\rho}_{R^mA^m}^{\ox k}
      U_k^\dg \right],\mc{E}_{\sigma^u_{R^m}}^{\ox k}(\omega_{R^{mk}})\ox\omega_{\bar{A}_{m,k}}\right) \nb\\
\geq & (v_{m,d_R})^{-k} F^2\left(\tr_{\tilde{A}_{m,k}}\left[U_{k}\tilde{\rho}_{R^mA^m}^{\ox k}
      U_k^\dg \right],\omega_{R^{mk}}\ox\omega_{\bar{A}_{m,k}}\right) \label{eq:dec2},
\end{align}
where for the third line we used Eq.~\eqref{eq:remainingst}, the fourth line is by  Lemma~\ref{lem:appen1} and the last line is due to the data processing inequality for the fidelity function. The asymptotic rate of cost of this sequence of decoupling schemes can be bounded as
\begin{equation}
\label{eq:asyrate}
\limsup_{n\rar\infty} \frac{1}{n} \log |\tilde{A}_{m,k}||\hat{A}|^k|A|^{l} \leq r-\delta+\frac{\log v_{m,d_A}}{m} \leq r.
\end{equation}
Hence, Eq.~(\ref{eq:dec1}), Eq.~(\ref{eq:dec2}) and Eq.~(\ref{eq:asyrate}) together yield
\begin{equation}
\label{eq:decinter}
E_{\rm{sc}}^{\rm{dec}}(\rho_{RA}, r) \leq \sup_{\frac{1}{2} \leq \alpha \leq 1} \frac{1-\alpha}{\alpha}\left\{\frac{I_{\alpha}^{\flat}(R^m:A^m)_{\tilde{\rho}}}{m}-2r+2\delta\right\} +\frac{\log v_{m,d_R}}{m}.
\end{equation}

To complete the proof, we further bound Eq.~(\ref{eq:decinter}) as follows.
\begin{align}
&E_{\rm{sc}}^{\rm{dec}}(\rho_{RA}, r) \nb\\
\leq &\sup_{\frac{1}{2} \leq \alpha \leq 1} \frac{1-\alpha}{\alpha} \left\{\inf_{\sigma_{R^m} , \sigma_{A^m}} \frac{D_{\alpha}^{\flat}(\mc{E}_{\sigma^u_{R^m}} \ox \mc{E}_{\sigma^u_{A^m}} (\rho_{RA}^{\ox m}) \| \sigma_{R^m} \ox \sigma_{A^m} )}{m} -2r \right\} +\frac{\log v_{m,d_R}}{m}+2\delta \nb\\
\leq &\sup_{\frac{1}{2} \leq \alpha \leq 1} \frac{1-\alpha}{\alpha} \left\{ \frac{D_{\alpha}^{\flat}(\mc{E}_{\sigma^u_{R^m}} \ox \mc{E}_{\sigma^u_{A^m}}(\rho_{RA}^{\ox m}) \| \sigma^u_{R^m} \ox \sigma^u_{A^m}  )}{m}-2r \right\} +\frac{\log v_{m,d_R}}{m}+2\delta \nb\\
=&\sup_{\frac{1}{2} \leq \alpha \leq 1} \frac{1-\alpha}{\alpha} \left\{ \frac{D_{\alpha}^{*}(\mc{E}_{\sigma^u_{R^m}} \ox \mc{E}_{\sigma^u_{A^m}}(\rho_{RA}^{\ox m}) \| \sigma^u_{R^m} \ox \sigma^u_{A^m} )}{m}-2r \right\} +\frac{\log v_{m,d_R}}{m}+2\delta \nb\\
\leq &\sup_{\frac{1}{2} \leq \alpha \leq 1} \frac{1-\alpha}{\alpha} \left\{ \frac{D_{\alpha}^{*}(\rho_{RA}^{\ox m} \| \sigma^u_{R^m} \ox \sigma^u_{A^m} )}{m}-2r \right\} +\frac{\log v_{m,d_R}}{m}+2\delta \nb\\
\leq &\sup_{\frac{1}{2} \leq \alpha \leq 1} \frac{1-\alpha}{\alpha} \left\{ \inf_{\sigma_R, \sigma_A} \frac{D_{\alpha}^{*}(\rho_{RA}^{\ox m} \| \sigma_R^{\ox m} \ox \sigma_{A}^{\ox m})}{m}-2r+ \frac{\log v_{m,d_A}}{m}+\frac{\log v_{m,d_R}}{m} \right\} +\frac{\log v_{m,d_R}}{m}+2\delta \nb\\
\leq &\sup_{\frac{1}{2} \leq \alpha \leq 1} \frac{1-\alpha}{\alpha} \left\{ I_{\alpha}^*(R:A)_\rho-2r \right\} +\frac{2\log v_{m,d_R}}{m}+\frac{\log v_{m,d_A}}{m}+2\delta, \label{eq:declast}
\end{align}
where the fourth line is because $\mc{E}_{\sigma^u_{R^m}} \ox \mc{E}_{\sigma^u_{A^m}} (\rho_{RA}^{\ox m})$ commutes with $\sigma^u_{R^m} \ox \sigma^u_{A^m} $, the fifth line is due to the data processing inequality for the sandwiched R\'enyi divergence, the sixth line is by Lemma~\ref{lem:sym} and Proposition~\ref{prop:mainpro}~(\romannumeral2). Noticing that Eq.~(\ref{eq:declast}) holds for any $\delta>0$ and any $m>N_\delta$, by first letting $m \rar \infty$ and then letting $\delta\rar 0$ we obtain
\begin{equation}
\label{eq:lastdec1}
E_{\rm{sc}}^{\rm{dec}}(\rho_{RA}, r) \leq \sup_{\frac{1}{2} \leq \alpha \leq 1} \frac{1-\alpha}{\alpha}
\left\{ I_{\alpha}^*(R:A)_\rho-2r \right\}.
\end{equation}
It is easy to prove by definition that, for any integer $m$,
\begin{equation}
\label{eq:lasetdec2}
E_{\rm{sc}}^{\rm{dec}}(\rho_{RA}, r)=\frac{1}{m}E_{\rm{sc}}^{\rm{dec}}(\rho^{\ox m}_{RA}, mr).
\end{equation}
Applying Eq.~(\ref{eq:lastdec1}) to bound the right hand side of Eq.~(\ref{eq:lasetdec2}) and then taking the limit $m\rar\infty$, we get
\begin{equation}\label{eq:lasetdec3}
E_{\rm{sc}}^{\rm{dec}}(\rho_{RA}, r)
\leq \lim_{m\rar\infty}\sup_{\frac{1}{2} \leq \alpha \leq 1} \frac{1-\alpha}{\alpha}
\left\{\frac{1}{m}I_{\alpha}^*(R^m:A^m)_{\rho^{\ox m}}-2r\right\}.
\end{equation}
At last, we show in Lemma~\ref{lem:swap} that the limit in Eq.~\eqref{eq:lasetdec3} can be taken before the supremum. This lets us arrive at
\begin{equation}
E_{\rm{sc}}^{\rm{dec}}(\rho_{RA}, r)
\leq \sup_{\frac{1}{2} \leq \alpha \leq 1} \frac{1-\alpha}{\alpha}
\left\{I_{\alpha}^{*, \rm{reg}}(R:A)_\rho-2r\right\},
\end{equation}
completing the proof.
\end{proof}

\medskip
\begin{lemma}
\label{lem:swap}
Let $\rho_{RA} \in \mc{S}(RA)$ and $r \geq 0$. It holds that
\begin{equation}\label{eq:swap}
\lim_{m\rar\infty}\sup_{\frac{1}{2} \leq \alpha \leq 1} \frac{1-\alpha}{\alpha}
\left\{\frac{1}{m}I_{\alpha}^*(R^m:A^m)_{\rho^{\ox m}}-2r\right\}
=\sup_{\frac{1}{2} \leq \alpha \leq 1} \frac{1-\alpha}{\alpha}
\left\{I_{\alpha}^{*, \rm{reg}}(R:A)_\rho-2r\right\}.
\end{equation}
\end{lemma}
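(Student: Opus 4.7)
The plan is to convert the interchange of limit and supremum into a compactness argument using subadditivity of $I_\alpha^*$ on i.i.d.\ states together with continuity in $\alpha$. Set $f_m(\alpha):=\frac{1}{m}I_\alpha^*(R^m:A^m)_{\rho^{\ox m}}$, $f(\alpha):=I_\alpha^{*,\mathrm{reg}}(R:A)_\rho$, and write $g_m(\alpha):=\frac{1-\alpha}{\alpha}\{f_m(\alpha)-2r\}$, $g(\alpha):=\frac{1-\alpha}{\alpha}\{f(\alpha)-2r\}$. Since $D_\alpha^*$ is additive on tensor products, restricting the infimum in $I_\alpha^*(R^{m+n}:A^{m+n})_{\rho^{\ox(m+n)}}$ to product factored marginals yields the subadditivity
\[
I_\alpha^*(R^{m+n}:A^{m+n})_{\rho^{\ox(m+n)}}\leq I_\alpha^*(R^m:A^m)_{\rho^{\ox m}}+I_\alpha^*(R^n:A^n)_{\rho^{\ox n}},
\]
whence Fekete's lemma gives $f(\alpha)=\inf_m f_m(\alpha)=\lim_m f_m(\alpha)$. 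In particular $g_m\geq g$ pointwise on $[\frac{1}{2},1]$, so $\liminf_{m\to\infty}\sup_\alpha g_m(\alpha)\geq\sup_\alpha g(\alpha)$ is immediate.

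For the reverse inequality, I plan to first establish that $\alpha\mapsto f_m(\alpha)$ is continuous on $[\frac{1}{2},1]$. This should follow by confining the minimization to states $\sigma_{R^m},\sigma_{A^m}$ whose supports contain those of the reduced states of $\rho^{\ox m}$, on which $(\alpha,\sigma_{R^m},\sigma_{A^m})\mapsto D_\alpha^*(\rho^{\ox m}\|\sigma_{R^m}\ox\sigma_{A^m})$ is jointly continuous; a Berge-type maximum-theorem argument then transfers continuity to the infimum, with continuity at $\alpha=1$ following from $D_\alpha^*\rar D$ as $\alpha\nearrow 1$. Consequently each $g_m$ is continuous on the compact interval $[\frac{1}{2},1]$ and attains its supremum at some $\alpha_m$. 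I then pass to a subsequence $m_k$ with $g_{m_k}(\alpha_{m_k})\rar\limsup_{m}\sup_\alpha g_m(\alpha)$ and $\alpha_{m_k}\rar\alpha^*\in[\frac{1}{2},1]$, reducing the task to showing $\limsup_k g_{m_k}(\alpha_{m_k})\leq\sup_\alpha g(\alpha)$.

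When $\alpha^*\in[\frac{1}{2},1)$, iterating the subadditivity along $m_k=q_kN+s_k$ for fixed $N$ and $0\leq s_k<N$ gives
\[
f_{m_k}(\alpha)\leq \tfrac{q_kN}{m_k}\,f_N(\alpha)+\tfrac{s_k}{m_k}\,f_1(\alpha);
\]
evaluated at $\alpha=\alpha_{m_k}$, boundedness of $f_1$ on $[\frac{1}{2},1]$ together with $s_k/m_k\rar 0$ and continuity of $f_N$ at $\alpha^*$ yield $\limsup_k f_{m_k}(\alpha_{m_k})\leq f_N(\alpha^*)$, hence $\limsup_k g_{m_k}(\alpha_{m_k})\leq g_N(\alpha^*)$ for every $N$. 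Letting $N\rar\infty$ then produces $\leq g(\alpha^*)\leq\sup_\alpha g(\alpha)$. When $\alpha^*=1$, the prefactor $(1-\alpha_{m_k})/\alpha_{m_k}$ vanishes while $f_{m_k}(\alpha_{m_k})\leq f_1(\alpha_{m_k})$ stays bounded, so $g_{m_k}(\alpha_{m_k})\rar 0\leq\sup_\alpha g(\alpha)$. The main obstacle I anticipate is rigorously establishing continuity of $f_m$ up to $\alpha=1$, since one must control the behavior of the minimizers as the support conditions become tight; once this is settled, the rest of the argument is a clean combination of subadditivity, compactness, and Fekete's lemma.
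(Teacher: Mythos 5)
Your plan is sound and runs parallel to the paper's — both rest on subadditivity of $m\mapsto I_\alpha^*(R^m\!:\!A^m)_{\rho^{\ox m}}$ together with a compactness argument in $\alpha$ — but the compactness mechanism differs. You extract a convergent subsequence of maximizers $\alpha_{m_k}\to\alpha^*$; the paper instead passes to the dyadic subsequence (along which $\frac{1}{2^j}I_\alpha^*(R^{2^j}\!:\!A^{2^j})$, and hence $g_{2^j}$, is monotonically non-increasing in $j$) and applies Cantor's intersection theorem to the nested, closed, nonempty superlevel sets of $g_{2^j}$. Both are legitimate; yours avoids building the nesting structure explicitly, the paper's avoids extracting and tracking maximizers.

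The one gap you flag — full continuity of $\alpha\mapsto f_m(\alpha)$ up to $\alpha=1$ — asks for more than you need, and the Berge-type route is indeed delicate: the feasible set of $\sigma$'s with the right support is not closed, and minimizers can drift toward bad-support states as $\alpha\nearrow 1$. However, inspect what your argument actually uses: (a) that $g_m$ attains its supremum on the compact interval $[\frac{1}{2},1]$, and (b) that $\limsup_k f_N(\alpha_{m_k})\leq f_N(\alpha^*)$ for your convergent subsequence of maximizers when $\alpha^*<1$. Both hold under mere upper semicontinuity of $\alpha\mapsto f_m(\alpha)$ — (a) because an u.s.c.\ function on a compact set attains its max, and (b) is literally the definition of u.s.c.\ at $\alpha^*$ (the prefactor $q_kN/m_k\to 1$ does no harm since $f_N\geq 0$). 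Your separate treatment of $\alpha^*=1$ already sidesteps the boundary. Upper semicontinuity of $I_\alpha^*$ has a short direct proof, which is what the paper gives: with $\bar{\sigma}(\alpha)$ the minimizer at $\alpha$, for $\alpha\leq\alpha'$ one has
$D_{\alpha'}^*(\rho\|\bar{\sigma}(\alpha))\geq I_{\alpha'}^*\geq D_{\alpha}^*(\rho\|\bar{\sigma}(\alpha'))\geq I_{\alpha}^*$
by optimality of the respective minimizers and monotonicity of $D^*$ in the R\'enyi parameter, whence $I_{\alpha'}^*-I_{\alpha}^*\leq D_{\alpha'}^*(\rho\|\bar{\sigma}(\alpha))-D_{\alpha}^*(\rho\|\bar{\sigma}(\alpha))\to 0$ as $\alpha'\searrow\alpha$; monotonicity handles approach from the left. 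With this substitution your subsequence argument closes, and the "main obstacle" you anticipated does not actually arise.
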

\begin{proof}
Two simple properties of the sandwiched R\'enyi mutual information $I_{\alpha}^*$ are the reason behind Eq.~\eqref{eq:swap}. First, for any $m\in\mathbb{N}$, the function $\alpha\mapsto I_{\alpha}^*(R^m:A^m)
_{\rho^{\ox m}}$ is upper semicontinuous on $[\frac{1}{2},1]$. Second, for any $\alpha\in[\frac{1}{2},1]$, the sequence $\{I_{\alpha}^*(R^m:A^m)_{\rho^{\ox m}}\}_m$ is subadditive. Note that a sequence $\{a_m\}_{m\in\mathbb{N}}$ is subadditive if $a_{m+n}\leq a_m + a_n$ for all $m,n\in\mathbb{N}$. The latter property follows directly from the definition of $I_{\alpha}^*$ (cf.\ Eq.~\eqref{eq:srmi-de}). We now show the former. Let $\frac{1}{2}\leq\alpha\leq\alpha'\leq 1$ be arbitrary, and let $\bar{\sigma}_R(\alpha)$ and $\bar{\omega}_A(\alpha)$ be the maximizers in the definition of $I_{\alpha}^*(R:A)_{\rho}$ such that $I_{\alpha}^*(R:A)_{\rho}=D_{\alpha}^*(\rho_{RA}\|\bar{\sigma}_R(\alpha)\ox\bar{\omega}_A(\alpha))$. By definition and the monotonicity of $D_{\alpha}^*$ with respect to the R\'enyi parameter (Proposition~\ref{prop:mainpro}~(\romannumeral1)), we have
\begin{align}
     &D_{\alpha'}^*(\rho_{RA}\|\bar{\sigma}_R(\alpha)\ox\bar{\omega}_A(\alpha))   \nb \\
\geq &D_{\alpha'}^*(\rho_{RA}\|\bar{\sigma}_R(\alpha')\ox\bar{\omega}_A(\alpha')) \nb \\
\geq &D_{\alpha}^*(\rho_{RA}\|\bar{\sigma}_R(\alpha')\ox\bar{\omega}_A(\alpha'))  \nb \\
\geq &D_{\alpha}^*(\rho_{RA}\|\bar{\sigma}_R(\alpha)\ox\bar{\omega}_A(\alpha)).   \label{eq:swap-1}
\end{align}
Fixing $\alpha$ and letting $\alpha'\searrow\alpha$, we get from Eq.~\eqref{eq:swap-1} that
\begin{align}
&~~~~I_{\alpha'}^*(R:A)_{\rho}-I_{\alpha}^*(R:A)_{\rho} \nb\\
&\leq D_{\alpha'}^*(\rho_{RA}\|\bar{\sigma}_R(\alpha)\ox\bar{\omega}_A(\alpha))
     -D_{\alpha}^*(\rho_{RA}\|\bar{\sigma}_R(\alpha)\ox\bar{\omega}_A(\alpha))  \nb\\
&\rar 0. \label{eq:swap-2-1}
\end{align}
Fixing $\alpha'$, for any $\alpha\leq\alpha'$ we get from Eq.~\eqref{eq:swap-1} that
\begin{align}
I_{\alpha}^*(R:A)_{\rho}-I_{\alpha'}^*(R:A)_{\rho} \leq 0. \label{eq:swap-2-2}
\end{align}
Eq.~\eqref{eq:swap-2-1} and Eq.~\eqref{eq:swap-2-2} together show that the function $\alpha\mapsto I_{\alpha}^*(R:A)_{\rho}$ is upper semicontinuous on $[\frac{1}{2},1]$. The upper semi-continuity of $\alpha\mapsto I_{\alpha}^*(R^m:A^m)_{\rho^{\ox m}}$ for $m\geq 2$ follows by the same argument.

We set
\beq
f_m(\alpha):=\frac{1-\alpha}{\alpha}\left\{\frac{1}{m}I_{\alpha}^*(R^m:A^m)_{\rho^{\ox m}}-2r\right\}
\eeq
for brevity and rewrite Eq.~\eqref{eq:swap} as
\begin{equation}\label{eq:swap-3}
 \lim_{m\rar\infty}\sup_{\frac{1}{2} \leq \alpha \leq 1} f_m(\alpha)
=\sup_{\frac{1}{2} \leq \alpha \leq 1}\lim_{m\rar\infty} f_m(\alpha).
\end{equation}
The subadditivity of $\{I_{\alpha}^*(R^m:A^m)_{\rho^{\ox m}}\}_m$ implies that for any $\alpha\in
[\frac{1}{2},1]$, the sequence $\{mf_m(\alpha)\}_m$ is subadditive, which in turn implies that the sequence $\{\sup_{\frac{1}{2} \leq \alpha \leq 1} mf_m(\alpha)\}_m$ is also subadditive. The subadditivity of these sequences, together with the obvious fact that they are bounded from below, further implies that the limits in Eq.~\eqref{eq:swap-3} do exist. Now, we can easily show the ``$\geq$'' part of Eq.~\eqref{eq:swap-3}. Since $\sup_{\frac{1}{2} \leq \alpha \leq 1} f_m(\alpha)\geq f_m(\alpha')$ holds for any $m\in\mathbb{N}$ and $\alpha'\in[\frac{1}{2},1]$, we take limit with $m\rar\infty$ at both sides and then take supremum over $\alpha'\in[\frac{1}{2},1]$ on the right hand side to get
\begin{equation}\label{eq:swap-4}
    \lim_{m\rar\infty}\sup_{\frac{1}{2} \leq \alpha \leq 1} f_m(\alpha)
\geq\sup_{\frac{1}{2} \leq \alpha \leq 1}\lim_{m\rar\infty} f_m(\alpha).
\end{equation}
Next, we prove the ``$\leq$'' part of Eq.~\eqref{eq:swap-3}. Set $\bar{f}_m(\alpha):=f_{2^m}(\alpha)$. It suffices to show
\begin{equation}\label{eq:swap-5}
    \lim_{m\rar\infty}\sup_{\frac{1}{2} \leq \alpha \leq 1} \bar{f}_m(\alpha)
\leq\sup_{\frac{1}{2} \leq \alpha \leq 1}\lim_{m\rar\infty} \bar{f}_m(\alpha),
\end{equation}
because a convergent sequence has the same limit as any of its subsequence. Due to the subadditivity of $\{mf_m(\alpha)\}_m$, we can check that for any $\alpha$ the sequence $\{\bar{f}_m(\alpha)\}_m$ is monotonically non-increasing. For the same reason, the sequence $\{\sup_{\frac{1}{2} \leq \alpha \leq 1} \bar{f}_m(\alpha)\}_m$ is monotonically non-increasing too. Fixing an arbitrary $\epsilon>0$, we set \[L(\epsilon):=\lim\limits_{m\rar\infty}\sup\limits_{\frac{1}{2} \leq \alpha \leq 1} \bar{f}_m(\alpha)-\epsilon\]
and define
\beq
\mc{A}_m:=\left\{\alpha~\big|~\frac{1}{2}\leq\alpha\leq 1, \bar{f}_m(\alpha)\geq L(\epsilon)\right\}.
\eeq
Then the sets $\mc{A}_m,m\in\mathbb{N}$ satisfy the following properties:
\begin{enumerate}[(i)]
  \item $\mc{A}_1\supseteq\mc{A}_2\supseteq\mc{A}_3\supseteq\cdot\cdot\cdot$\ ,
  \item for any $m\in\mathbb{N}$, $\mc{A}_m$ is nonempty, and
  \item for any $m\in\mathbb{N}$, $\mc{A}_m$ is closed.
\end{enumerate}
The first property is due to the monotonicity of the sequence of functions $\{\bar{f}_m(\alpha)\}_m$. The second property follows from the monotonicity of $\{\sup_{\frac{1}{2} \leq \alpha \leq 1} \bar{f}_m(\alpha)\}_m$ and the definition of $L(\epsilon)$. The third property is because of the upper semi-continuity of $\alpha\mapsto\bar{f}_m(\alpha)$ on $[\frac{1}{2},1]$, which results from the upper semi-continuity of $\alpha\mapsto I_{\alpha}^*(R^m:A^m)_{\rho^{\ox m}}$. According to Cantor's intersection theorem, the three properties together ensure that the set $\bigcap_{m\in\mathbb{N}}\mc{A}_m$ is nonempty. So, there exists $\alpha_0\in[\frac{1}{2},1]$ such that $\bar{f}_m(\alpha_0)\geq L(\epsilon)$ for all $m\in\mathbb{N}$, and hence,
\begin{equation}\label{eq:swap-6}
L(\epsilon)\equiv\lim_{m\rar\infty}\sup_{\frac{1}{2} \leq \alpha \leq 1} \bar{f}_m(\alpha)-\epsilon
\leq\sup_{\frac{1}{2} \leq \alpha \leq 1}\lim_{m\rar\infty} \bar{f}_m(\alpha).
\end{equation}
Eventually, since $\epsilon>0$ is chosen arbitrarily, letting $\epsilon\rar0$ in Eq.~\eqref{eq:swap-6} leads to Eq.~\eqref{eq:swap-5}. The combination of Eq.~\eqref{eq:swap-4} and Eq.~\eqref{eq:swap-5} completes the proof of Eq.~\eqref{eq:swap-3}, and we are done.
\end{proof}

\subsection{Proof of the optimality part}
Now, we turn to the proof of the optimality part of Theorem~\ref{theorem:maindec}.

\medskip
\begin{proof}
Let $\left\{\mc{D}_n=(\sigma_{A'_n},\ U_n: \mc{H}_{A^nA'_n} \rar \mc{H}_{\bar{A}_n\tilde{A}_n}) \right\}_{n \in \mathbb{N}}$ be an arbitrary sequence of catalytic decoupling schemes which satisfies
\begin{equation}
\label{eq:opt}
\limsup_{n\rar\infty}\frac{1}{n} \log|\tilde{A}_n| \leq r.
\end{equation}
Let $\{(\omega_{R^n},\omega_{\bar{A}_n})\}_{n \in \mathbb{N}}$ be an arbitrary sequence of quantum states. According to Uhlmann's theorem~\cite{Uhlmann1976transition}, for any $n$ there exists an extension state $\omega_{R^n\bar{A}_n\tilde{A}_n}$ of $\omega_{R^n} \ox \omega_{\bar{A}_n} $ such that
\begin{equation}
\label{eq:opt1}
F\left(\tr_{\tilde{A}_n} \left[U_{n}(\rho_{RA}^{\ox n} \ox \sigma_{A'_n})U_{n}^\dg \right],
\omega_{R^n} \ox \omega_{\bar{A}_n}\right)
=F\left(U_{n}(\rho_{RA}^{\ox n} \ox \sigma_{A'_n})U_{n}^\dg , \omega_{R^n\bar{A}_n\tilde{A}_n}\right).
\end{equation}
Now, fix an $\alpha\in[\frac{1}{2}, 1)$ and let $\beta$ be such that $\frac{1}{\alpha}+\frac{1}{\beta}=2$. Also let $\tau_{\bar{A}_n\tilde{A}_n}\in\mc{S}(\bar{A}_n\tilde{A}_n)$ be the state satisfying
\beq
\bar{I}^{*}_{\beta}(R^n:\bar{A}_n\tilde{A}_n)_\omega=D_{\beta}^*(\omega_{R^n\bar{A}_n\tilde{A}_n} \| \omega_{R^n} \ox \tau_{\bar{A}_n\tilde{A}_n}).
\eeq
For $0<\epsilon<1$, we set $\tau_n(\epsilon)=((1-\epsilon)\omega_{R^n}+\epsilon\pi_{R^n}) \ox ((1-\epsilon) \tau_{\bar{A}_n\tilde{A}_n}+\epsilon \pi_{\bar{A}_n\tilde{A}_n})$, with $\pi_{R^n}$ and $\pi_{\bar{A}_n\tilde{A}_n}$ being the maximally mixed states. Then we have
\begin{align}
&\frac{2\alpha}{1-\alpha} \log F\left(U_{n}(\rho_{RA}^{\ox n} \ox \sigma_{A'_n})U_{n}^\dg , \omega_{R^n\bar{A}_n\tilde{A}_n}\right) \nb \\
\leq &D_{\beta}^*( \omega_{R^n\bar{A}_n\tilde{A}_n}\|\tau_n(\epsilon))
     -D_{\alpha}^*(U_{n}(\rho_{RA}^{\ox n}\ox\sigma_{A'_n})U_{n}^\dg \|\tau_n(\epsilon)) \nb \\
\leq &D_{\beta}^*( \omega_{R^n\bar{A}_n\tilde{A}_n}\|\tau_n(\epsilon))
     -I_{\alpha}^*(R^n:\bar{A}_n\tilde{A}_n)_{U_n(\rho_{RA}^{\ox n} \ox \sigma_{A'_n})U_n^\dg } \nb \\
  =  &D_{\beta}^*( \omega_{R^n\bar{A}_n\tilde{A}_n}\|\tau_n(\epsilon))
     -I_{\alpha}^*(R^n:A^n)_{\rho^{\ox n}},\label{eq:opt2}
\end{align}
where the second line follows from Lemma~\ref{lem:LWDg} in the Appendix (note that Lemma~\ref{lem:LWDg} holds as well for $\alpha=\frac{1}{2}$ and $\beta=\infty$ by continuity), the third line is by the definition of $I^*_\alpha$, and for the last line, we can easily check using the data processing inequality for the sandwiched R\'enyi divergence (Proposition~\ref{prop:mainpro}~(\romannumeral4)), that the R\'enyi mutual information $I_{\alpha}^*$ does not change under locally attaching an independent state and then locally applying a unitary operation. Combining Eq.~\eqref{eq:opt1} and Eq.~(\ref{eq:opt2}), and letting $\epsilon \rar 0$, we get
\begin{align}
&\frac{\alpha}{1-\alpha}\log F^2\left(\tr_{\tilde{A}_n}\left[ U_{n}(\rho_{RA}^{\ox n} \ox \sigma_{A'_n})U_{n}^\dg \right], \omega_{R^n} \ox \omega_{\bar{A}_n} \right) \nb \\
\leq &D_{\beta}^*(\omega_{R^n\bar{A}_n\tilde{A}_n}\|\omega_{R^n}\ox\tau_{\bar{A}_n\tilde{A}_n})
     -I_{\alpha}^*(R^n:A^n)_{\rho^{\ox n}} \nb \\
=&\bar{I}^{*}_{\beta}(R^n:\bar{A}_n\tilde{A}_n)_\omega-I_{\alpha}^*(R^n:A^n)_{\rho^{\ox n}} \nb \\
\leq &\bar{I}^{*}_{\beta}(R^n:\bar{A}_n)_\omega+2\log|\tilde{A}_n|
      -I_{\alpha}^*(R^n:A^n)_{\rho^{\ox n}} \nb \\
 =   &2\log|\tilde{A}_n|-I_{\alpha}^*(R^n:A^n)_{\rho^{\ox n}}, \label{eq:opt3}
\end{align}
where the fourth line follows from the dimension bound for $\bar{I}^{*}_{\beta}$ (Proposition~\ref{prop:mainpro}~(\romannumeral6)). Optimizing Eq.~\eqref{eq:opt3} over the states $\omega_{R^n}$ and $\omega_{\bar{A}_n}$ followed by a slight reorganization lets us obtain
\beq\label{eq:opt4}
\frac{-1}{n}\log \mathfrak{P}^{\rm dec}(\rho_{RA}^{\ox n},\mc{D}_n)
\geq \frac{1-\alpha}{\alpha}\left\{\frac{1}{n}I_{\alpha}^*(R^n:A^n)_{\rho^{\ox n}}
     -\frac{2}{n}\log|\tilde{A}_n|\right\};
\eeq
note that the performance function $\mathfrak{P}^{\rm dec}$ is defined in Eq.~\eqref{eq:def-p-dec}. Taking the limit, Eq.~\eqref{eq:opt4} translates to
\beq\label{eq:opt5}
\limsup_{n\rar\infty}\frac{-1}{n}\log \mathfrak{P}^{\rm dec}(\rho_{RA}^{\ox n},\mc{D}_n)
\geq \frac{1-\alpha}{\alpha}\left\{I_{\alpha}^{*, \rm{reg}}(R:A)_\rho-2r\right\},
\eeq
where for the right hand side we have also used Eq.~\eqref{eq:opt} and the definition of $I_{\alpha}^{*, \rm{reg}}$ in Eq.~\eqref{eq:defrmi}. Now, Eq.~(\ref{eq:opt5}), Eq.~(\ref{eq:opt}) and the definition of $E_{\rm{sc}}^{\rm{dec}}(\rho_{RA}, r)$ lead to
\begin{equation}
\label{eq:opt6}
E_{\rm{sc}}^{\rm{dec}}(\rho_{RA}, r) \geq \frac{1-\alpha}{\alpha}\left\{I_{\alpha}^{*, \rm{reg}}(R:A)_\rho-2r
\right\}.
\end{equation}
Eq.~(\ref{eq:opt6}) was proved for any $\alpha\in[\frac{1}{2}, 1)$. It also holds for $\alpha=1$ because by definition the left hand side is nonnegative while the right hand side is obviously $0$ in this case. Taking the supremum over $\alpha\in[\frac{1}{2},1]$ we arrive at
\begin{equation}
E_{\rm{sc}}^{\rm{dec}}(\rho_{RA}, r) \geq \sup_{\frac{1}{2} \leq \alpha \leq 1} \frac{1-\alpha}{\alpha}\left\{I_{\alpha}^{*, \rm{reg}}(R:A)_\rho-2r
\right\}.
\end{equation}
\end{proof}

\begin{remark}
Eq.~\eqref{eq:opt5} holds as well if we replace the ``$\limsup$'' with a ``$\liminf$'', which makes the statement stronger. Recall that in the proof of the achievability part, we have upper bounded the strong converse exponent defined with the ``$\limsup$'', which is stronger than that with a ``$\liminf$''. So, similar to the case of privacy amplification, our proof shows that, in the definition of $E_{\rm{sc}}^{\rm{dec}}(\rho_{RA}, r)$ in Eq.~\eqref{eq:defsce}, we can use either ``$\limsup$'' or ``$\liminf$'' to quantify the rate of exponential decay of $\mathfrak{P}^{\rm dec}(\rho_{RA}^{\ox n},\mc{D}_n)$, without changing the result of the strong converse exponent.
\end{remark}

\section{Conclusion and Discussion}
  \label{sec:discussion}
By studying the strong converse exponents for several quantum information tasks, we have found for the first time an operational interpretation to the sandwiched R\'enyi divergence of order $\alpha\in(\frac{1}{2},1)$, as well as to its induced quantum information quantities, namely, the sandwiched R\'enyi conditional entropy and the regularized sandwiched R\'enyi mutual information. We complete the paper with some discussion as follows.

We have adopted a definition of the performance functions, in Eq.~\eqref{eq:def-p-pa} for quantum privacy amplification to also optimize the state of the adversary ($E$), and in Eq.~\eqref{eq:def-p-dec} for quantum information decoupling to also optimize the state of the reference ($R$). Alternatively, one can define the performance functions such that the ideal state on the adversary's system and that on the reference system, are fixed as $\rho_E$ and $\rho_R$, respectively. While for the reliability functions~\cite{LYH2023tight,LiYao2021reliability} these two kinds of definitions yield the same results, for strong converse exponents they are quite different. Our method does not seem to apply to the alternative definition in an obvious way. The main obstruction lies in the achievability parts. Under the alternative definition, the proofs of Theorem~\ref{thm:PA-F} and Theorem~\ref{thm:Fdeco} can be modified to give new bounds, where the testing states $\tau_{XE}$ and $\tau_{RA}$ are restricted such that $\tau_E=\rho_E$ and $\tau_R=\rho_R$. However, these new bounds do not easily translate to the desired results any more. We leave it as an open problem. Besides, our solution to the strong converse exponent of quantum information decoupling is not single-letter. To obtain a single-letter formula remains as another open problem.

Our results are based on the fidelity or purified distance. The strong converse exponents for the tasks considered in the present paper under the trace distance are still unknown. In this case the recent works~\cite{SalzmannDatta2022total,Wilde2022distinguishability,SGC2022strong} have derived for the strong converse exponents lower bounds, which are all in terms of Petz's R{\'e}nyi divergence (of order from $0$ to $1$). This is in contrast to our results in terms of the sandwiched R\'enyi divergence (of order from $1/2$ to $1$). However, usually the exponents under these two measures behave differently and they need distinct methods to solve. So, it is an interesting open problem to find out the strong converse exponents under the trace distance, and especially, to identify the correct R\'enyi divergence in this case.

The duality relations established in~\cite{MDSFT2013on,Beigi2013sandwiched,TBH2014relating, HayashiTomamichel2016correlation} tell that, for $\rho_{ABC}\in\mc{S}(ABC)$ a pure state, $\tau_A\in\mc{P}(A)$ such that $\supp(\rho_A)\subseteq\supp(\tau_A)$, $\alpha\geq\frac{1}{2}$ and $\beta>0$, we have
\begin{align}
\min_{\sigma_B\in\mc{S}(B)}D^*_\alpha(\rho_{AB}\|\tau_A\ox\sigma_B)
&=-\min_{\sigma_C\in\mc{S}(C)}D^*_{\frac{\alpha}{2\alpha-1}}(\rho_{AC}\|\tau_A^{-1}\ox\sigma_C), \label{eq:dual-1} \\
\min_{\sigma_B\in\mc{S}(B)}D_\beta(\rho_{AB}\|\tau_A\ox\sigma_B)
&=-D^*_{\beta^{-1}}(\rho_{AC}\|\tau_A^{-1}\ox\rho_C).
\label{eq:dual-2}
\end{align}
For properly chosen $\tau_A$, the quantities in Eq.~\eqref{eq:dual-1} and Eq.~\eqref{eq:dual-2} reduce to certain kinds of R\'enyi conditional entropy or R\'enyi mutual information. These two identities relate one R\'enyi divergence of order less than $1$ to another R\'enyi divergence of certain order greater than $1$. It is conceivable that their operational interpretations may be similarly related. Indeed, recently Renes~\cite{Renes2022achievable} has employed Eq.~\eqref{eq:dual-2} in the derivation of the error exponents of data compression with quantum side information and of communication over symmetric CQ channels. We think that more findings in this direction can be expected.

\bigskip
\appendix
\section{Auxiliary Lemmas}
\begin{lemma}
\label{lem:appen1}
Let $\rho, \sigma\in\mc{S}(\mc{H})$, and let $\mc{H}=\bigoplus_{i\in \mc{I}}\mc{H}_i$ decompose into a set of mutually orthogonal subspaces $\{\mc{H}_i\}_{i\in \mc{I}}$. Suppose that $\sigma=\sum_{i \in \mc{I}} \sigma_i$ with $\supp(\sigma_i)\subseteq \mc{H}_i$. Then
\begin{equation}
F\big(\sum_{i \in \mc{I}} \Pi_i \rho \Pi_i, \sigma\big) \leq \sqrt{|\mc{I}|} F(\rho, \sigma),
\end{equation}
where $\Pi_i$ is the projection onto $\mc{H}_i$.
\end{lemma}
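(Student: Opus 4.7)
The plan is to reduce the block-diagonal fidelity to a sum of block-wise fidelities and then use Uhlmann's theorem twice, with an auxiliary register of dimension $|\mc{I}|$ to absorb the factor of $\sqrt{|\mc{I}|}$.

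First, I would exploit the block-diagonal structure. Since $\sum_i \Pi_i\rho\Pi_i$ and $\sigma = \sum_i \sigma_i$ are both block-diagonal with respect to $\mc{H}=\bigoplus_i \mc{H}_i$, the product $\sqrt{\sum_i \Pi_i\rho\Pi_i}\,\sqrt{\sigma} = \sum_i \sqrt{\Pi_i\rho\Pi_i}\sqrt{\sigma_i}$ is block-diagonal too, so
\[
F\Big(\sum_i \Pi_i\rho\Pi_i,\sigma\Big) \;=\; \sum_i \big\|\sqrt{\Pi_i\rho\Pi_i}\sqrt{\sigma_i}\big\|_1 \;=\; \sum_i F(\Pi_i\rho\Pi_i,\sigma_i).
\]

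Next, fix a purification $|\psi\rangle \in \mc{H}\ox\mc{H}'$ of $\rho$. Then $V_i|\psi\rangle$, with $V_i := \Pi_i \ox \1_{\mc{H}'}$, purifies $\Pi_i\rho\Pi_i$. By Uhlmann's theorem, for each $i$ there is a purification $|\phi_i\rangle \in \mc{H}\ox\mc{H}'$ of $\sigma_i$ with $F(\Pi_i\rho\Pi_i,\sigma_i) = |\langle V_i\psi|\phi_i\rangle|$. Because $\sigma_i$ is supported on $\mc{H}_i$, one checks that $|\phi_i\rangle \in \mc{H}_i\ox\mc{H}'$, hence $V_i|\phi_i\rangle=|\phi_i\rangle$ and so $\langle V_i\psi|\phi_i\rangle = \langle\psi|\phi_i\rangle$. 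After absorbing phases into the $|\phi_i\rangle$ we may take $\langle\psi|\phi_i\rangle = F(\Pi_i\rho\Pi_i,\sigma_i) \geq 0$, and the $|\phi_i\rangle$ are pairwise orthogonal since they live in orthogonal subspaces.

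The key step is then to package the $|\phi_i\rangle$ into a single purification of $\sigma$ on an enlarged system. Introduce an ancilla $\mc{K}$ of dimension $|\mc{I}|$ with orthonormal basis $\{|i\rangle_\mc{K}\}$ and set
\[
|\tilde\Phi\rangle \;:=\; \sum_i |\phi_i\rangle\ox|i\rangle_\mc{K}, \qquad |\tilde\psi\rangle \;:=\; |\psi\rangle\ox|\chi\rangle_\mc{K}, \quad |\chi\rangle := \tfrac{1}{\sqrt{|\mc{I}|}}\sum_i |i\rangle.
\]
Then $|\tilde\Phi\rangle$ is a unit vector purifying $\sigma$ (the orthogonality of the $|i\rangle_\mc{K}$ kills the off-block cross terms), while $|\tilde\psi\rangle$ purifies $\rho$. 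A direct computation gives $\langle\tilde\psi|\tilde\Phi\rangle = \frac{1}{\sqrt{|\mc{I}|}} \sum_i F(\Pi_i\rho\Pi_i,\sigma_i) = \frac{1}{\sqrt{|\mc{I}|}} F(\sum_i\Pi_i\rho\Pi_i,\sigma)$, and Uhlmann's theorem applied on $\mc{H}\ox\mc{H}'\ox\mc{K}$ yields $F(\rho,\sigma) \geq \langle\tilde\psi|\tilde\Phi\rangle$, which is exactly the desired inequality.

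The only slightly delicate point is verifying that each $|\phi_i\rangle$ indeed lies in $\mc{H}_i\ox\mc{H}'$ (so that the $|\phi_i\rangle$ are mutually orthogonal and the cross terms vanish when computing $\tr_{\mc{H}'\mc{K}}|\tilde\Phi\rangle\langle\tilde\Phi|$); this follows from the fact that if $\tr_{\mc{H}'}|\phi_i\rangle\langle\phi_i|$ has support in $\mc{H}_i$, then the Schmidt decomposition forces $|\phi_i\rangle \in \mc{H}_i\ox\mc{H}'$. The rest is routine bookkeeping.
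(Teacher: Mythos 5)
Your proof is correct and takes a genuinely different route from the paper's. Both proofs open with the same block-diagonal reduction $F\big(\sum_i\Pi_i\rho\Pi_i,\sigma\big)=\sum_i\big\|\sqrt{\Pi_i\rho\Pi_i}\sqrt{\sigma_i}\big\|_1$, but the paper then proceeds purely algebraically: noting $\big\|\sqrt{\Pi_i\rho\Pi_i}\sqrt{\sigma_i}\big\|_1=\big\|\sqrt{\rho}\sqrt{\sigma_i}\big\|_1=\tr\sqrt{\sqrt{\rho}\sigma_i\sqrt{\rho}}$ (since $\Pi_i\sqrt{\sigma_i}=\sqrt{\sigma_i}$), it rewrites the sum as $|\mc{I}|\tr\frac{1}{|\mc{I}|}\sum_i\sqrt{\sqrt{\rho}\sigma_i\sqrt{\rho}}$ and invokes operator concavity of $x\mapsto\sqrt{x}$ to move the average inside the square root, landing directly on $\sqrt{|\mc{I}|}\,F(\rho,\sigma)$. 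Your argument is instead a purification construction: you pick block-optimal Uhlmann purifications $\ket{\phi_i}$ against a fixed purification $\ket{\psi}$ of $\rho$, package them as $\ket{\tilde\Phi}=\sum_i\ket{\phi_i}\ox\ket{i}_{\mc K}$ (a genuine purification of $\sigma$ on the enlarged space), and test against $\ket{\psi}\ox\ket{\chi}$ with $\ket{\chi}$ uniform; the dimension-$|\mc{I}|$ ancilla and the uniform state absorb the $\sqrt{|\mc{I}|}$ loss, and the final inequality is just the Uhlmann bound $F(\rho,\sigma)\geq|\langle\tilde\psi|\tilde\Phi\rangle|$. So the paper's key input is operator concavity of $\sqrt{\cdot}$, while yours is Uhlmann's theorem for subnormalized operators; the paper's version is a touch shorter and stays inside trace-norm calculus, whereas yours is more operational and makes the origin of the $\sqrt{|\mc{I}|}$ factor (the overlap price of not knowing which block) transparent. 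One minor notational caveat: you use $F$ for the plain quantity $\|\sqrt{\cdot}\sqrt{\cdot}\|_1$ on the subnormalized block operators, while the paper's $F$ on $\mc{S}_\leq$ carries the extra $\sqrt{(1-\tr\cdot)(1-\tr\cdot)}$ term; for the block-wise arguments the plain quantity is exactly what you want and what Uhlmann's theorem gives, so the argument goes through, but it is worth stating explicitly that you are working with the generalized (unnormalized) fidelity $\bar F$.
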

\begin{proof}
Since the subspaces $\{\mc{H}_i\}_{i\in \mc{I}}$ are mutually orthogonal, by direct calculation we have
\begin{align}
F\big(\sum_{i\in \mc{I}} \Pi_i \rho \Pi_i, \sigma\big)
&=\sum_{i\in \mc{I}} \left\|\sqrt{\Pi_i \rho \Pi_i} \sqrt{\sigma_i}\right\|_1
 =\sum_{i\in \mc{I}} \left\|\sqrt{\rho} \sqrt{\sigma_i}\right\|_1 \nonumber
 =|\mc{I}|\tr\sum_{i\in \mc{I}}\frac{1}{|\mc{I}|}\sqrt{\sqrt{\rho}\sigma_i\sqrt{\rho}} \\
&\leq |\mc{I}|\tr\sqrt{\sqrt{\rho}\big(\sum_{i\in \mc{I}}\frac{1}{|\mc{I}|}\sigma_i\big)\sqrt{\rho}}
 =\sqrt{|\mc{I}|} F(\rho, \sigma),
\end{align}
where for the inequality we use the operator concavity of the function $f(x)=\sqrt{x}$.
\end{proof}

\begin{lemma}
\label{lem:fidelity-re}
Let $\rho, \sigma, \tau \in\mc{S}(\mc{H})$ be any quantum states. Then we have
\begin{equation}\label{eq:fid-re}
-\log F^2(\rho,\sigma) \leq D(\tau\|\rho) + D(\tau\|\sigma).
\end{equation}
\end{lemma}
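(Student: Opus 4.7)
The plan is to identify the left-hand side of Eq.~\eqref{eq:fid-re} with the sandwiched R\'enyi divergence of order $1/2$, dominate it by the log-Euclidean R\'enyi divergence of the same order, and then apply the variational representation of Proposition~\ref{prop:mainpro}~(\romannumeral3).

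First I would compute, from Definition~\ref{definition:sand} at $\alpha=1/2$,
\[
Q_{1/2}^*(\rho\|\sigma)=\tr\big(\sigma^{1/2}\rho\,\sigma^{1/2}\big)^{1/2}=\big\|\sqrt{\rho}\sqrt{\sigma}\big\|_1=F(\rho,\sigma),
\]
so that $D_{1/2}^*(\rho\|\sigma)=-2\log F(\rho,\sigma)=-\log F^2(\rho,\sigma)$. Next, the Golden-Thompson inequality applied to $\tfrac{1}{2}\log\rho$ and $\tfrac{1}{2}\log\sigma$ yields
\[
Q_{1/2}^{\flat}(\rho\|\sigma)=\tr 2^{\frac{1}{2}\log\rho+\frac{1}{2}\log\sigma}\leq \tr\big(\sqrt{\rho}\sqrt{\sigma}\big)\leq \big\|\sqrt{\rho}\sqrt{\sigma}\big\|_1=Q_{1/2}^*(\rho\|\sigma).
\]
Since $1/(\alpha-1)=-2<0$ at $\alpha=1/2$, taking logarithms and multiplying by this negative factor flips the inequality and gives the divergence ordering $D_{1/2}^*(\rho\|\sigma)\leq D_{1/2}^{\flat}(\rho\|\sigma)$.

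I would then specialise Proposition~\ref{prop:mainpro}~(\romannumeral3) at $\alpha=1/2$, where $\alpha/(\alpha-1)=-1$, to obtain the variational formula
\[
D_{1/2}^{\flat}(\rho\|\sigma)=\min_{\tau'\in\mc{S}(\mc{H})}\big\{D(\tau'\|\rho)+D(\tau'\|\sigma)\big\}.
\]
Chaining the three displays and then replacing the minimum by the particular state $\tau$ appearing in the statement produces Eq.~\eqref{eq:fid-re}.

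The only thing to worry about is support conditions, and they are benign. If $\supp(\tau)\nsubseteq\supp(\rho)\cap\supp(\sigma)$ the right-hand side is $+\infty$ and the inequality is vacuous; otherwise $\supp(\rho)\cap\supp(\sigma)\neq\{0\}$, which makes $F(\rho,\sigma)>0$ and both $D_{1/2}^*(\rho\|\sigma)$ and $D_{1/2}^{\flat}(\rho\|\sigma)$ finite, so every step above is well defined. There is no real obstacle here, because the heavy lifting---the variational formula of Proposition~\ref{prop:mainpro}~(\romannumeral3) and the Golden-Thompson comparison at $\alpha=1/2$---are already standard tools in the framework of the paper.
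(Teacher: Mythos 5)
Your proof is correct, and it takes a genuinely different route from the paper's. The paper reduces to the classical setting: it invokes the Fuchs--Caves theorem to realize the fidelity via an optimal quantum measurement, passes the relative entropies through the data-processing inequality, and then verifies the classical inequality $D(t\|p)+D(t\|q)+2\log\sum_x\sqrt{p_xq_x}=2D(t\|r)\geq 0$ directly with $r_x = \sqrt{p_xq_x}/\sum_x\sqrt{p_xq_x}$. Your argument stays quantum throughout: you identify the left-hand side as $D_{1/2}^*(\rho\|\sigma)$, dominate it by $D_{1/2}^\flat(\rho\|\sigma)$ via Golden--Thompson and $\tr A \leq \|A\|_1$, and then invoke the variational formula of Proposition~\ref{prop:mainpro}~(iii). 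Conceptually your route is cleaner and more revealing---it exhibits the lemma as precisely the statement $D_{1/2}^*\leq D_{1/2}^\flat$ composed with the identity $D_{1/2}^\flat(\rho\|\sigma)=\min_\tau\{D(\tau\|\rho)+D(\tau\|\sigma)\}$, which fits naturally into the paper's framework where $D^\flat$ serves as the intermediate quantity. The trade-off is that you lean on a fairly heavy result (the Mosonyi--Ogawa variational representation) to establish something whose elementary proof is short, whereas the paper's argument is self-contained modulo Fuchs--Caves; in fact the paper's explicit classical computation is exactly the classical ($p,q$ commuting) case of the variational formula at $\alpha=1/2$, worked out by hand. Your treatment of the support conditions is also correct: when $\supp(\tau)\subseteq\supp(\rho)\cap\supp(\sigma)$ and this intersection is nontrivial, all three divergences in your chain are finite, and otherwise the right-hand side is $+\infty$.
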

\begin{proof}
If $\supp(\tau)\nsubseteq \supp(\rho)\cap\supp(\sigma)$ (this includes the case that $\supp(\rho)\cap\supp(\sigma)$=\{0\}), then the right hand side of Eq.~\eqref{eq:fid-re} is $+\infty$ and the statement holds trivially. So, in the following, we assume that $\supp(\tau)\subseteq \supp(\rho)\cap\supp(\sigma)$.

We make use of the fact that the fidelity can be achieved by quantum measurement~\cite{FuchsCaves1995mathematical}. Specifically, there exists a quantum measurement
$\{M_x\}_x$ such that
\beq
F(\rho,\sigma)=\sum_x\sqrt{p_xq_x},
\eeq
where $p_x=\tr\rho M_x$ and $q_x=\tr\sigma M_x$. Let $t_x=\tr\tau M_x$ for all $x$. Then the data processing inequality of the relative entropy gives that
\begin{align}
D(t\|p)\leq D(\tau\|\rho), \\
D(t\|q)\leq D(\tau\|\sigma),
\end{align}
where $t$ is a probability vector with elements $t_x$, and $p$ and $q$ are defined similarly.
So, we have
\begin{align}
     &D(\tau\|\rho)+D(\tau\|\sigma)+\log F^2(\rho, \sigma) \nb\\
\geq &D(t\|p)+D(t\|q)+2\log \sum_x\sqrt{p_xq_x} \nb\\
  =  &2\sum_x t_x\log\frac{t_x}{\sqrt{p_xq_x}} + 2\log \sum_x\sqrt{p_xq_x} \nb\\
  =  &2\sum_x t_x\Big(\log t_x - \log\frac{\sqrt{p_xq_x}}{\sum_x\sqrt{p_xq_x}}\Big) \nb\\
  =  &2D(t\| r) \nb\\
\geq &0, \label{eq:fdeq}
\end{align}
where $r$ is a probability vector with elements $r_x=\frac{\sqrt{p_xq_x}}{\sum_x\sqrt{p_xq_x}}$.
Eq.~\eqref{eq:fdeq} lets us complete the proof.
\end{proof}

\begin{lemma}
\label{lem:srrI}
Let $\rho_{RA}\in\mc{S}(RA)$. Then
\begin{enumerate}[(i)]
  \item the strong converse exponent $E_{\rm{sc}}^{\rm{dec}}(\rho_{RA},r)$ given in Theorem~\ref{theorem:maindec} is strictly positive if and only if $r<\frac{1}{2}I(R:A)_\rho$,
  \item we have $I_{\alpha}^{*, \rm{reg}}(R:A)_\rho\rar I(R:A)_\rho$ when $\alpha\nearrow1$, and
  \item the following equality holds:
  \begin{equation}\label{eq:srrI-2}
 \sup_{\frac{1}{2}\leq\alpha< 1}\frac{1-\alpha}{\alpha}\big\{I_{\alpha}^{*,\rm{reg}}(R:A)_\rho-2r\big\}
=\sup_{\frac{1}{2}\leq\alpha\leq1}\frac{1-\alpha}{\alpha}\big\{I_{\alpha}^{*,\rm{reg}}(R:A)_\rho-2r\big\}.
\end{equation}
\end{enumerate}
\end{lemma}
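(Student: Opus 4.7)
The strategy is to establish the three parts sequentially, exploiting the operational identity given by Theorem~\ref{theorem:maindec}. Part~(i) comes from the existing asymptotic theory of decoupling; Part~(ii) is then extracted from Part~(i) via a backward-chain argument that avoids any direct additivity claim for $I_\alpha^*$; Part~(iii) is a continuity statement at $\alpha=1$.

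For Part~(i), I would argue both directions operationally. For the ``only if'' direction, I would take the contrapositive: assuming $r\geq \frac{1}{2}I(R:A)_\rho$, the ADHW mother protocol~\cite{ADHW2009mother} supplies a sequence of (standard) decoupling schemes $\{\mc{D}_n\}$ with $\mathfrak{P}^{\rm dec}(\rho_{RA}^{\ox n},\mc{D}_n)\rar 1$, whence $\limsup_n \frac{-1}{n}\log\mathfrak{P}^{\rm dec}=0$ and by Eq.~\eqref{eq:defsce} the strong converse exponent vanishes. For the ``if'' direction, assuming $r<\frac{1}{2}I(R:A)_\rho$, the one-shot smooth-entropy converse of~\cite{BCR2011the} combined with the quantum asymptotic equipartition property~\cite{TCR2009fully,Tomamichel2015quantum} (equivalently the R\'enyi-entropy converse bounds of~\cite{Sharma2014a,LWD2016strong}) yields exponentially fast decay of $\mathfrak{P}^{\rm dec}$, so $E_{\rm sc}^{\rm dec}(\rho_{RA},r)>0$.

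For Part~(ii), the upper bound $I_\alpha^{*,\rm reg}(R:A)_\rho \leq I(R:A)_\rho$ for all $\alpha\in[\frac{1}{2},1)$ follows from Proposition~\ref{prop:mainpro}(i), which gives $D_\alpha^*\leq D_1^*=D$ and hence $I_\alpha^*\leq I$ at every finite $n$, passing to the regularized limit. For the matching lower bound, I would fix $\delta>0$, choose $r=\frac{1}{2}I(R:A)_\rho-\delta$, and use Part~(i) to get $E_{\rm sc}^{\rm dec}(\rho_{RA},r)>0$. Theorem~\ref{theorem:maindec} then forces at least one $\alpha\in[\frac{1}{2},1)$ (note that $\alpha=1$ contributes $0$) to satisfy
\[
\frac{1-\alpha}{\alpha}\bigl\{I_\alpha^{*,\rm reg}(R:A)_\rho-2r\bigr\}>0,
\]
so $I_\alpha^{*,\rm reg}(R:A)_\rho>I(R:A)_\rho-2\delta$ for some such $\alpha$. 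Since Proposition~\ref{prop:mainpro}(i) passes to the regularization and hence $\alpha\mapsto I_\alpha^{*,\rm reg}(R:A)_\rho$ is monotonically non-decreasing, the one-sided limit as $\alpha\nearrow 1$ exists and is at least $I(R:A)_\rho-2\delta$; combining with the upper bound and letting $\delta\rar 0$ completes Part~(ii). For Part~(iii), the term at $\alpha=1$ equals $0$, and sandwiching $\frac{1-\alpha}{\alpha}\{I_\alpha^{*,\rm reg}(R:A)_\rho-2r\}$ between $\frac{1-\alpha}{\alpha}(-2r)$ and $\frac{1-\alpha}{\alpha}\{I(R:A)_\rho-2r\}$ shows it tends to $0$ as $\alpha\nearrow 1$ (using Part~(ii) only to confirm the upper sandwich is finite). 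Hence the supremum over $[\frac{1}{2},1)$ is automatically $\geq 0$, and appending the endpoint $\alpha=1$ adds no new value.

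\textbf{Main obstacle.} The only substantive step is the lower-bound half of Part~(ii). A direct analytic route would require showing $\frac{1}{n}I_\alpha^*(R^n:A^n)_{\rho^{\ox n}}\rar I(R:A)_\rho$ uniformly as $\alpha\nearrow 1$, but this is exactly the kind of statement one would expect to need additivity of $I_\alpha^*$ for $\alpha\in(\frac{1}{2},1)$, which is still open (as the introduction stresses, the methods of~\cite{TomamichelHayashi2017operational,ChengGao2023tight} do not transfer). The operational detour through Part~(i) and Theorem~\ref{theorem:maindec} is what makes the argument go through despite this gap; this is the ``indirect'' feature highlighted in the paper's discussion.
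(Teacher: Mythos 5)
Parts (ii) and (iii) of your proposal reproduce the paper's argument almost verbatim, and you have correctly identified the key insight — that Part (ii) cannot be proved directly because additivity of $I_\alpha^*$ for $\alpha\in(\frac12,1)$ is open, and the ``operational detour'' through Part (i) and Theorem~\ref{theorem:maindec} is exactly what circumvents this. That structural observation is the heart of the lemma and you got it right.

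Part (i) is where you take a different route from the paper, and where there is a genuine gap in the ``if'' direction. You cite~\cite{BCR2011the,Sharma2014a,LWD2016strong} as directly yielding exponential decay of $\mathfrak{P}^{\rm dec}$, but these converse bounds are stated for the \emph{alternative} performance measure in which the ideal state on $R$ is fixed to $\rho_R$ (the quantity the paper denotes $\widetilde{\mathfrak{P}}^{\rm dec}$ in Eq.~\eqref{eq:defpalt}), not for the paper's $\mathfrak{P}^{\rm dec}$, which additionally optimizes over $\omega_R$. The paper stresses in Section~\ref{sec:discussion} that these two definitions yield different strong converse exponents, and the whole point of setting up the proof through the auxiliary quantity $\widetilde{E}_{\rm sc}^{\rm dec}$ is to respect that distinction. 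The missing bridge is the inequality of~\cite[Lemma 16]{Sharma2014a},
\begin{equation*}
F(\rho_{RA}, \rho_R \ox \sigma_A) \geq F^2(\rho_{RA}, \omega_R \ox \sigma_A),
\end{equation*}
which gives $\mathfrak{P}^{\rm dec}\leq\big(\widetilde{\mathfrak{P}}^{\rm dec}\big)^{1/2}$ and hence $E_{\rm sc}^{\rm dec}\geq\frac12\widetilde{E}_{\rm sc}^{\rm dec}$. Without invoking this (or an equivalent) step, the claim that the literature bounds control $\mathfrak{P}^{\rm dec}$ is unjustified. Since you only need the qualitative positivity statement, the factor of $\frac12$ is harmless, but the bridge has to appear. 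Concretely, the paper combines the third bound of~\cite[Theorem 3.8]{LWD2016strong} (giving $\widetilde{E}_{\rm sc}^{\rm dec}(\rho_{RA},r)\geq\sup_\alpha\frac{1-\alpha}{\alpha}\{\bar{I}^*_\alpha(R:A)_\rho-2r\}$), the Sharma lemma to transfer to $E_{\rm sc}^{\rm dec}$, and the known continuity $\bar{I}^*_\alpha\rar I$ as $\alpha\nearrow 1$ to conclude positivity for $r<\frac12 I(R:A)_\rho$.

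Your ``only if'' direction via ADHW achievability is a valid alternative to the paper's argument (the paper instead deduces it from Theorem~\ref{theorem:maindec} together with the monotonicity chain $I_\alpha^{*,\rm reg}\leq I_\alpha^*\leq I$). It is a bit more self-contained but costs you a minor technicality at the boundary $r=\frac12 I(R:A)_\rho$: ADHW achieves rates strictly above $\frac12 I(R:A)_\rho$, so to produce a sequence with $\limsup_n\frac1n\log|\tilde A_n|\leq\frac12 I(R:A)_\rho$ you need a diagonalization across shrinking $\epsilon$. The paper's route avoids this edge case entirely, which is one reason to prefer it once Theorem~\ref{theorem:maindec} is already in hand.
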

\begin{proof}
For the state $\rho_{RA}$ and a decoupling scheme $\mc{D}=(\sigma_{A'}, U:\mc{H}_{AA'} \rightarrow \mc{H}_{\bar{A}\tilde{A}})$, We consider another definition of the performance:
\begin{equation}\label{eq:defpalt}
\widetilde{\mathfrak{P}}^{\rm dec}(\rho_{RA},\mc{D})
:=\max_{\omega_{\bar{A}}\in \mc{S}(\bar{A})}
F^2(\tr_{\tilde{A}}\big[U(\rho_{RA} \ox \sigma_{A'})U^\dg  \big], \rho_R \ox \omega_{\bar{A}}).
\end{equation}
Using this performance measure, we define the strong converse exponent in the asymptotic situation as
\begin{equation} \label{eq:defscealt}
\widetilde{E}_{\rm{sc}}^{\rm{dec}}(\rho_{RA},r):=
\inf \left\{\limsup_{n\rar\infty} \frac{-1}{n}\log \widetilde{\mathfrak{P}}^{\rm dec}(\rho_{RA}^{\ox n},\mc{D}_n)~\Big|~\limsup_{n\rar\infty}\frac{1}{n} \log|\tilde{A}_n| \leq r\right\}.
\end{equation}
This is similar to the definition of $E_{\rm{sc}}^{\rm{dec}}(\rho_{RA},r)$ in Eq.~\eqref{eq:defsce}. In~\cite[Theorem 3.8]{LWD2016strong}, three lower bounds were provided for the strong converse exponent of quantum state splitting, which can be translated to quantum information decoupling with the new definitions of Eq.~\eqref{eq:defpalt} and Eq.~\eqref{eq:defscealt}. This translation was shown in~\cite[Proposition 11]{LiYao2021reliability} between quantum state merging and quantum information decoupling. Because quantum state splitting and quantum state merging are inverse processes to each other, the same argument works for the translation between quantum state splitting and quantum information decoupling. In this way, the third bound in~\cite[Theorem 3.8]{LWD2016strong} translates to
\begin{equation}\label{eq:defscealt1}
\widetilde{E}_{\rm{sc}}^{\rm{dec}}(\rho_{RA},r)
\geq\sup_{\frac{1}{2}< \alpha < 1} \frac{1-\alpha}{\alpha}
\big\{\bar{I}^*_{\alpha}(R:A)_{\rho}-2r\big\}.
\end{equation}
We point out that we do not know whether the converse direction of Eq.~\eqref{eq:defscealt1} is correct.

It has been proved in~\cite[Lemma 16]{Sharma2014a} that for any $\rho_{RA} \in \mc{S}(RA)$, $\sigma_A \in \mc{S}(A)$ and $\omega_{R} \in \mc{S}(R)$, we have
\begin{equation}
\label{eq:defscealt2}
F(\rho_{RA}, \rho_R \ox \sigma_A) \geq F^2(\rho_{RA}, \omega_R \ox \sigma_A).
\end{equation}
By definition, Eq.~\eqref{eq:defscealt2} implies that
\begin{equation}
\label{eq:defscealt3}
E_{\rm{sc}}^{\rm{dec}}(\rho_{RA},r)
\geq \frac{1}{2}\widetilde{E}_{\rm{sc}}^{\rm{dec}}(\rho_{RA},r)
\geq \frac{1}{2} \sup_{\frac{1}{2}< \alpha < 1} \frac{1-\alpha}{\alpha}
     \big\{\bar{I}^*_{\alpha}(R:A)_{\rho}-2r\big\},
\end{equation}
where the second inequality is by Eq.~\eqref{eq:defscealt1}. Since $\alpha\mapsto\bar{I}^*_{\alpha}(R:A)_{\rho}$ is continuous, and converges to $I(R:A)_\rho$ as $\alpha$ goes to $1$~\cite{HayashiTomamichel2016correlation},  Eq.~(\ref{eq:defscealt3}) implies that when $r<\frac{1}{2}I(R:A)_\rho$, we have $E_{\rm{sc}}^{\rm{dec}}(\rho_{RA},r)>0$.
Conversely, if $E_{\rm{sc}}^{\rm{dec}}(\rho_{RA},r)>0$, Then due to Theorem~\ref{theorem:maindec}, there exists an $\alpha \in [\frac{1}{2},1)$  such that
\[
r<\frac{1}{2}I_{\alpha}^{*,\rm{reg}}(R:A)_\rho
\leq \frac{1}{2}I_{\alpha}^*(R:A)_\rho
\leq\frac{1}{2}I(R:A)_\rho.
\]
This proves the first statement.

The second statement is a direct result of Theorem~\ref{theorem:maindec} and the first statement. Here note that by definition, it is not difficult to show that the function $\alpha\mapsto I_{\alpha}^*(R^n:A^n)_{\rho^{\ox n}}$ is monotonically increasing for any integer $n$, which further implies that
\beq
\alpha\mapsto I_{\alpha}^{*, \rm{reg}}(R:A)_\rho
\equiv \lim_{n\rar\infty} \frac{1}{n}I_{\alpha}^*(R^n:A^n)_{\rho^{\ox n}}
\eeq
is monotonically increasing, on $[\frac{1}{2},1)$.

The third statement is easy. Since $\alpha\mapsto I_{\alpha}^{*, \rm{reg}}(R:A)_\rho$ is monotonically increasing, including $\alpha=\frac{1}{2}$ in the optimization will not help. On the other hand, the factor $\frac{1-\alpha}{\alpha}$ in the expression is $0$ for both $\alpha=1$ and $\alpha\nearrow1$, this ensures that including $\alpha=1$ in the optimization does not make any difference.
\end{proof}

\medskip
The following lemma is essentially from~\cite{ADHW2009mother}. However, in~\cite{ADHW2009mother} there is only a one-shot decoupling theorem (Theorem 4.2), together with asymptotic results on related quantum tasks. We extract the following statement from~\cite{ADHW2009mother}, and provide a complete proof for reader's convenience.
\begin{lemma}[decoupling theorem~\cite{ADHW2009mother}]
\label{lem:decoupling}
Let $\rho_{RA}\in\mc{S}(RA)$ be a quantum state, and $\epsilon>0$ be any small  constant. There exists a sequence of isometries $U_n:\mc{H}_{A^n}\rar\mc{H}_{\bar{A}_n\tilde{A}_n}$ with $n\in\mathbb{N}$, such that
\begin{align}
&\lim_{n \rar \infty} \left\|\tr_{\tilde{A}_n}\big[U_{n}\rho_{RA}^{\ox n}
   U_{n}^\dg \big]-\rho_R^{\ox n} \ox \omega_{\bar{A}_n} \right\|_1=0, \label{eq:dec-1} \\
&\lim_{n\rar\infty} \frac{1}{n} \log |\tilde{A}_n| = \frac{1}{2}I(R:A)_\rho+\epsilon, \label{eq:dec-2}
\end{align}
where $\omega_{\bar{A}_n}$ is the reduced state of $\tr_{\tilde{A}_n}[U_{n}\rho_{RA}^{\ox n}U_{n}^\dg ]$ on $\bar{A}_n$.
\end{lemma}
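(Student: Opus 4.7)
The plan is to reproduce the Haar-random isometry argument of~\cite{ADHW2009mother}, supplemented by a typical-subspace smoothing step to extract the correct asymptotic rate $\tfrac{1}{2}I(R:A)_\rho$.

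The starting point is the one-shot decoupling theorem: for any state $\sigma_{RA}$ and any factorization $\mc{H}_A\cong\mc{H}_{\bar A}\ox\mc{H}_{\tilde A}$, averaging over a Haar-distributed unitary $U$ on $\mc{H}_A$ gives an upper bound of the form
\beq\label{eq:plan-oneshot}
\int\!dU\,\bigl\|\tr_{\tilde A}[U\sigma_{RA}U^\dg]-\sigma_R\ox\pi_{\bar A}\bigr\|_1
\leq 2^{-\frac{1}{2}\bigl(H^{\delta}_{\min}(A|R)_\sigma+\log|\tilde A|-\log|\bar A|\bigr)}+2\delta,
\eeq
which follows from a standard second-moment calculation over the unitary group. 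Applied directly to $\sigma_{RA}=\rho_{RA}^{\ox n}$ with $\log|\tilde A_n|=n(\tfrac{1}{2}I(R:A)_\rho+\epsilon)$, the factor $\log|\bar A_n|=n\log|A|-\log|\tilde A_n|$ is too large and~(\ref{eq:plan-oneshot}) fails to decay. To fix this, I would first compress $\rho_{RA}^{\ox n}$ to the $\delta$-typical subspace of $\rho_A^{\ox n}$, producing a subnormalized state $\widetilde\rho_n$ supported on an $A^n$-subspace of effective dimension $2^{nH(A)_\rho+o(n)}$ and satisfying $\|\rho_{RA}^{\ox n}-\widetilde\rho_n\|_1=o(1)$.

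Reapplying~(\ref{eq:plan-oneshot}) inside this reduced Hilbert space, and invoking the quantum AEP $H^{\delta}_{\min}(A^n|R^n)_{\rho^{\ox n}}=nH(A|R)_\rho+o(n)$ together with the identity $I(R:A)_\rho=H(A)_\rho-H(A|R)_\rho$, the exponent becomes
\beq
\tfrac{1}{2}\bigl(nH(A|R)_\rho+2\log|\tilde A_n|-nH(A)_\rho\bigr)+o(n)
=\tfrac{1}{2}\bigl(2\log|\tilde A_n|-nI(R:A)_\rho\bigr)+o(n)=n\epsilon+o(n).
\eeq
Hence the expected trace distance decays exponentially, and Markov's inequality yields a specific sequence of isometries $U_n:\mc{H}_{A^n}\rar\mc{H}_{\bar A_n\tilde A_n}$ achieving~(\ref{eq:dec-1}) after undoing the smoothing via the triangle inequality. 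The rate condition~(\ref{eq:dec-2}) holds by construction.

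The main obstacle is bookkeeping: one has to match the Haar-average target $\sigma_R\ox\pi_{\bar A}$ in~(\ref{eq:plan-oneshot}) with the product form $\rho_R^{\ox n}\ox\omega_{\bar A_n}$ in~(\ref{eq:dec-1})---this is automatic once $\omega_{\bar A_n}$ is defined as the appropriate reduced state---and to verify that the $A^n$-side compression preserves the $R^n$-marginal up to an $o(1)$ trace-distance error, which follows from the gentle-measurement lemma. These are standard ingredients, and no new ideas are required beyond those already present in~\cite{ADHW2009mother}.
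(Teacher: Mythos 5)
Your proposal is correct and would yield the stated result, but it takes a genuinely different route from the paper's proof. The paper works directly with the second-moment (purity-based) one-shot decoupling bound of~\cite{ADHW2009mother} (their Theorem~IV.2), whose right-hand side involves $\tr[(\psi_{RA})^2]$, $\tr[(\psi_R)^2]$, $\tr[(\psi_A)^2]$ together with $|R||A|/|\tilde A|^2$; to control all of these it must project onto the $\delta$-typical subspaces of $R^n$, $A^n$ \emph{and} the purifying system $B^n$, and then do explicit rank/purity bookkeeping. You instead invoke the smooth-min-entropy formulation of one-shot decoupling (the Dupuis--Berta--Wullschleger--Renner form), combined with the fully quantum AEP $H^{\delta}_{\min}(A^n|R^n)_{\rho^{\ox n}}\approx nH(A|R)_\rho$; with that tool only the $A^n$-side typical projection is needed (to trade $\log|A|^n$ for $nH(A)_\rho$), and the desired exponent $\tfrac{1}{2}(2\log|\tilde A_n|-nI(R:A)_\rho)$ falls out immediately. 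The min-entropy route is cleaner and shorter, but it is a post-\cite{ADHW2009mother} ingredient, so if the goal is to stay faithful to what is ``implicitly contained'' in that reference, the paper's purity-based argument is the more accurate reconstruction. Two technical points you compress should be spelled out: (a) the Haar-random unitary produced by the one-shot bound acts only on the typical subspace of $\mc{H}_{A^n}$, and to make $U_n$ a genuine isometry on all of $\mc{H}_{A^n}$ as the lemma demands one must extend it (as the paper does with a partial isometry $W_n$ for the atypical part plus one extra flag qubit $\tilde A''$)---``undoing the smoothing via the triangle inequality'' handles the trace-distance error but not this domain issue; (b) the target in the one-shot bound is $\sigma_R\ox\pi_{\bar A}$ while the lemma asserts closeness to $\rho_R^{\ox n}\ox\omega_{\bar A_n}$ with $\omega_{\bar A_n}$ the actual reduced state, and while a triangle-inequality argument does bridge these, it is worth a sentence.
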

\begin{proof}
For any bipartite state $\psi_{RA}\in\mc{S}(RA)$, let $\sigma_{R\bar{A}}(U)=\tr_{\tilde{A}}[(\1_R\ox U)\psi_{RA}(\1_R\ox U)^\dg ]$ be the state remaining on $R\bar{A}$ after the unitary transformation $U$ has been applied to $A=\bar{A}\tilde{A}$.
Theorem~IV.2 of the arXiv version of~\cite{ADHW2009mother} states that
\begin{equation}\label{eq:dec-3}
\int_{\mathbb{U}(A)}\big\|\sigma_{R\bar{A}}(U)-\psi_R\ox\sigma_{\bar{A}}(U)\big\|_1^2\di\,U
\leq \frac{|R||A|}{|\tilde{A}|^2}\left\{\tr[(\psi_{RA})^2]+\tr[(\psi_{R})^2]\tr[(\psi_{A})^2]\right\},
\end{equation}
where $\mathbb{U}(A)$ is the set of all unitary transformations on $A$, and the integration is with respect to the Haar measure on $\mathbb{U}(A)$.

In order to apply the above result in the i.i.d.\ case, we need to consider a modified version of the underlying state by locally restricting it to the typical subspaces. Let $\rho_{RAB}$ be the purification of $\rho_{RA}$. We can define projectors $\Pi_R^n$, $\Pi_A^n$, $\Pi_B^n$ onto the $\delta$-typical subspaces of $\mc{H}_{R^n}$, $\mc{H}_{A^n}$ and $\mc{H}_{B^n}$, respectively. Let $\psi_{R^nA^nB^n}$ be the normalized version of
\[
\Pi_R^n\ox \Pi_A^n\ox \Pi_B^n \rho_{RAB}^{\ox n} \Pi_R^n\ox \Pi_A^n\ox \Pi_B^n.
\]
Appendix A and Section 7 and 8 of~\cite{ADHW2009mother} show that, for a suitable definition of the ``$\delta$-typical subspace'', the following properties hold for any subsystem $X=R, A, B$:
\begin{enumerate}[(i)]
  \item $\tr[\rho_X^{\ox n}\Pi_X^n]\rar 1$, as $n\rar\infty$,
  \item $\|\psi_{R^nA^n}-\rho_{RA}^{\ox n}\|_1\rar 0$, as $n\rar\infty$,
  \item $2^{n[H(X)_\rho-\delta]}\leq\rank(\Pi_X^n)\leq 2^{n[H(X)_\rho+\delta]}$, and
  \item $\tr\left[(\psi_{X^n})^2\right]\leq 2^{-n[H(X)_\rho-\delta]}$.
\end{enumerate}

Now, we can apply the one-shot decoupling theorem to the state $\psi_{R^nA^n}=\tr_{B^n}[\psi_{R^nA^nB^n}]$. Denote the above $\delta$-typical subspace of $\mc{H}_{A^n}$ by $\mc{H}_{A^n}^t$, and its orthogonal complement by $\mc{H}_{A^n}^{at}$. Eq.~\eqref{eq:dec-3} implies that, there exists a unitary transformation $U'_n$ on $\mc{H}_{A^n}^t=\mc{H}_{\bar{A}'_n}\ox\mc{H}_{\tilde{A}'_n}$ such that
\begin{align}
    &\left\|\tr_{\tilde{A}'_n}\left[U'_{n}\psi_{R^nA^n}{U'_n}^{\dg }\right]-
     \psi_{R^n}\ox\tr_{\tilde{A}'_n}\left[U'_{n}\psi_{A^n}{U'_n}^{\dg }\right]\right\|_1 \nb\\
\leq&\frac{\rank(\Pi_R^n)^\frac{1}{2}\,\rank(\Pi_A^n)^\frac{1}{2}}{|\tilde{A}'_n|}
     \left\{\tr[(\psi_{R^nA^n})^2]+\tr[(\psi_{R^n})^2]\tr[(\psi_{A^n})^2]\right\}^\frac{1}{2}.
     \label{eq:dec-4}
\end{align}
We choose
\beq\label{eq:dec-5}
|\tilde{A}'_n|=2^{n[\frac{1}{2}I(R:A)_\rho+3\delta]}.
\eeq
To estimate the right hand side of Eq.~\eqref{eq:dec-4}, we employ the above properties (\romannumeral3) and (\romannumeral4), and make use of the relations $\tr[(\psi_{R^nA^n})^2]=\tr[(\psi_{B^n})^2]$ and $H(R)_\rho+H(A)_\rho\geq H(RA)_\rho=H(B)_\rho$. This lets us obtain
\beq\label{eq:dec-6}
\left\|\tr_{\tilde{A}'_n}\left[U'_{n}\psi_{R^nA^n}{U'_n}^{\dg }\right]-
     \psi_{R^n}\ox\tr_{\tilde{A}'_n}\left[U'_{n}\psi_{A^n}{U'_n}^{\dg }\right]\right\|_1
\leq \sqrt{2}\times2^{-n\delta}\, \stackrel{n\rar\infty}{\lrar} 0.
\eeq
Next, we extend the action of $U'_{n}$ to be on the whole space $\mc{H}_{A^n}$ by letting $U'_{n}\mc{H}
_{A^n}^{at}=0$, such that $U'_{n}$ becomes a partial isometry satisfying ${U'_n}^\dg U'_{n}=\Pi_A^n$. After this extension, Eq.~\eqref{eq:dec-6} still holds because the state $\psi_{A^n}$ is supported only on $\mc{H}_{A^n}^t$. Now due to the above property~(\romannumeral2), we employ the contractivity of the trace norm under partial trace, and the triangle inequality, to derive from Eq.~\eqref{eq:dec-6} that
\beq\label{eq:dec-7}
\left\|\tr_{\tilde{A}'_n}\left[U'_{n}\rho_{RA}^{\ox n}{U'_n}^{\dg }\right]-\rho_{R}^{\ox n}\ox
\tr_{\tilde{A}'_n}\left[U'_{n}\rho_{A}^{\ox n}{U'_n}^{\dg }\right]\right\|_1\rar 0.
\eeq

Eqs.~\eqref{eq:dec-5} and \eqref{eq:dec-7} are already quite close to the statement of this lemma, though $U'_{n}$ is not an isometry yet. In the following, we modify $U'_{n}$ to construct an isometry $U_{n}$ that satisfies Eqs.~\eqref{eq:dec-1} and \eqref{eq:dec-2}. We are able to do so because the weight of the state $\rho_{RA}^{\ox n}$ on the atypical subspace $\mc{H}_{A^n}^{at}$ asymptotically disappears. Let the dimension of the system $\bar{A}_n$ be $|\bar{A}_n|=\max\{|\bar{A}'_n|,\lceil\dim(\mc{H}_{A^n}^{at})/|\tilde{A}'_n|\rceil\}$, such that we can fix two arbitrary isometries $V_n:\mc{H}_{\bar{A}'_n}\rar\mc{H}_{\bar{A}_n}$ and $W_n:\mc{H}_{A^n}^{at}\rar\mc{H}_{\bar{A}_n}\ox\mc{H}_{\tilde{A}'_n}$. Extend $W_n$ to act on the whole space $\mc{H}_{A^n}$ by letting $W_n\mc{H}_{A^n}^{t}=0$, so that now $W_n$ is a partial isometry satisfying $W^\dg _nW_n=\1_{A^n}-\Pi_A^n$. We further introduce a qubit system $\tilde{A}''$ and set $\tilde{A}_n:=\tilde{A}'_n\tilde{A}''$. Then the isometry $U_n:\mc{H}_{A^n}\rar\mc{H}_{\bar{A}_n\tilde{A}_n}$ is constructed as
\beq\label{eq:dec-8}
U_n:=\big(V_n\ox\1_{\tilde{A}'_n}\big)U'_n\ox\ket{0}_{\tilde{A}''}+W_n\ox\ket{1}_{\tilde{A}''}.
\eeq
It is easy to check that $U_n^\dg U_n=\1_{A^n}$ and hence $U_n$ is indeed an isometry. Furthermore, due to our pick of $|\tilde{A}'_n|$ in Eq.~\eqref{eq:dec-5} and $\tilde{A}''$ as a qubit system, we have
\beq\label{eq:dec-9}
\lim_{n\rar\infty} \frac{1}{n} \log |\tilde{A}_n|
= \lim_{n\rar\infty} \frac{1}{n} \left(\log |\tilde{A}'_n|+1\right)
=\frac{1}{2}I(R:A)_\rho+3\delta.
\eeq
Setting $\delta=\epsilon/3$ verifies Eq.~\eqref{eq:dec-2}. The remaining thing is to show Eq.~\eqref{eq:dec-1}. By direct calculation, we get that
\begin{align}
&\ \ \ \,\left\|\tr_{\tilde{A}_n}\left[U_{n}\rho_{RA}^{\ox n}U_n^\dg \right]-V_n\tr_{\tilde{A}'_n}
   \left[U'_{n}\rho_{RA}^{\ox n} {U'_n}^\dg \right]V_n^\dg \right\|_1 \nb\\
&=\left\|\tr_{\tilde{A}'_n}\left[W_n\rho_{RA}^{\ox n}W_n^\dg \right]\right\|_1 \nb\\
&=\tr\left[\rho_{A}^{\ox n}(\1_{A^n}-\Pi_A^n)\right] \nb\\
&\rar 0,    \label{eq:dec-10}
\end{align}
where the last line is by property~(\romannumeral1). Eq.~\eqref{eq:dec-7} and Eq.~\eqref{eq:dec-10} togehter imply Eq.~\eqref{eq:dec-1}.
\end{proof}

\medskip
The following lemma was implicitly proved in~\cite{LWD2016strong}, and it first appeared
in~\cite[Lemma~1]{WangWilde2019resource}. It is a generalization of an special case for $\tau=\1$
of~\cite[Lemma~4.5]{DamHayden2002renyi}.
\begin{lemma}[\cite{LWD2016strong,WangWilde2019resource}]
\label{lem:LWDg}
Let $\rho, \sigma \in \mc{S}(\mc{H})$ and $\tau \in \mc{P}(\mc{H})$, and suppose $\supp(\sigma)\not\perp\supp(\tau)$. Fix $\alpha \in (\frac{1}{2}, 1)$ and $\beta\in(1,+\infty)$
such that $\frac{1}{\alpha}+\frac{1}{\beta}=2$. Then
\begin{equation}
\frac{2\alpha}{1-\alpha} \log F(\rho, \sigma) \leq D^*_\beta(\rho\|\tau)-D^*_\alpha(\sigma\|\tau).
\end{equation}
\end{lemma}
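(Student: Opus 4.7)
My plan is to deduce Lemma~\ref{lem:LWDg} from the Hölder inequality for Schatten norms after factorizing $\sqrt{\rho}\sqrt{\sigma}$ in a way that exposes $D^*_\beta(\rho\|\tau)$ and $D^*_\alpha(\sigma\|\tau)$. The key algebraic observation is that the constraint $\tfrac{1}{\alpha}+\tfrac{1}{\beta}=2$ is equivalent to $\tfrac{1}{2\alpha}+\tfrac{1}{2\beta}=1$, so that $p=2\beta\in(2,\infty)$ and $q=2\alpha\in(1,2)$ form a conjugate pair of Schatten exponents, both $\geq 1$. The same constraint forces the exponent identity $\tfrac{1-\beta}{2\beta}+\tfrac{1-\alpha}{2\alpha}=0$, equivalently $\tfrac{\beta-1}{2\beta}=\tfrac{1-\alpha}{2\alpha}$, which I will invoke twice: once to make two fractional powers of $\tau$ cancel on $\supp(\tau)$, and once to combine the two Rényi terms at the end.

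First I would dispose of the degenerate case. If $\supp(\rho)\nsubseteq\supp(\tau)$ then $D^*_\beta(\rho\|\tau)=+\infty$ by Definition~\ref{definition:sand} (since $\beta>1$), so the inequality is vacuous; the hypothesis $\supp(\sigma)\not\perp\supp(\tau)$ similarly ensures $D^*_\alpha(\sigma\|\tau)<+\infty$. So I may assume $\supp(\rho)\subseteq\supp(\tau)$. Let $\Pi_\tau$ denote the projector onto $\supp(\tau)$ and interpret negative fractional powers of $\tau$ via functional calculus on $\supp(\tau)$. Then $\tau^{\frac{1-\beta}{2\beta}}\tau^{\frac{1-\alpha}{2\alpha}}=\Pi_\tau$ and $\sqrt{\rho}=\sqrt{\rho}\,\Pi_\tau$, so I can write
\begin{equation*}
  \sqrt{\rho}\sqrt{\sigma}
  =\sqrt{\rho}\,\Pi_\tau\,\sqrt{\sigma}
  =\bigl(\sqrt{\rho}\,\tau^{\frac{1-\beta}{2\beta}}\bigr)\bigl(\tau^{\frac{1-\alpha}{2\alpha}}\sqrt{\sigma}\bigr).
\end{equation*}

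Next, applying the Schatten Hölder inequality $\|AB\|_1\leq\|A\|_{2\beta}\|B\|_{2\alpha}$ to this factorization, and using $\|X\|_{2\gamma}^{2\gamma}=\tr(X^\dg X)^\gamma$ together with the unitary-invariance identity $\tr(YY^\dg)^\gamma=\tr(Y^\dg Y)^\gamma$, I obtain $\|\sqrt{\rho}\,\tau^{(1-\beta)/(2\beta)}\|_{2\beta}^{2\beta}=Q^*_\beta(\rho\|\tau)$ and $\|\tau^{(1-\alpha)/(2\alpha)}\sqrt{\sigma}\|_{2\alpha}^{2\alpha}=Q^*_\alpha(\sigma\|\tau)$, hence
\begin{equation*}
  F(\rho,\sigma)\leq Q^*_\beta(\rho\|\tau)^{1/(2\beta)}\,Q^*_\alpha(\sigma\|\tau)^{1/(2\alpha)}.
\end{equation*}
Taking logarithms and substituting $\log Q^*_\beta(\rho\|\tau)=(\beta-1)D^*_\beta(\rho\|\tau)$ and $\log Q^*_\alpha(\sigma\|\tau)=-(1-\alpha)D^*_\alpha(\sigma\|\tau)$ gives
\begin{equation*}
  \log F(\rho,\sigma)\leq \tfrac{\beta-1}{2\beta}D^*_\beta(\rho\|\tau)-\tfrac{1-\alpha}{2\alpha}D^*_\alpha(\sigma\|\tau),
\end{equation*}
and the coefficient identity $\tfrac{\beta-1}{2\beta}=\tfrac{1-\alpha}{2\alpha}$ lets me factor out $\tfrac{1-\alpha}{2\alpha}$; multiplying through by $\tfrac{2\alpha}{1-\alpha}>0$ yields the claim. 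The only delicate point I anticipate is the factorization step when $\tau$ is rank-deficient, since the negative powers of $\tau$ only make sense on $\supp(\tau)$; this is settled by the support argument above, or alternatively by regularizing $\tau\mapsto\tau+\varepsilon\1$ and sending $\varepsilon\searrow 0$ at the end. The rest is bookkeeping of exponents.
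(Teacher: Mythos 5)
Your proof is correct, and it is essentially the argument that the paper attributes to~\cite{LWD2016strong,WangWilde2019resource}: the paper does not reprove this lemma itself but explicitly describes it as resulting from ``a delicate application of H\"older's inequality,'' and your factorization $\sqrt{\rho}\sqrt{\sigma}=\bigl(\sqrt{\rho}\,\tau^{(1-\beta)/(2\beta)}\bigr)\bigl(\tau^{(1-\alpha)/(2\alpha)}\sqrt{\sigma}\bigr)$ followed by Schatten--H\"older with the conjugate pair $(2\beta,2\alpha)$ is exactly that. All the exponent bookkeeping checks out ($\tfrac{1}{2\alpha}+\tfrac{1}{2\beta}=1$, $\tfrac{\beta-1}{2\beta}=\tfrac{1-\alpha}{2\alpha}$), the identification of $\|A\|_{2\beta}^{2\beta}$ and $\|B\|_{2\alpha}^{2\alpha}$ with $Q^*_\beta$ and $Q^*_\alpha$ via $\tr(A^\dagger A)^\gamma=\tr(AA^\dagger)^\gamma$ is right, and your handling of the support issues (the degenerate case $\supp(\rho)\nsubseteq\supp(\tau)$ giving $+\infty$ on the right, the generalized inverse on $\supp(\tau)$, and the hypothesis $\supp(\sigma)\not\perp\supp(\tau)$ guaranteeing finiteness of $D^*_\alpha(\sigma\|\tau)$) is careful and adequate.
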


\medskip
The following lemma is from~\cite[Proposition 2.8]{LWD2016strong}, and it can be easily derived from Lemma~\ref{lem:LWDg}.
\begin{lemma}[\cite{LWD2016strong}]
\label{lem:LWD}
Let $\rho_{AB}, \sigma_{AB} \in \mc{S}(AB)$. Fix $\alpha \in (\frac{1}{2}, 1)$ and $\beta\in(1,+\infty)$ such that $\frac{1}{\alpha}+\frac{1}{\beta}=2$. Then
\begin{equation}
\frac{2\alpha}{1-\alpha} \log F(\rho_{AB}, \sigma_{AB})
\leq H_{\alpha}^*(A|B)_\rho-H_{\beta}^*(A|B)_\sigma.
\end{equation}
\end{lemma}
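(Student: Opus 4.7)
The plan is to derive Lemma~\ref{lem:LWD} directly from Lemma~\ref{lem:LWDg} by a short specialize-and-optimize argument. Because the fidelity is symmetric in its two arguments while the right-hand side of Lemma~\ref{lem:LWDg} is not, I would first apply Lemma~\ref{lem:LWDg} with the state roles interchanged, yielding
\begin{equation*}
\frac{2\alpha}{1-\alpha}\log F(\rho_{AB},\sigma_{AB}) \leq D_\beta^*(\sigma_{AB}\|\tau) - D_\alpha^*(\rho_{AB}\|\tau)
\end{equation*}
for any $\tau\in\mc{P}(AB)$ satisfying the required support hypothesis. The swap is essential: it places $\beta$ on $\sigma_{AB}$ and $\alpha$ on $\rho_{AB}$, matching how the two entropies appear on the right of the target inequality.

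Next I would specialize $\tau = \1_A \ox \tau_B^*$, where $\tau_B^*\in\mc{S}(B)$ is a minimizer in the definition
\[
H_\beta^*(A|B)_\sigma = -\min_{\omega_B \in \mc{S}(B)} D_\beta^*(\sigma_{AB}\,\|\,\1_A \ox \omega_B).
\]
Such a minimizer exists by continuity of $D_\beta^*(\sigma_{AB}\|\cdot)$ on the compact set $\mc{S}(B)$ for $\beta>1$. With this choice, the first term on the right equals $-H_\beta^*(A|B)_\sigma$ by construction, while the variational definition of $H_\alpha^*(A|B)_\rho$ as $-\min_{\omega_B}D_\alpha^*(\rho_{AB}\|\1_A \ox \omega_B)$ immediately supplies the pointwise bound $-D_\alpha^*(\rho_{AB}\|\1_A \ox \tau_B^*) \leq H_\alpha^*(A|B)_\rho$. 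Summing these two estimates on the right-hand side of the displayed inequality yields exactly $H_\alpha^*(A|B)_\rho - H_\beta^*(A|B)_\sigma$, which is the claim.

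The only technical wrinkle, and the step I expect to require the most care, is the support condition $\supp(\rho_{AB}) \not\perp \supp(\1_A \ox \tau_B^*)$ demanded by Lemma~\ref{lem:LWDg}: the minimizer $\tau_B^*$ typically inherits support only from $\sigma_B$, so this need not hold automatically. I would handle it by the standard regularization, replacing $\tau_B^*$ with the full-rank state $(1-\epsilon)\tau_B^* + \epsilon \pi_B$ so that the support hypothesis becomes trivial, applying the entire chain of inequalities for each $\epsilon>0$, and then sending $\epsilon\searrow 0$. Continuity of both sandwiched divergences in the second argument, on reference states whose support already covers $\supp(\sigma_B)$ (as $\tau_B^*$ must, since $H_\beta^*(A|B)_\sigma$ is finite), delivers the limit. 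The degenerate case $F(\rho_{AB},\sigma_{AB})=0$ is trivial because the left-hand side is $-\infty$. I do not anticipate any conceptual obstacle beyond this routine support housekeeping; once Lemma~\ref{lem:LWDg} is in hand, the entire argument fits in a few lines.
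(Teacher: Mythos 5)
Your proof is correct and is precisely the derivation the paper intends: it states only that the lemma ``can be easily derived from Lemma~\ref{lem:LWDg}'' without supplying details, and your two-step argument (swap the roles of $\rho$ and $\sigma$ using the symmetry of $F$ so that $\beta$ lands on $\sigma_{AB}$ and $\alpha$ on $\rho_{AB}$; then specialize $\tau=\1_A\ox\tau_B^*$ with $\tau_B^*$ a minimizer for $H_\beta^*(A|B)_\sigma$ and bound the $D_\alpha^*$ term by its infimum) is exactly that derivation. One small economy you could make: you do not in fact need continuity of $D_\alpha^*$ under the $\epsilon$-regularization. For each $\epsilon>0$ you already have the pointwise bound $-D_\alpha^*(\rho_{AB}\|\1_A\ox\tau_\epsilon)\leq H_\alpha^*(A|B)_\rho$ directly from the variational definition, so only the continuity of $\epsilon\mapsto D_\beta^*(\sigma_{AB}\|\1_A\ox\tau_\epsilon)$ at $\epsilon=0$ is required—and that holds because $\supp(\sigma_B)\subseteq\supp(\tau_B^*)$ and $\tau_\epsilon$ commutes with the support projector of $\tau_B^*$. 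This streamlines the ``support housekeeping'' you flagged, but your version is also sound.
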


\begin{lemma}[Sion's minimax theorem~\cite{Sion1958general}]
\label{lem:minimax}
Let $X$ be a compact convex set in a topological vector space $V$ and $Y$ be a convex subset of a vector space $W$.
Let $f : X \times Y \rar \mathbb{R}$ be such that
\begin{enumerate}[(i)]
  \item $f(x,\cdot)$ is quasi-concave and upper semi-continuous on $Y$ for each $x \in X$, and
  \item $f(\cdot, y)$ is quasi-convex and lower semi-continuous on $X$ for each $y \in Y$.
\end{enumerate}
Then, we have
\begin{equation}
\label{eq:minimax}
\inf_{x \in X} \sup_{y \in Y} f(x,y)= \sup_{y \in Y} \inf_{x \in X} f(x,y),
\end{equation}
and the infima in Eq.~(\ref{eq:minimax}) can be replaced by minima.
\end{lemma}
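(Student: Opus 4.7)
The plan is to prove only the nontrivial direction $\inf_{x \in X} \sup_{y \in Y} f(x,y) \leq \sup_{y \in Y} \inf_{x \in X} f(x,y)$; the reverse inequality follows immediately from $f(x,y) \geq \inf_{x'} f(x',y)$ by taking supremum in $y$ and then infimum in $x$. Argue by contradiction: assume strict inequality and choose $c \in \mathbb{R}$ strictly between the two sides. For each $y \in Y$ set
\[
A_y := \{x \in X : f(x,y) \leq c\}.
\]
Hypothesis (ii) makes $A_y$ closed (lower semi-continuity) and convex (quasi-convexity), and the choice of $c$ guarantees $\bigcap_{y \in Y} A_y = \emptyset$ (otherwise any common point would witness $\inf_x \sup_y f(x,y) \leq c$). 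Compactness of $X$ then yields a finite subfamily with empty intersection, so the task reduces to showing that for every finite set $\{y_1,\dots,y_n\} \subseteq Y$ one has $\bigcap_{i=1}^n A_{y_i} \neq \emptyset$, yielding the contradiction.

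I would handle this finite intersection problem by induction on $n$, with the $n=2$ case as the genuine core of the theorem. Along the inductive step, given $y_1,\dots,y_n$, the idea is to apply the two-point case to $y_n$ and a point of the form $y' = (1-t)y_i + ty_j$ produced by the inductive hypothesis applied to $\{y_1,\dots,y_{n-1}\}$, using convexity of $Y$ to stay inside it. This reduces everything to establishing the two-point statement: for any $y_1,y_2 \in Y$ and any $c$ strictly below $\inf_x \max\{f(x,y_1),f(x,y_2)\}$, the set $A_{y_1} \cap A_{y_2}$ is nonempty whenever $c > \sup_y \inf_x f(x,y)$.

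For the two-point case, parametrize the segment $y_t := (1-t)y_1 + t y_2$ for $t\in[0,1]$, and consider $A_{y_t}$. Each is a nonempty (by choice of $c > \sup_y \inf_x f$), closed, convex subset of $X$. The trick is to exploit the quasi-concavity of $f(x,\cdot)$: if $x \in A_{y_{t_1}} \cap A_{y_{t_2}}$ for two distinct parameters $t_1<t_2$, then $f(x,y_t) \leq c$ on the whole interval $[t_1,t_2]$ too. Combining this with quasi-convexity/lower semicontinuity in $x$, one sets up the closed sets
\[
C_t := \{x \in A_{y_1} \cup A_{y_2} : f(x, y_t) \leq c\}
\]
and shows via a connectedness argument on $[0,1]$ that the family $\{C_t\}_{t\in[0,1]}$ has the finite intersection property, hence (by compactness) a common point $x^* \in A_{y_1} \cap A_{y_2}$. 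Finally, the ``infimum can be replaced by minimum'' claim is automatic, since $\sup_y f(\cdot,y)$ is lower semi-continuous on the compact $X$.

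The main obstacle is exactly this two-point base case: the delicate point is that one cannot simply take a convex combination of points in $A_{y_1}$ and $A_{y_2}$, since $f(\cdot,y)$ need only be quasi-convex (sublevel sets convex) rather than convex. The combinatorial/topological argument above, which uses upper semi-continuity and quasi-concavity in $y$ to propagate membership along the segment $[y_1,y_2]$, is what lifts the hypothesis from convex/concave to quasi-convex/quasi-concave; any weakening of either semi-continuity assumption would break the closure step in defining $A_y$ and $C_t$.
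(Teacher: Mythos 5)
The paper does not prove this lemma; Lemma~\ref{lem:minimax} is Sion's minimax theorem, stated in the appendix with a citation to~\cite{Sion1958general} and used as a black box. So there is no ``paper proof'' to compare against, and what matters is whether your sketch of Sion's theorem itself is sound.

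Your high-level scaffolding is the standard one (essentially Komiya's elementary route): derive the easy inequality from $f(x,y)\geq\inf_{x'}f(x',y)$, assume strict inequality and pick an intermediate level $c$, form the sublevel sets $A_y=\{x:f(x,y)\leq c\}$, note they are closed and convex, observe $\bigcap_y A_y=\emptyset$, use compactness to reduce to finitely many $y$'s, and then reduce by induction to a two-point core lemma. That much is right. But the two-point core is exactly where the argument goes wrong, and the error is a sign error in how quasi-concavity is used. You claim that if $x\in A_{y_{t_1}}\cap A_{y_{t_2}}$ then $f(x,y_t)\leq c$ for all $t\in[t_1,t_2]$, and you build the sets $C_t$ and their finite intersection property on this. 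But quasi-concavity of $f(x,\cdot)$ means the \emph{superlevel} sets $\{y:f(x,y)\geq\alpha\}$ are convex, i.e.\ $f(x,y_t)\geq\min\{f(x,y_{t_1}),f(x,y_{t_2})\}$ — a lower bound on the segment, not an upper bound. Propagating an upper bound $\leq c$ along $[y_1,y_2]$ is what quasi-\emph{convexity} in $y$ would give, which is the opposite of the hypothesis. As a consequence the claimed finite intersection property of $\{C_t\}$ does not follow, and the two-point lemma as set up is unproved. (It is also stated in a self-undermining way: the hypothesis $c<\inf_x\max\{f(x,y_1),f(x,y_2)\}$ already says $A_{y_1}\cap A_{y_2}=\emptyset$, so ``$A_{y_1}\cap A_{y_2}$ is nonempty'' cannot hold under that hypothesis; what one actually wants is the contrapositive, namely that emptiness forces $c\leq\sup_y\inf_x f$.)

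The correct use of quasi-concavity in $y$ at this point is the reverse inclusion: if $x\in A_{y_t}$ then $\min\{f(x,y_1),f(x,y_2)\}\leq f(x,y_t)\leq c$, so $A_{y_t}\subseteq A_{y_1}\cup A_{y_2}$. Assuming for contradiction $A_{y_1}\cap A_{y_2}=\emptyset$ and that every $A_{y_t}$ is nonempty, each $A_{y_t}$ is a nonempty convex (hence connected) set contained in the disjoint union $A_{y_1}\sqcup A_{y_2}$, so it lies entirely in one of them. One then shows the sets $I=\{t:A_{y_t}\subseteq A_{y_1}\}$ and $J=\{t:A_{y_t}\subseteq A_{y_2}\}$ are both closed in $[0,1]$ (this step needs the upper semi-continuity in $y$, and in careful write-ups an auxiliary level $c'\in(c,\inf_x\max\{f(x,y_1),f(x,y_2)\})$ to make the closedness argument go through), which contradicts connectedness of $[0,1]$. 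The inductive step to $n$ points also needs more care than ``apply the two-point case to $y_n$ and a segment point'': the standard argument restricts the whole theorem to $X\times\mathrm{conv}\{y_1,\dots,y_{n-1}\}$ and combines it with the two-point lemma. In short, your outline follows the right blueprint, but the crucial propagation step uses quasi-concavity in the wrong direction, so the sketch as written does not constitute a proof.
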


\acknowledgments
The authors would like to thank Mario Berta, Masahito Hayashi, Shunlong Luo and Dong Yang for comments. They further thank Hao-Chung Cheng and Marco Tomamichel for helpful discussions on the additivity/nonadditivity of the sandwiched R\'enyi mutual information of Eq.~\eqref{eq:srmi-de}. They are especially grateful to an anonymous referee for valuable comments and suggestions, which have helped them improve the manuscript a lot. The research of KL was supported by the National Natural Science Foundation of China  (No. 61871156, No. 12031004), and the research of YY was supported by the National Natural Science Foundation of China  (No. 61871156, No. 12071099).


\end{document}